\crefname{section}{\textbf{Section}}{\textbf{Sections}}
\Crefname{section}{\textbf{Section}}{\textbf{Sections}}
\newtheorem{theorem}{Theorem}
\newtheorem{definition}[theorem]{Definition}
\newtheorem{proposition}[theorem]{Proposition}
\newtheorem{lemma}[theorem]{Lemma}
\newtheorem{claim}[theorem]{Claim}
\newtheorem{corollary}[theorem]{Corollary}
\newtheorem{fact}[theorem]{Fact}
\newcommand{\Lap}[1]{\mathop{\mathsf{Lap(#1)}}}
\newcommand{\poly}{\mathop{\mathrm{poly}}}
\newcommand{\R}{\mathbb{R}}
\newcommand{\N}{\mathbb{N}}
\renewcommand{\epsilon}{\varepsilon}
\newcommand{\calP}{\mathcal{P}}
\newcommand{\TD}{\mathrm{TD}}
\newcommand{\diam}{\mathrm{diam}}
\newcommand{\G}{\mathcal{G}}
\newcommand{\TDC}{\mathsf{TDC}}
\mathchardef\mhyphen="2D
\newcommand{\lTDC}{\ell\mhyphen\mathsf{TDC}}	
\newcommand{\B}{\mathsf{B}}
\newcommand{\Bopt}{\mathsf{B}^*}
\newcommand{\vol}{\mathrm{vol}}
\newcommand{\CH}{\mathsf{CH}}
\newcommand{\calS}{\mathcal{S}}
\newcommand{\width}{\mathrm{width}}
\title{Private Approximations of a Convex Hull in Low Dimensions}
\author{
	Yue Gao\thanks{Y.G.~gratefully acknowledges funding from the Natural Sciences and Engineering Research Council of Canada (NSERC) and from the Alberta Machine Intelligence Institute (Amii).}\\
	Dept.~of Computing Science\\
	University of Alberta\\
	Edmonton, AB Canada\\
	\texttt{gao12@ualberta.ca}
	\and
	Or Sheffet\thanks{O.S.~gratefully acknowledges the Natural Sciences and Engineering Research Council of Canada (NSERC) for supporting O.S.~with grant \#2017–06701, which also helped fund Y.G.~when she was advised by O.S.~at the University of Alberta. Much of this work was done when O.S.~was a visitor at the ``Data Privacy: Foundations and Application'' program held in spring 2019 at the Simons Institute for the Theory of Computing, UC Berkeley.}\\
	Faculty of Engineering\\
	Bar-Ilan University\\
	Ramat-Gan, Israel\\
	\texttt{or.sheffet@biu.ac.il}
}
\date{}
\begin{document}

\begin{titlepage}
\maketitle 

\begin{abstract}
	We give the first differentially private algorithms that estimate a variety of geometric features of points in the Euclidean space, such as diameter, width, volume of convex hull, min-bounding box, min-enclosing ball etc. Our work relies heavily on the notion of \emph{Tukey-depth}. Instead of (non-privately) approximating the convex-hull of the given set of points $P$, our algorithms approximate the geometric features of the $\kappa$-Tukey region induced by $P$ (all points of Tukey-depth $\kappa$ or greater). Moreover, our approximations are all bi-criteria: for any geometric feature $\mu$ our $(\alpha,\Delta)$-approximation is a value ``sandwiched'' between $(1-\alpha)\mu(D_P(\kappa))$ and $(1+\alpha)\mu(D_P(\kappa-\Delta))$.
	
	Our work is aimed at producing a \emph{$(\alpha,\Delta)$-kernel of $D_P(\kappa)$}, namely a set $\calS$ such that (after a shift) it holds that $(1-\alpha)D_P(\kappa)\subset \CH(\calS) \subset (1+\alpha)D_P(\kappa-\Delta)$. We show that an analogous notion of a bi-critera approximation of a directional kernel, as originally proposed by~\cite{AgarwalHV04}, \emph{fails} to give a kernel, and so we result to subtler notions of approximations of projections that do yield a kernel. First, we give differentially private algorithms that find $(\alpha,\Delta)$-kernels for a ``fat'' Tukey-region. Then, based on a private approximation of the min-bounding box, we find a transformation that does turn $D_P(\kappa)$ into a ``fat'' region \emph{but only if} its volume is proportional to the volume of $D_P(\kappa-\Delta)$. Lastly, we give a novel private algorithm that finds a depth parameter $\kappa$ for which the volume of $D_P(\kappa)$ is comparable to $D_P(\kappa-\Delta)$. We hope this work leads to the further study of the intersection of differential privacy and computational geometry.
\end{abstract}

\thispagestyle{empty}

\end{titlepage}
%\setcounter{page}{14}
%\appendix
\section{Introduction}
\label{sec:intro}

%Many datasets, of low dimension, contain sensitive data. Analyzing them while maintaining their privacy is vital
With modern day abundance of data, there are numerous datasets that hold the sensitive and personal details of individuals, yet collect only a few features per user. Examples of such \emph{low-dimensional} datasets include locations (represented as points on the $2D$-plane), medical data composed of only a few measurements (e.g.~\cite{Sanuki2017CufflessCB, Weiss2019SmartphoneAS}), or high-dimensional data restricted to a small subset of features (often selected for the purpose of data-visualization). It is therefore up to us to make sure that the analyses of such sensitive datasets do not harm the privacy of their participants. Differentially private algorithms~\cite{DworkMNS06,DworkKMMN06} alleviate such privacy concerns as they guarantee that the presence or absence of any single individual in the dataset has only a limited affect on any outcome.

Often (again, commonly motivated by visualization), understanding the geometric features of such low-dimensional datasets is a key step in their analysis. 
%The rich field of \emph{computational geometry} is centered on the design of algorithms solving such geometric problems. 
Yet, to this day, very little work has been done to establish differentially private algorithms that approximate the data's geometrical features. This should not come as a surprise seeing as most geometric features~--- such as diameter, width,\footnote{The min gap between two hyperplanes that ``sandwich'' the data.} volume of convex-hull, min-bounding ball radius, etc.~--- are highly sensitive to the presence / absence of a single datum. Moreover, while it is known that differential privacy generalizes~\cite{DworkFHPRR15,BassilyNSSSU16}, geometrical properties often do not: if the dataset $P$ is composed on $n$ i.i.d draws from a distribution $\calP$ then it might still be likely that, say, $\diam(P)$ and $\diam(\calP)$ are quite different.\footnote{For example, consider $\calP$ as a uniform distribution over $2n$ discrete points whose diameter greatly shrinks unless two specific points are drawn into $P$.}

%Little intersection between DP and CG, high sensitivity

%Median - canonical way to overcome high sensitivity queries
But differential privacy has already overcome the difficulty of large sensitivity in many cases, the leading example being the median~--- despite the fact that the median may vary greatly with the addition of a new entry into the data, we are still capable of privately approximating the median. The crux in differentially private median approximation is that the quality of the approximation is not measured by the actual distance between the true input-median and the result of the algorithm, but rather by the probability mass of the input's CDF ``sandwiched'' between the true median and the output of the private algorithm. 
A similar effect takes place in our work. While we deal with geometric concepts that exhibit large sensitivity, we formulate \emph{robust} approximation guarantees of these concepts, guarantees that do generalize when the data is drawn i.i.d.~from some unknown distribution. And much like in previous works in differential privacy~\cite{BeimelMNS19,KaplanSS20}, our approximation rely heavily on the notion of the \emph{depth} of a point.

Specifically, our approximation guarantees are with respect to \emph{Tukey depth}~\cite{Tukey75}. Roughly speaking (see Section~\ref{sec:preliminaries}), a point $x$ has Tukey depth $\kappa$ w.r.t.~a dataset $P$, denoted $\TD(x,P)=\kappa$, if the smallest set $S\subset P$ one needs to remove from $P$ so that some hyperplane separates $x$ from $P\setminus S$ has cardinality $\kappa$. This also allows us to define the \emph{$\kappa$-Tukey region} $D_P(\kappa) = \{x\in \R^d:~ \TD(x,P)\geq \kappa\}$. So, for example, $D_P(0)=\R^d$ and $D_P(1)=\CH(P)$ (the convex-hull of $P$). 
It follows from the definition that for any $1\leq \kappa_1 \leq \kappa_2$ we have $\CH(P) = D_P(1) \supset D_P(\kappa_1)\supset D_P(\kappa_2)$. It is known that for any dataset $P$ and depth $\kappa$ the Tukey-region $D_P(\kappa)$ is a convex polytope, and moreover (see~\cite{Edelsbrunner87}) that for any $P$ of size $n$ it holds that $D_P(n/(d+1))\neq\emptyset$. Moreover, there exists efficient algorithms (in low-dimensions) that find $D_P(\kappa)$.

One property of the Tukey depth, a pivotal property that enables differentially private approximations, is that it exhibits low-sensitivity at any given point. As noted by~\cite{BeimelMNS19}, it follows from the very definition of Tukey-depth that if we add or remove any single datapoint to/from $P$, then the depth of any given $x\in \R^d$ changes by no more than $1$. And so, in this work, we give bi-criteria approximations of key geometric features of $D_P(\kappa)$~--- where the quality of the approximation is measured both by a multiplicative factor and with respect to a \emph{shallower} Tukey region. Given a measure $\mu$ of the convex polytope $D_P(\kappa)$, such as diameter, width, volume etc., we return a $(\alpha,\Delta)$-approximation of $\mu$~--- a value lower bounded by $(1-\alpha)\mu(D_P(\kappa))$ and upper-bounded by $(1+\alpha)\mu(D_P(\kappa-\Delta))$. This implies that the quality of the approximation depends on \emph{both} the approximation parameters fed into the algorithm and also on the ``niceness'' properties of the data. For datasets where $\mu(D_P(\kappa-\Delta))\approx \mu(D_P(\kappa))$, our $(\alpha,\Delta)$-approximation is a good approximation of $\mu(D_P(\kappa))$, but for datasets where $\mu(D_P(\kappa-\Delta))\gg \mu(D_P(\kappa))$ our guarantee is rather weak. Note that no differentially private algorithm can correctly report for all $P$ whether $\mu(D_P(\kappa))$ and $\mu(D_P(\kappa-\Delta))$ are / are-not similar seeing as, as Figure~\ref{fig:tukey_volatility} shows, such proximity can be highly affected by the existence of a single datum in $P$.
Again, this is very much in  line with private approximations of the median~\cite{NissimRS07, BeimelNS13b}. Moreover, referring to the earlier discuss about generalizability~--- in the case where $P$ is drawn from a distribution $\calP$, it is known that $\forall x\in\R^d,~~|\frac 1 n\TD(x,P) - \TD(x,\calP)|= O(\sqrt{\frac{d\log(n)}{n}})$~\cite{BurrF17}, where $\TD(x,\calP)$ denotes the smallest measure $\calP$ places on any halfspace containing $x$. Thus, if $D_P(\kappa)$ and $D_P(\kappa-\Delta)$ vary drastically, then it follows that the distribution $\calP$ is ``volatile'' at depth $\frac{\kappa}n$.

\begin{figure}[t]
	\begin{minipage}[c]{0.4\textwidth}
		\caption{\label{fig:tukey_volatility}
			An example showing that $D_P(\kappa)$'s volume, width and proximity to shallower regions can be greatly affected by a single point in the input.}
	\end{minipage}
	\hfill
	\begin{minipage}[c]{0.6\textwidth}
		\centering
	\includegraphics[scale=0.3]{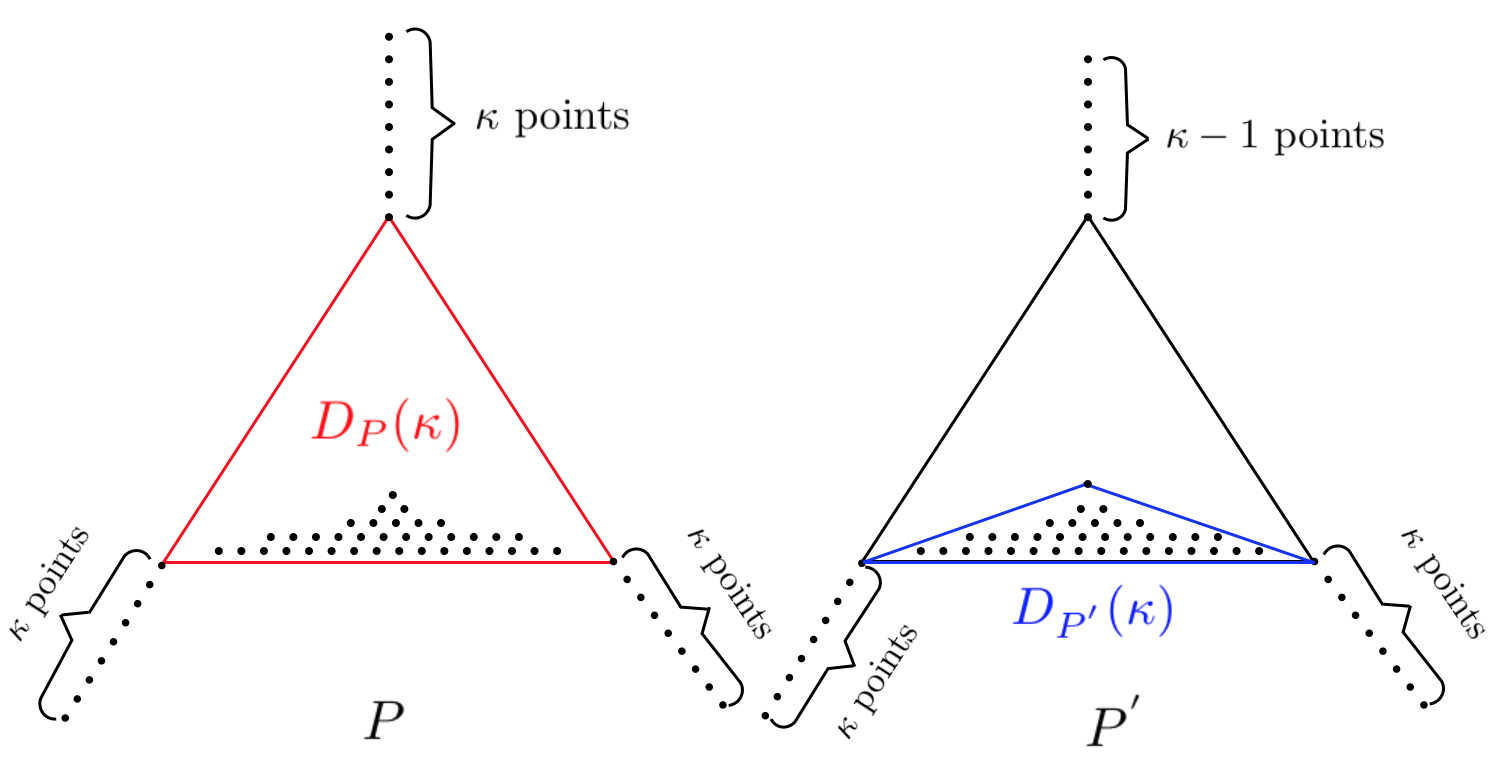}
	\end{minipage}
\end{figure}

Our main goal in this work is to produce a \emph{kernel} for $D_P(\kappa)$. Non privately, a $\alpha$-kernel~\cite{AgarwalHV04} of a dataset $P$ is a set $\calS\subset P$ where for any direction $u$ it holds that $(1-\alpha)\max_{p,q\in P}\langle p-q,u\rangle \leq\linebreak \max_{p,q\in \calS}\langle p-q,u\rangle \leq \max_{p,q\in P} \langle p-q,u\rangle$. Agarwal et al~\cite{AgarwalHV04} showed that for any $P$ there exists such a kernel whose size is $(\nicefrac 1 \alpha)^{O(d)}$. (Note how this implies that $|P|\gg (\nicefrac 1 \alpha)^{O(d)}$ since otherwise the non-private algorithm may as well output $P$ itself.) More importantly, the fact that $\calS$ is a $\alpha$-kernel implies that $(1-O(\alpha))\CH(P)\subset \CH(\calS)\subset \CH(P)$. It is thus tempting to define an analogous notion of $(\alpha,\Delta)$-kernel as ``for any direction $u$ we have  $(1-\alpha)\max_{p,q\in D_P(\kappa)}\langle p-q,u\rangle \leq \max_{p,q\in \calS}\langle p-q,u\rangle \leq (1+\alpha)\max_{p,q\in D_P(\kappa-\Delta)} \langle p-q,u\rangle$'' and hope that it yields that $(1-O(\alpha))D_P(\kappa)\subset \CH(\calS)\subset D_P(\kappa-\Delta)$. Alas, that is \emph{not} the case. Having $\calS\subset D_P(\kappa)$ turns out to be a crucial component in arguing about the containment of the convex-hulls, and the argument breaks without it. We give a counter example in a later discussion (Section~\ref{sec:DP_kernel_for_fat_Tukey_region}). Therefore, viewing this directional-width approximation property as means to an end, we define the notion of $(\alpha,\Delta)$-kernel directly w.r.t.~the containment of the convex bodies.
\begin{definition}
	\label{def:alpha_delta_kernel}
	Given a dataset $P$ and a depth parameter $\kappa$, a set $\calS$ is called a \emph{$(\alpha,\Delta)$-kernel for $D_P(\kappa)$} if there exists a point $c_1$ such that $(1-\alpha)(D_P(\kappa)-c_1)\subset \CH(\calS)-c_1$ and a point $c_2$ such that\linebreak \mbox{$\CH(\calS)-c_2 \subset (1+\alpha)(D_P(\kappa-\Delta)-c_2)$.}
\end{definition}
\noindent Note that in particular, a $(\alpha,\Delta)$-kernel gives the $(\alpha,\Delta)$-approximation of the projection along every direction $u$ proposed earlier (in quotation-marks above). In fact, a $(\alpha,\Delta)$-kernel yields $(\alpha,\Delta)$-approximations of many other properties of $D_P(\kappa)$, such as volume, min-bounding box, min-enclosing / max-enclosed ball radius, surface area, etc. Our work is the first to give a private approximation of \emph{any} of these concepts.

As it turns out, we are able to give a $(\alpha,\Delta)$-kernel of $D_P(\kappa)$ only when $D_P(\kappa)$ satisfies some ``niceness'' properties. We briefly describe the structure of our work to better explain these properties and how they relate. We begins with multiple preliminaries~--- in Section~\ref{sec:preliminaries} we establish background knowledge, and in Sections~\ref{sec:TDC} and~\ref{sec:approx_diam_and_width} we establish some basic privacy-preserving algorithms for tasks we require later.%, such as an efficient implementation of a private algorithm for finding a point in convex hull from~\cite{BeimelMNS19}, and private diameter and width approximations. 

Based on these rudimentary algorithms, we turn our attention towards the design of a private kernel approximation. In Section~\ref{sec:DP_kernel_for_fat_Tukey_region} we give our algorithm for finding a kernel, which works under the premise that the width of $D_P(\kappa)$ is large. This means that our goal is complete if we are able to assert, using a private algorithm, that $D_P(\kappa)$ has large width. So, in Section~\ref{sec:approx_bounding_box}, we give a private $(O(1),\Delta)$-approximation of the min-bounding box of $D_P(\kappa)$; and show that this box yields a transformation that turns $D_P(\kappa)$ into a region of large width, \emph{but only if} the volumes of $D_P(\kappa)$ and $D_P(\kappa-\Delta)$ are comparable. Finally, in Section~\ref{sec:finding_good_kappa}, we give an algorithm that finds a value of $\kappa$ for which is this premise about the volumes of $D_P(\kappa)$ and $D_P(\kappa-\Delta)$ holds, rendering us capable of privately finding a $(\alpha,\Delta)$-kernel for this particular $D_P(\kappa)$. We conclude in Section~\ref{sec:conclusion}, where we detail the applications of having a $(\alpha,\Delta)$-kernel and discuss many open problems.

Providing further details about the private approximation algorithms we introduce in this work requires that we first delve into some background details and introduce some key parameters.

\paragraph{The Setting: Low-Dimension and Small Granularity.} Differential privacy deals with the trade-offs between the privacy parameters, $\epsilon$ and $\delta$, and an algorithm's utility guarantee. Unlike the majority of works in differential privacy, we don't express these trade-offs based on the size $n$ of the data.\footnote{
Though $n$ comes into play in our work, both in requiring that for large enough $\kappa$ we have that $D_P(\kappa)\neq\emptyset$ and in bounding $\Delta$, since if $\Delta>n$ then it is trivial to give a $(\alpha,\Delta)$-kernel. Moreover, ideally we would have that $\Delta \leq \sqrt{dn\log(n)}$ so that both $D_P(\kappa)$ and $D_P(\kappa-\Delta)$ (roughly) represent the same Tukey-depth region w.r.t to the distribution the dataset was drawn from, based on the above-mentioned bounds of~\cite{BurrF17}.} Instead, in our work we upper bound the $\Delta$-term of a a private $(\alpha,\Delta)$-approximation as a function of the privacy- and accuracy-parameters, as well as additional two parameters. These two parameters are (i) the dimension, $d$, which we assume to be constant and so $n^{\poly(d)}$ is still considered efficient for our needs; and (ii) the granularity of the grid on which the data resides. In differential privacy, it is impossible to provide useful algorithms for certain basic tasks~\cite{BunNSV15} when the universe of possible entries is infinite. Therefore, we assume that the given input $P$ lies inside the hypercube $[0,1]^d$ and moreover~--- that its points reside on a grid $\G^d$ whose granularity is denoted as $\Upsilon$. This means that each coordinate of a point $p\in P$ can be described using $\upsilon = \log_2(1/\Upsilon)$ many bits. We assume here that $1/\Upsilon$ is large (say, all numbers are \texttt{int}s in $\texttt{C}$, so $\Upsilon=2^{-32}$), too large for the grid to be efficiently traversed. And so, for each $(\epsilon,\delta)$-differentially private algorithm we present, an algorithm that returns with a high probability of $1-\beta$ a $(\alpha,\Delta)$-approximation of some geometric feature of $D_P(\kappa)$, we upper bound the $\Delta$-term as a function of $(\alpha,\beta,\epsilon,\delta,d,\upsilon)$. (Of course, we must also have that $\kappa>\Delta$ otherwise the algorithm can simply return $[0,1]^d$.) In addition, we consider efficient any algorithm whose runtime is $(n\cdot\upsilon\cdot \epsilon^{-1}\cdot \alpha^{-1}\cdot\log(1/\beta\delta))^{\poly(d)}$.

Furthermore, Kaplan et al~\cite{KaplanSS20} gave a $\epsilon$-differentially private algorithm for detecting whether a given input $P$ has degenerated $\kappa$-Tukey regions~--- namely, $0$-volume polytopes that lie in some $j$-dimensional subspace ($j<d$). Moreover, if $D_P(\kappa)$ is degenerate, then the algorithm of~\cite{KaplanSS20} returns the affine subspace it lies in, so we can restrict our input to this subspace. We thus assume that as pre-processing to our algorithms, this detection algorithm was run and that we have also established that $n$ is sufficiently large so that for sufficiently large $\kappa$ it holds that $D_P(\kappa)$ is non-empty (otherwise we abort). And so, throughout this work we assume we deal with non-empty and non-$0$ volume Tukey regions.

\paragraph{Detailed Contribution and Organization.} 
First, in Section~\ref{sec:preliminaries} we survey some background in differential privacy and geometry. Our contributions are detailed in the remaining section and are as follows.\vspace{-2mm}
\begin{itemize}[leftmargin=*]
	\setlength\itemsep{0em}
	\item In Section~\ref{sec:TDC} we give an efficient implementation of the algorithm of~\cite{BeimelMNS19} for privately finding a point inside a convex hull. Beimel et al~\cite{BeimelMNS19} constructed a function for Tukey-Depth Completion~($\TDC$): given a prefix of $0\leq i<d$ coordinates, each $x\in\R$ is mapped to the max $\TD$ of a point whose first $i+1$ coordinates are the given prefix concatenated with $x$. Beimel et al showed that this $\TDC$-function is quasi-concave (details in Section~\ref{sec:TDC}) and so one can privately find $x\in \G$ with high $\TDC(x)$-value; so by repeating this process $d$ times one finds a point with high $\TD$. Unfortunately, the algorithm Beimel et al provide is inefficient, as it requires pre-processing involving the entire grid. We give an efficient algorithm for computing $\TDC$. We show that by first finding all non-empty $\kappa$-Tukey regions we can use LPs and so efficiently find the $\leq n$ points on the real line where the $\TDC$-function changes value. Therefore we can efficiently compute $\TDC(x)$ for any point on the line and find the $\max$-value of $\TDC$ on any interval efficiently. Next, applying an efficient algorithm for privately approximating quasi-concave function gives an overall  efficient algorithm with the same utility guarantee as in~\cite{BeimelMNS19}.
		
	We also show that the function that takes an additional parameter $\ell$ and maps $x$ to $\min\left\{\TDC(x),\TDC(x+\ell)\right\}$ is also quasi-concave and can also be computed efficiently. The two functions play an important role in the construction of following algorithms as we often rotate the space so that some direction $v$ aligns with first axis and then apply $\TDC$ to find a good extension of a particular coordinate along $v$ into a point inside a Tukey-region.
	
	\item In Section~\ref{sec:approx_diam_and_width} we give our efficient private algorithms for $(\alpha,\Delta^{\diam})$-diameter approximation and $(\alpha,\Delta^{\width})$-width approximation, as well as two similar algorithms for some specific tasks we require later. While the novelty of these algorithm lies in their premise, the algorithms themselves are quite standard and rely on the Sparse-Vector Technique.
	\item In Section~\ref{sec:DP_kernel_for_fat_Tukey_region} we give our private $(\alpha,\Delta)$-kernel approximation, that much like its non-private equivalent~\cite{AgarwalHV04}, requires some ``fatness'' condition. In fact, we have two somewhat different conditions. Our first algorithm requires a known (constant) lower bound on $\width(D_P(\kappa))$, and our second algorithm requires a known (constant) lower bound on the ratio $\frac{\width(D_P(\kappa))}{\diam(D_P(\kappa-\Delta))}$. More importantly, the resulting sets from each algorithm do not satisfy an analogous property to the non-private kernel definition of~\cite{AgarwalHV04}, but rather more intricate properties regarding projections along any direction. Thus, in Section~\ref{subsec:kernel_definitions}, prior to presenting the two algorithms, we prove that these two properties are sufficient for finding a $(\alpha,\Delta)$-kernel. These results may be of independent interest. 
	
	Clearly, there exists neighboring datasets where one dataset satisfies this lower bound on the width whereas its neighboring dataset violates the bound. 
	Thus, no private algorithm can tell for \emph{any} input whether $\width(D_P(\kappa))$ is large or not; but in Section~\ref{subsec:private_selection} we give a private algorithm that verifies whether some sufficient condition for large width holds.
	%Privately checking whether first condition, namely whether the width of the dataset is large, is simple (if $\width(D_P(\kappa+\Delta^{\width}))$ is large then we can privately assert that $\width(D_P(\kappa))$ is large based on the algorithm from Section~\ref{sec:approx_diam_and_width}). But privately checking for a sufficient condition under which the ratio of the width and the diameter is large is a more intricate task, which we detail in Section~\ref{subsec:private_selection}.
	
	\item We thus turn our attention to \emph{asserting} that the fatness condition required for the kernel-approximation algorithm does indeed hold. In Section~\ref{sec:approx_bounding_box} we give a private $(c,\Delta)$-approximation of the min bounding box problem~--- it returns a box that (a) contains $D_P(\kappa)$ and (b) with volume upper bounded by $c~\cdot~ \vol(D_P(\kappa-\Delta))$. Then, in Section~\ref{subsec:bounding_box_to_fatness} we argue that if $\vol(D_P(\kappa))\geq \vol(D_P(\kappa-\Delta))/2$ then this procedure also gives an affine transformation $T$ that makes $\width(T(D_P(\kappa))$ sufficiently large. (Obviously, if $\calS$ is a $(\alpha,\Delta)$-kernel for $T(D_P(\kappa))$ then $T^{-1}(\calS)$ is a $(\alpha,\Delta)$-kernel for $D_P(\kappa)$.)
	\item In Section~\ref{sec:finding_good_kappa} we give a private algorithm for finding a ``good'' value of $\kappa$, one for which it does hold that $\vol(D_P(\kappa))\geq \vol(D_P(\kappa-\Delta))/2$. We formulate a certain query $q$ where any $\kappa$ for which $q_P(\kappa)$ is large must also be a good $\kappa$, and then give a private algorithm for finding a $\kappa$ with a large $q_P(\kappa)$-value. The $\epsilon$-differentially private algorithm we give is actually rather novel~--- it is based on a combination of the Exponential-Mechanism with additive Laplace noise. Its privacy is a result of arguing that for any neighboring $P$ and $P'$ where $P'=P\cup\{x\}$ we can \emph{match} $\kappa$ with $\kappa+1$ so that $|q_P(\kappa)-q_{P'}(\kappa+1)|\leq 1$, and then using a few more observations that establish pure $\epsilon$-differential privacy (rather than $(\epsilon,\delta)$-DP). 
\end{itemize} 
Our work thus culminates in the following theorem.
\begin{theorem}
	\label{thm:main_thm}
	There exists an efficient $(\epsilon,\delta)$-differentially private algorithm, that for any sufficiently large dataset $P$, where $|P| \geq \tilde \Omega(d^4\upsilon\cdot\Delta)$, with probability $\geq1-\beta$ finds a $\kappa$ and a set $\calS$ such that $\calS$ is a $(\alpha,\Delta)$-kernel of $D_P(\kappa)$ where
	$\Delta = O(\frac {f(d)} \epsilon \cdot (\frac 1 \alpha)^{\frac d 2}\sqrt{\log(\frac 1 \delta)}\log(\frac{1}{\alpha\beta}))$ for some function $f(d)=2^{d^2\poly\log(d)}$.
\end{theorem}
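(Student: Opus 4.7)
The plan is to prove \cref{thm:main_thm} by composing the subroutines from \cref{sec:TDC,sec:approx_diam_and_width,sec:DP_kernel_for_fat_Tukey_region,sec:approx_bounding_box,sec:finding_good_kappa} into a single pipeline, splitting the privacy budget across them, and union-bounding over the failure events. Concretely, I would first allocate privacy parameters $(\epsilon_i,\delta_i)$ for $i=1,2,3$, each a constant fraction of $(\epsilon,\delta)$, and failure probability budgets $\beta_i=\beta/3$, so that by basic (or advanced) composition of differential privacy, the overall mechanism is $(\epsilon,\delta)$-DP, and by a union bound it succeeds with probability $\geq 1-\beta$.

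The pipeline proceeds as follows. First, I would invoke the algorithm from \cref{sec:finding_good_kappa} with budget $(\epsilon_1,\delta_1)$ to obtain a depth parameter $\kappa^*$ for which $\vol(D_P(\kappa^*))\geq \tfrac{1}{2}\vol(D_P(\kappa^*-\Delta_1))$, incurring a depth loss $\Delta_1$. Second, I would call the private min-bounding-box approximation of \cref{sec:approx_bounding_box} at depth $\kappa^*$ with budget $(\epsilon_2,\delta_2)$. By the reduction in \cref{subsec:bounding_box_to_fatness}, the volume-balancedness guaranteed by the first step ensures that the resulting affine transformation $T$ sends $D_P(\kappa^*)$ to a region of width lower-bounded by a constant fraction of its diameter, thereby satisfying the fatness hypothesis required by the kernel algorithm. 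This step introduces a further depth loss $\Delta_2$. Third, I would run the fat-region kernel algorithm of \cref{sec:DP_kernel_for_fat_Tukey_region} with parameters $(\alpha,\epsilon_3,\delta_3)$ on the transformed dataset, producing a set $\calS'$ that is a $(\alpha,\Delta_3)$-kernel of $T(D_P(\kappa^*))$. The output is then $\calS=T^{-1}(\calS')$ together with the reported depth $\kappa^*$; since affine transformations commute with convex hulls and with the kernel containment conditions of \cref{def:alpha_delta_kernel}, $\calS$ is a $(\alpha,\Delta_1+\Delta_2+\Delta_3)$-kernel of $D_P(\kappa^*)$.

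The utility analysis then reduces to showing that the total depth loss $\Delta=\Delta_1+\Delta_2+\Delta_3$ fits the bound stated in the theorem. Each $\Delta_i$ is controlled by the respective analyses in \cref{sec:DP_kernel_for_fat_Tukey_region,sec:approx_bounding_box,sec:finding_good_kappa}, and the dominant contribution comes from the kernel step of \cref{sec:DP_kernel_for_fat_Tukey_region}, which must privately resolve projections along a net of $(1/\alpha)^{\Theta(d)}$ directions via the Sparse-Vector and $\TDC$-based tools of \cref{sec:TDC,sec:approx_diam_and_width}; this is exactly what produces the $(1/\alpha)^{d/2}$ factor (after the square-root savings from $(\epsilon,\delta)$-DP concentration) and the $2^{d^2\poly\log(d)}$ geometric factor $f(d)$ absorbing the polytope-complexity and net-construction overheads. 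The requirement $|P|\geq \tilde\Omega(d^4\upsilon\cdot \Delta)$ arises from ensuring that each subroutine has enough ``depth budget'' to tolerate the Laplace/Exponential-mechanism noise at its chosen privacy level, and that the Tukey region at depth $\kappa^*-\Delta$ remains non-empty.

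The main obstacle is the second step, namely verifying that the bounding-box transformation yields true fatness in the required sense. The delicate point is that \cref{subsec:bounding_box_to_fatness} only guarantees fatness when the volumes of $D_P(\kappa^*)$ and $D_P(\kappa^*-\Delta_1)$ are comparable; this is exactly the premise supplied by the first step of the pipeline, so the two subroutines must be glued together carefully to make sure the depth-loss parameter of the volume-balancing algorithm matches the depth-loss tolerated by the bounding-box approximation. A secondary obstacle is the privacy accounting: although a naive composition suffices, extracting the stated $\sqrt{\log(1/\delta)}$ dependence requires using advanced composition for the $(\epsilon_3,\delta_3)$ kernel step (which itself performs many adaptive noisy selections), while the $\epsilon$-DP volume-finding algorithm of \cref{sec:finding_good_kappa} can be composed additively without affecting the $\delta$ term.
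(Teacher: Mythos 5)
Your proposal is correct and follows essentially the same route the paper itself takes in \cref{sec:conclusion}: privately select a good $\kappa$ with the shifted exponential mechanism, use the private bounding box plus \cref{lem:fat_Tukey_region_when_volumes_are_comparable} to obtain a fattening transformation, run the fat-region kernel algorithm, pull the kernel back through the (inverse) transformation, and account for privacy by composition and for failure by a union bound. The one small imprecision is that the stated $(\nicefrac 1\alpha)^{d/2}$ and $f(d)$ come from the \emph{absolute-fatness} kernel algorithm (\cref{thm:private_kernel_abs_fatness}), whose advanced composition is over a grid of $(c_d\sqrt d/\alpha)^{d}$ subcubes rather than a net of directions --- the direction-net variant is what yields the alternative $(\nicefrac1\alpha)^{(d-1)/2}$ bound mentioned after the theorem.
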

\noindent In fact, it is also required that $\Delta\geq \Delta^{\rm BB}(d,\upsilon,\epsilon,\delta,\beta)$ where $\Delta^{\rm BB}$ is guarantee of the private min-bounding-box algorithm, as detailed in Theorem~\ref{thm:dp_bounding_box}; yet this lower-bound holds under a very large regime of parameters. Moreover, there also exists a different algorithm that returns a somewhat better guarantee, replacing $f(d)$ with some $\tilde f(d,\upsilon)$-function and reducing the dependency on $\alpha$ to $(1/\alpha)^{\frac{d-1}2}$, and more importantly~--- promising that $\calS \subset D_P(\kappa-\Delta)$.

\paragraph{Additional Works.} In addition to the two works~\cite{BeimelMNS19, KaplanSS20} that privately find a point inside a convex hull, it is also worth mentioning the works regarding privately approximating the diameter~\cite{NissimSV16,NissimS18} (they return a $O(1)$-approximation of the diameter that may miss a few points), and the work of~\cite{KaplanMMS19} that privately approximates a $k$-edges polygon yet requires a dataset of points where many lie inside the polygon and many lie \emph{outside} the polygon. No additional works that we know of lie in the intersection of differential privacy and computational geometry. Computational geometry, of course, is a rich fied of computer science replete with many algorithms for numerous tasks in geometry. Our work only give private analogs to (a few of) the algorithms of~\cite{Chan02,AgarwalHV04}, but there are far many more algorithms to be privatized and the reader is referred to~\cite{Har-Peled11} for a survey of the field. Many works deal with computing the Tukey-depth and the Tukey region~\cite{RousseeuwR98,LiuMM19}, and others give statistical convergence rates for the Tukey-depth when the data is composed of i.i.d draws from a distribution~\cite{ZuoS00b,BurrF17,Brunel19}.

\section{Preliminaries}
\label{sec:preliminaries}

\subsection{Geometry}
\label{subsec:prelim_geometry}

%Convex geometry, angle-cover, upper and lower-bounds of volume based on projections,
In this work we use $\langle\cdot,\cdot\rangle$ to denote the inner-product between two vectors in $\R^d$.
A closed half-space is defined by a vector $u$ and a scalar $\lambda$ and it is the set $\{x\in \R^d: \langle x,u\rangle \leq \lambda\}$. A polytope is the intersection of finitely many closed half-spaces. It is known that a polytope $S$ is a convex body: for any $x,y\in S$ and any $\lambda\in [0,1]$ it also holds that $\lambda x + (1-\lambda)y \in S$. Given a collection of points $P$, their convex-hull is the set of all points that can be written as a convex combination of the points in $P$. For a polytope $P$ and a point $x$ we define $P-x$ as the shift of $P$ by $x$ (namely $z\in P-x$ iff $\exists y\in P$ s.t. $z=y-x$), and we define by $cP$ the blow-up of $P$ by a scalar $c$. 

An inner product $\langle x, u\rangle = \|x\|\|u\|\cos(\angle(x,u))$ is also known a projection of $x$ onto the subspace spanned by $u$. A projection onto a subspace $\Pi^V$ maps any $x\in \R^d$ to its closest point in the subspace $V$. The following fact is well-known.
\begin{fact}
	\label{fact:volume_bounds_by_projections}
	Let $S$ be a convex body. Let $u$ be any vector and let $\Pi^{\perp u}$ be the projection onto the subspace orthogonal to $u$. Denote $\ell$ as the max-length of the intersection of $S$ with any affine line in direction $u$, and denote $A$ as the volume of the projection of $S$ onto the subspace orthogonal to $u$,  $A=\vol(\Pi^{\perp u}(S))$. Then $\frac {A\cdot \ell}d\leq \vol(S) \leq A\cdot \ell$.
\end{fact}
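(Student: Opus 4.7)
The plan is to prove both inequalities by rewriting $\vol(S)$ as an integral of chord lengths over $T := \Pi^{\perp u}(S)$. Write $\hat u = u/\|u\|$ and, for $y \in T$, let $f(y)$ denote the length of the intersection of $S$ with the affine line $\{y + t\hat u : t \in \R\}$. Fubini's theorem gives $\vol(S) = \int_T f(y)\, dy$. The upper bound then falls out immediately: by hypothesis $f(y) \leq \ell$ pointwise, so $\vol(S) \leq \ell \cdot \vol_{d-1}(T) = A\cdot \ell$.

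For the lower bound, the main claim I will establish is that $f$ is concave on $T$. To see this, fix coordinates so that $\hat u$ is the first axis and $T$ sits in the orthogonal hyperplane; write the chord above $y_i$ as $\{y_i\} \times [a_i, b_i]$ with $b_i - a_i = f(y_i)$. Since $(y_i, a_i), (y_i, b_i) \in S$ and $S$ is convex, their convex combinations $(\lambda y_1 + (1-\lambda) y_2, \lambda a_1 + (1-\lambda) a_2)$ and $(\lambda y_1 + (1-\lambda) y_2, \lambda b_1 + (1-\lambda) b_2)$ also lie in $S$ and on the same line in direction $\hat u$; hence $f(\lambda y_1 + (1-\lambda) y_2) \geq \lambda f(y_1) + (1-\lambda) f(y_2)$.

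With concavity in hand, the subgraph $\Gamma := \{(y, t) : y \in T,\ 0 \leq t \leq f(y)\} \subset \R^d$ is a convex body whose $d$-dimensional volume equals $\int_T f(y)\, dy = \vol(S)$. Pick $y^* \in T$ attaining $f(y^*) = \ell$ (such a $y^*$ exists because $S$ is a compact convex body and $\ell$ is the supremal chord length). Then $T \times \{0\} \subset \Gamma$ and $(y^*, \ell) \in \Gamma$, so by convexity $\Gamma$ contains the cone $C := \mathrm{conv}\bigl((T \times \{0\}) \cup \{(y^*, \ell)\}\bigr)$. The cone $C$ has a $(d-1)$-dimensional base of volume $A$ and apex at height $\ell$, so its $d$-volume equals $A\cdot \ell / d$, yielding $\vol(S) \geq A \cdot \ell / d$.

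All steps are standard convex-geometry facts, and I do not anticipate a serious obstacle. The one point that deserves a brief verification is that the inscribed cone really sits inside $\Gamma$: this uses $f \geq 0$ on $T$, the concavity of $f$ established above, and the definition of $\ell$ placing the apex $(y^*, \ell)$ in $\Gamma$.
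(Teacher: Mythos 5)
Your proof is correct and follows essentially the same route the paper sketches in its one-line remark after the fact: the Fubini/pointwise bound $f\leq\ell$ is the "cylinder" containing $S$, and the inscribed cone over the concave chord-length function's subgraph is the "pyramid" of base $A$ and height $\ell$. You have simply made that sketch rigorous (including the concavity of $f$ and the attainment of $\ell$), and no step is missing.
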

\noindent The fact follows from reshaping $S$ so that it is contained in the ``cylinder'' whose base is $A$ and height is $\ell$, and contains a ``pyramid'' with base of $A$ and height of $\ell$.

The unit-sphere $\mathbb{S}^{d-1}$ is the set of vectors in $\R^d$ of length $1$. The \emph{diameter} of the convex body $P$ is defined as $\diam(P) = \max_{p,q\in P}\|p-q\|$, and it is simple to see that $\diam(P) = \max_{u\in\mathbb{S}^{d-1}}\max_{p,q\in P}\langle p-q,u\rangle$. The \emph{width} of a convex body $P$ is analogously defined as $\width(P)=\min_{u\in\mathbb{S}^{d-1}}\max_{p,q\in P}\langle p-q,u\rangle$. A \emph{$\zeta$-angle cover} of the unit sphere is a set of vectors $V_\zeta$ such that for any $v\in \mathbb{S}^{d-1}$ there exist $u$ such that $\angle(u,v)\leq\zeta$. It is known that each vector in the sphere can be characterized by $d-1$ angles $\varphi_1, \varphi_2,..., \varphi_{d-1}$ where $\varphi_i\in [0,2\pi]$ and for any other $j$, $\varphi_j\in [0,\pi]$. Therefore by discretizing the interval $[0,\pi]$ we can create a $\zeta$-angle cover of size $2\lceil\pi/\zeta \rceil^{d-1}$. 
\begin{proposition}
	\label{pro:nearest_vector_in_angle_cover}
	Let $\zeta<1/2$. Let $V_\zeta$ be a $\zeta$-angle cover of $\mathbb{S}^{d-1}$. Then for any $u\in \mathbb{S}^{d-1}$ the closest $v\in V_\zeta$ satisfies that $\|u-v\|\leq\sqrt 2\zeta$.
\end{proposition}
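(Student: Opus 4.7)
The plan is to reduce the statement to a standard trigonometric identity between unit vectors. By the definition of a $\zeta$-angle cover, for any $u\in\mathbb{S}^{d-1}$ there exists some $v\in V_\zeta$ with $\theta := \angle(u,v) \leq \zeta$. I would fix such a $v$; naturally the closest vector in $V_\zeta$ to $u$ (in Euclidean distance) satisfies a bound at least as good as that for this particular $v$.

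Next, since both $u$ and $v$ are unit vectors, I would expand
\[
\|u-v\|^2 \;=\; \langle u-v,u-v\rangle \;=\; \|u\|^2 + \|v\|^2 - 2\langle u,v\rangle \;=\; 2-2\cos\theta.
\]
The remaining step is purely one-variable calculus: bound $2-2\cos\theta$ in terms of $\theta$. Using the identity $2-2\cos\theta = 4\sin^2(\theta/2)$ together with $|\sin(x)|\le |x|$, one obtains $\|u-v\| = 2|\sin(\theta/2)| \le \theta \le \zeta$, which is already stronger than the claimed $\sqrt{2}\,\zeta$; alternatively, the cruder Taylor bound $1-\cos\theta \le \theta^2$ (valid for all $\theta\in\mathbb{R}$, visible by comparing derivatives from $0$) gives $\|u-v\|^2 \le 2\theta^2 \le 2\zeta^2$ directly, so $\|u-v\| \le \sqrt{2}\,\zeta$.

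There is essentially no obstacle here; the assumption $\zeta < 1/2$ only serves to keep the relevant angle safely inside a regime where the small-angle inequalities are tight and all quantities ($\theta$, $\sin(\theta/2)$, etc.) are non-negative and monotone. The whole argument is a two-line expansion plus one application of a standard bound, so I would present it as such and simply flag the trigonometric inequality being invoked.
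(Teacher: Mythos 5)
Your proof is correct and follows essentially the same route as the paper: expand $\|u-v\|^2=2-2\cos(\angle(u,v))$ and bound the cosine via a Taylor-type inequality to get $2\zeta^2$. Your side remark that the sharper bound $\|u-v\|\le 2\sin(\zeta/2)\le\zeta$ holds is a nice (correct) observation, but the argument is otherwise identical.
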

\begin{proof}
	For a given $u$ and its closest $v$ we have that $\|u-v\|^2 = \langle u-v,u-v\rangle = 2\cdot 1^2 - 2\cdot 1^2\cos(\angle(u,v)) \leq 2-2\cos(\zeta) \stackrel{\rm Taylor}\leq  2-2(1-\zeta^2)\leq 2\zeta^2$. The fact that  $\zeta<\frac 1 2$ allows us to lower bound the Taylor expansion of $\cos(\zeta)$.
\end{proof}

\paragraph{Tukey Depth.} Given a finite set of points $P\subset \R^d$, the Tukey depth~\cite{Tukey75} of a point $x\in \R^d$ w.r.t $P$ is defined $\TD(x,P) = \min_{u\in \mathbb{S}^{d-1}} |\{p\in P:~ \langle p,u\rangle \leq \langle x,u\rangle \}|$. It is also the min-size of a set $S\subset P$ such that there exists a hyperplane separating $x$ from $P\setminus S$. Given $P$ and a depth parameter $\kappa\geq 0$ we denote the \emph{$\kappa$-Tukey region} as $D_P(\kappa)  = \{x\in \R^d: \TD(x,P)\geq \kappa\}$. Note that $D_P(0)=\R^d$ and $D_P(1)=\CH(P)$. It is known that for any set of points $P$ it holds that $\kappa^* = \max_{x}\TD(x,P)$ is upper bounded by $|P|/2$ and lower bounded by $|P|/(d+1)$ (see~\cite{Edelsbrunner87}). It is also known that for any $\kappa$, the set $D_P(\kappa)$ (assuming its non-empty) is a convex polytope which is the intersection of all closed halfspaces that contain at least $n-\kappa+1$ points out of $P$ \cite{RousseeuwR98}, this yields a simple algorithm to compute the $\kappa$-Tukey region in time $O(n^{(d-1)\lfloor\frac{d}{2} \rfloor})$. There is a faster algorithm to compute the $\kappa$-Tukey region in time $O(n^{d}\log n)$~\cite{LiuMM19}, and so to compute \emph{all} of the non-empty Tukey-regions in time $O(n^{d+1}\log n)$. There is also an efficient algorithm~\cite{Liu17} for computing the Tukey-depth of a given point in time $O(n^{d-1}\log n)$.

%DP (SVT, composition basic and advanced)
\subsection{Differential Privacy}
\label{subsec:prelim_DP}

Differential privacy is a mathematically rigorous notion of preserving privacy in data analysis. Formally, two datasets $P$ and $P'$ are called neighboring if they differ on a single datum, and in thus work we assume that this means that $|P\triangle P'|=1$.
\begin{definition}
	\label{def:DP}
	A randomized algorithm ${\cal A}$ is said to be $(\epsilon,\delta)$-differentially private (DP) if for any two neighboring datasets $P$ and $P'$ and for any set of possible outputs $S$ it holds that $\Pr[{\cal A}(P)\in S]\leq e^{\epsilon}\Pr[{\cal A}(P')\in S]+\delta$. When $\delta=0$ we say ${\cal A}$ is $\epsilon$-DP or $\epsilon$-pure DP.
\end{definition} 

Differential privacy composes. Namely, if ${\cal A}$ is $(\epsilon,\delta)$-DP and ${\cal B}$ is $(\epsilon',\delta')$-DP, then applying ${\cal A}$ and then applying ${\cal B}$ sequentially on $P$ is a $(\epsilon+\epsilon',\delta+\delta')$-DP algorithm (even when ${\cal B}$ is chosen adaptively, based of ${\cal A}$'s output). Moreover, post-processing the output of a $(\epsilon,\delta)$-DP algorithm cannot increase either $\epsilon$ or $\delta$. It is also worth noting the advanced-composition theorem of~\cite{DworkRV10}, where the sequential application of $k$ $(\epsilon,\delta)$-DP algorithms yields in total an algorithm which is  $(\epsilon\sqrt{k\ln(1/k\delta)}, 2k\delta)$-DP (provided $\epsilon<1$). Since we deal with a constant dimension $d$, then whenever we compose $\poly(d)$-many mechanisms, we rely on the basic composition; and whenever we compose $\exp(d)$-many mechanisms, we rely on the advanced composition.

Perhaps most common out of all DP-algorithms is the Laplace additive noise. Given a function $f$ that maps inputs to real numbers, we define its global sensitivity $GS(f) = \max_{P,P' {\rm neighbors}} |f(P)-f(P')|$. It is known that outputting $f(P)+\Lap{GS(f)/\epsilon}$ is $\epsilon$-DP. Unfortunately, many of the functions we discuss in this work exhibit high global sensitivity, rendering this mechanism useless. It is also worth noting the \emph{Sparse Vector Technique} which is a $\epsilon$-DP algorithm that allows us to assess $t$ queries $q_1, q_2,.., q_t$, each with $GS(q_i)=1$, and halt on the very first query that exceeds a certain (noisy) threshold. Our algorithms repeatedly rely on the SVT.

For more details on differential privacy the reader is deferred to the monographs~\cite{DworkR14, Vadhan17}.

\newcommand{\alphaqc}{\alpha^{\rm qc}}
\subsubsection{Private Approximation of Quasi-Concave Functions}
\label{subsec:prelim_quasi_concave}

In our work we use as ``building blocks'' several known techniques in differential privacy regarding private approximations of quasi-concave functions. We say a function $q:\R \to \R$ is a quasi-concave function if for any $x\leq y \leq z$ it holds that
\[ q(y) \geq \min\{ q(x),q(z)\}  \]
Quasi-concave functions that obtain a maximum (namely, there exists some $x\in \R$ such that $q(x) \geq q(y)$ for any other $y\in \R$) have the property that the maximum is obtained on a single closed interval $I=[x,y]$ (we also consider the possibility of  $I$ containing only a single point, i.e., $y=x$). Moreover, it follows that on the interval $(-\infty, x)$ the function $q$ is a monotone non-decreasing function and on the interval $(y,\infty)$ the function $q$ is monotone non-increasing.

Perhaps more than any other application, approximating quasi-concave functions privately has been successfully applied in private learning of thresholds or quantiles, in a fairly large body of works~\cite{KasiviswanathanLNRS08, BeimelNS13a, BeimelNS13b, FeldmanX14, BunNSV15, AlonLMM19, KaplanLMNS20}. Afterall, the function \[q(t) = -\left| \frac n 2 - |\{x\in P:~x\leq t\}| \right|\] is a quasi-concave function that allows one to approximate the median of a given dataset $P$. We thus take the liberty to convert previous papers discussing private quantile-approximation to works that approximate privately any quasi concave function. We thus summarize the results in the following theorem.

\begin{theorem}
	\label{thm:prelim_quasi_concave}
	Let $q$ be any function $q:\R\to\R$ satisfying (i) $q$ is quasi-convace, (ii) $q$ has global-sensitivity $1$ and (iii) for every closed interval $I$ one can efficiently compute $\max_{x\in I}q(x)$. Let $\G\subset\R$ be a grid of granularity $\Upsilon = 2^{-\upsilon}$, and denote $q^* = \max_{x\in \G}q(x)$. Then, for any $0<\beta <1/2$ there exist differentially private algorithms that w.p. $\geq 1-\beta$ return some $x\in \G$ such that $q(x)\geq q^*-\alphaqc$ where
	\[ \alphaqc(\epsilon,\delta,\beta) = \begin{cases}
	O(\frac{\upsilon+\log(1/\beta)}{\epsilon}), &~\textrm{using $\epsilon$-DP binary-search}\\
	\tilde O(\frac{\log(\upsilon/\beta\epsilon\delta) }{\epsilon}), &~\textrm{using the ``Between Thresholds'' Algorithm~\cite{BunSU17}}\\
	 O\left(\frac{8^{\log^{*}(\upsilon)}\log^{*}(\upsilon)}{\epsilon}\cdot \log(\frac{\log^{*}(\upsilon)}{\beta\delta})\right) &~\textrm{using the ``RecConvace'' algorithm~\cite{BeimelNS13b}}\\
	\end{cases}  \] 
\end{theorem}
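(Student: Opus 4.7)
The plan is to observe that Theorem~\ref{thm:prelim_quasi_concave} is essentially a ``packaging'' of three known techniques for privately approximating the argmax of a quasi-concave function over a totally-ordered domain, and to show that each of these techniques can be driven by the oracle guaranteed in condition (iii). The unifying observation is that because $q$ is quasi-concave with $GS(q)=1$, every super-level set $L_t=\{x\in\G:~q(x)\geq t\}$ is an \emph{interval}, and $q$ has a unique maximizing interval with $q$ monotone on either side. This structural fact is exactly what lets us reduce the problem to the kind of interval/threshold primitives used in each of the three bullets, and it is what makes condition (iii) a reasonable hypothesis (we can evaluate the ``best point in a prefix / suffix'' in order to make private comparisons).

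For the first bullet I would analyze the exponential mechanism applied directly to the utility $q$ over $\G$. Since $GS(q)=1$ and $|\G|=2^{\upsilon}$, the standard exponential-mechanism analysis yields an $\epsilon$-DP output $x\in\G$ with $q(x)\geq q^*-\frac 2 \epsilon(\upsilon\ln 2+\ln(1/\beta))=O(\frac{\upsilon+\log(1/\beta)}{\epsilon})$ w.p.\ $\geq 1-\beta$. To make this efficient (avoid enumerating $2^{\upsilon}$ grid points), I would implement the mechanism by a noisy binary search: at each of the $\upsilon$ levels, split the current interval $I$ into halves $I_L,I_R$, use condition~(iii) to compute $M_L=\max_{x\in I_L}q(x)$ and $M_R=\max_{x\in I_R}q(x)$, and use a noisy-argmax (equivalently the exponential mechanism restricted to two ``super-arms'' weighted by interval sizes). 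Quasi-concavity is used here to show that the $M_L$ vs.\ $M_R$ comparison carries all the information needed---whichever half contains the maximizer has the higher max---so the recursion cannot ``lose'' more than the per-step noise.

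For the second bullet I would reduce to the Between-Thresholds primitive of~\cite{BunSU17}. The idea is to privately binary-search over \emph{threshold levels} $t$ rather than over grid positions: define the monotone-in-$t$ interval $L_t$, and use Between-Thresholds to locate a threshold $\tilde t$ with $q^*-O(\alpha)\leq \tilde t\leq q^*$ such that $L_{\tilde t}$ is non-empty; the guarantees of~\cite{BunSU17} then produce, in an $(\epsilon,\delta)$-DP manner, a witness $x\in L_{\tilde t}$, which is our output. Feeding the mechanism requires evaluating, for any $t$, whether there exists a grid point with $q(x)\geq t$ on a given interval, which is exactly condition~(iii). The utility of Between-Thresholds applied at this scale (domain size $2^{\upsilon}$) gives $\alphaqc=\tilde O(\log(\upsilon/\beta\epsilon\delta)/\epsilon)$. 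For the third bullet I would plug $q$ directly into the RecConcave algorithm of~\cite{BeimelNS13b}, whose interface is a quasi-concave sensitivity-$1$ function over a linearly-ordered grid---verifying that conditions (i)--(iii) provide exactly the oracle RecConcave needs (evaluate $q$, and compute extrema on sub-intervals during recursion). Its stated accuracy then yields the $O\bigl(\tfrac{8^{\log^*\upsilon}\log^*\upsilon}{\epsilon}\log\tfrac{\log^*\upsilon}{\beta\delta}\bigr)$ bound.

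The main obstacle I expect is not in the probabilistic accuracy analyses---those are essentially quoted---but in verifying that condition~(iii), ``compute $\max_{x\in I}q(x)$ for any closed interval,'' is strong enough to serve as the oracle required by each external algorithm. For the exponential mechanism / binary search this is straightforward. For Between-Thresholds it requires re-casting queries of the form ``is there $x\in I$ with $q(x)\geq t$?'' as a single sensitivity-$1$ counting query, so that the framework of~\cite{BunSU17} applies off-the-shelf. For RecConcave it requires matching our quasi-concavity assumption to the slightly different ``quasi-concave'' condition of~\cite{BeimelNS13b} (which is stated in terms of recursive cave-cuts of the domain); the reduction relies exactly on the interval-level-set structure noted at the outset. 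Once these oracle reductions are in place, the three utility bounds follow by invoking the corresponding theorems of~\cite{BunSU17} and~\cite{BeimelNS13b} and, for the first bullet, the standard exponential-mechanism tail bound.
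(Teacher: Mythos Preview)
Your high-level framing is right: the theorem is a packaging of three known primitives, and the paper treats it the same way---it gives no formal proof, just a one-paragraph justification citing folklore binary search, \cite{BunSU17}, and \cite{BeimelNS13b}. Your first and third bullets match the paper's treatment essentially verbatim.

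The one place you diverge is the Between-Thresholds reduction. You describe a binary search over \emph{threshold levels} $t$, looking for a $\tilde t$ with nonempty $L_{\tilde t}$. The paper's reduction is more direct: it defines the scalar function
\[
f(z) = \max_{x\in(-\infty,z]} q(x) \;-\; \max_{x\in[z,\infty)} q(x),
\]
observes that $f$ is monotone non-decreasing with sensitivity $1$ (both properties inherited from $q$ being quasi-concave with $GS(q)=1$), and then runs Between-Thresholds on $f$ with thresholds near $0$. A point $z$ with $f(z)\approx 0$ is exactly a near-maximizer of $q$, and $f$ is computable from the oracle in condition~(iii). This is cleaner than your sketch because it hands Between-Thresholds a single monotone sensitivity-$1$ function over the grid~$\G$---precisely the interface that algorithm expects---rather than layering a search over level values on top. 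Your version could likely be made to work, but as written it is vague about what sensitivity-$1$ query is actually being fed to the \cite{BunSU17} primitive, whereas the paper's $f(z)$ answers that immediately.
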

The first bound is given by standard $\epsilon$-DP binary search algorithm (folklore). The second bound is given by the rather intuitive ``Between Threshold'' algorithm of Bun et al~\cite{BunSU17} where instead of the standard counting function $f(z)= |\{x\in P: x\leq z \}|$ we use the function $f(z) = \max\limits_{x\in (-\infty,z]}q(x) - \max\limits_{x\in [z,\infty)}q(x)$ and set thresholds close to $0$ (indicating a maximization point of $q$). The third is from the {\tt RecConcave} algorithm~\cite{BeimelNS13b} that deals with approximating quasi-concave function, a rather intricate algorithm. We comment that it is unknown\footnote{Uri Stemmer, private correspondence.} whether the more recent work~\cite{KaplanLMNS20} that improves upon the bounds of {\tt RecConcave} is applicable to general quasi-concave functions.

\section{Tools, Part 1: The Tukey-Depth Completion Function}
\label{sec:TDC}

In this section we discuss the implementation of the following \emph{Tukey Depth Completion} function. This function takes as a parameter a $i$-long tuple of coordinates $\bar y$, where $0\leq i < d$, and scores each $x\in \R$ with a value $\kappa$ if the $i+1$ prefix $\bar y \circ x$ can be completed to a point with Tukey-depth of $\kappa$. Formally, we present the following definition(s).

\begin{definition}
	\label{def:TDC}
	Fix $d\in \mathbb{N}$ and let $P$ be a collection of points in $\R^d$. For any $i$-tuple of coordinates $\bar y = (y_1, y_2, ..., y_i)$ where $0\leq i \leq d-1$ we define the function $\TDC_{\bar y}:\mathbb{R}\to \mathbb{R}$ by
	\begin{equation}
	\label{eq:def_TDC}
	\TDC^P_{\bar y}(x) = \max_{(z_1, z_2, ..., z_{d-1-i}) \in \R^{d-i-1}} \TD\big( (y_1,.., y_i, x, z_1,..,. z_{d-1-i}), P\big)
	\end{equation}
	For any closed interval $I = [a,b]\subset \mathbb{R}$ we overload the definition of $\TDC$ to denote
	\begin{equation}
	\label{eq:def_TDC_interval}
	\TDC^P_{\bar y}(I) = \max_{ x\in [a,b]} \TDC^P_{\bar y}(x)
	\end{equation}
	note that for $x\in \R$ it holds that $\TDC^P_{\bar y}(x) = \TDC^P_{\bar y}(I)$ for the closed (degenerate) interval $I = [x,x]$.
	
	Lastly, we introduce an additional variation that we will apply in our work. For any such $\bar y$ and any $\ell\in \R$ we denote the following function.
	\begin{equation}
	\label{eq:def_TDC_l}
	\lTDC_{\bar y}^{P}(x) = \min \{ \TDC^P_{\bar y}(x), \TDC^P_{\bar y}(x+\ell)\}
	\end{equation}
	and similarly, use $\lTDC_{\bar y}^{P}(I) = \max\limits_{x\in I} \lTDC_{\bar y}^{P}(x)$. We omit the superscript $P$ whenever the dataset is clear.
\end{definition}
\noindent It is worth noting that the definition of the $\TDC$-function is w.r.t.~the real Euclidean space and not just points on a grid. Later we discuss the refinement of the grid required for finding a point whose $\TDC$ w.r.t.~the grid $\G^d$ is equal.

We begin with a simple property of the $\TDC$ function, we show that the $\TDC$-function, as well as the $\lTDC$-function, are both quasi-concave functions. (The first part of the claim was proven in~\cite{BeimelMNS19}.)
\begin{proposition}
	\label{pro:TDC_is_quasy_concave}
	Fix $i$ and fix an $i$-tuple $\bar y$. For any $a\leq b\leq c$ on the grid we have that 
	\[  \TDC_{\bar y}(b) \geq \min\left\{ \TDC_{\bar y}(a), \TDC_{\bar y}(c)  \right\}  \]
	In addition for any $\ell >0$ we also have that
	\[  \lTDC_{\bar y}(b) \geq \min\left\{ \lTDC_{\bar y}(a), \lTDC_{\bar y}(c)  \right\}  \]
\end{proposition}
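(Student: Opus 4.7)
The plan is to leverage the convexity of the Tukey region $D_P(\kappa)$ to push quasi-concavity of the completion function through convex combinations of witness completions. Recall from Section~\ref{sec:preliminaries} that $D_P(\kappa)$ is a convex polytope for every $\kappa$, a fact I will use as the single geometric ingredient.

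For the first claim, let $\kappa = \min\{\TDC_{\bar y}(a), \TDC_{\bar y}(c)\}$. By the definition~\eqref{eq:def_TDC}, there exist completions $\bar z^a, \bar z^c \in \R^{d-1-i}$ such that the points
\[ p_a = (y_1,\ldots,y_i, a, \bar z^a) \quad \text{and} \quad p_c = (y_1,\ldots,y_i, c, \bar z^c) \]
both lie in $D_P(\kappa)$. Since $a \leq b \leq c$, write $b = \lambda a + (1-\lambda) c$ for some $\lambda \in [0,1]$, and set $p_b = \lambda p_a + (1-\lambda) p_c$. Convexity of $D_P(\kappa)$ gives $p_b \in D_P(\kappa)$, i.e.~$\TD(p_b,P) \geq \kappa$. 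But the first $i$ coordinates of $p_b$ are precisely $\bar y$ (they agree in $p_a$ and $p_c$), and the $(i+1)$-th coordinate is $\lambda a + (1-\lambda)c = b$. Hence the last $d-1-i$ coordinates $\lambda \bar z^a + (1-\lambda)\bar z^c$ form a legal completion witnessing $\TDC_{\bar y}(b) \geq \kappa$, which is exactly what we want.

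For the second claim, I would simply bootstrap from the first. Fix $a \leq b \leq c$ and let $\kappa' = \min\{\lTDC_{\bar y}(a), \lTDC_{\bar y}(c)\}$. By definition~\eqref{eq:def_TDC_l} we have $\TDC_{\bar y}(a), \TDC_{\bar y}(c) \geq \kappa'$ as well as $\TDC_{\bar y}(a+\ell), \TDC_{\bar y}(c+\ell) \geq \kappa'$. Applying the first part of the proposition to the triple $a \leq b \leq c$ yields $\TDC_{\bar y}(b) \geq \kappa'$, and applying it to the shifted triple $a+\ell \leq b+\ell \leq c+\ell$ yields $\TDC_{\bar y}(b+\ell) \geq \kappa'$. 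Taking the minimum of these two gives $\lTDC_{\bar y}(b) \geq \kappa'$, as required.

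There is essentially no obstacle here: the only non-trivial step is recognizing that the witness for $\TDC_{\bar y}(b)$ can be obtained as a coordinate-wise convex combination of the witnesses at $a$ and $c$, which works precisely because the first $i+1$ coordinates are fixed (so interpolating them yields the desired prefix $\bar y \circ b$) and because membership in $D_P(\kappa)$ is preserved under convex combinations. The $\lTDC$ extension is then an immediate corollary, since the shift by $\ell$ commutes with the ordering $a \leq b \leq c$.
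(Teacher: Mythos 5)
Your proof is correct and follows essentially the same route as the paper's: taking the convex combination of the two witness completions at $a$ and $c$, using convexity of $D_P(\kappa)$ to place the interpolated point in the region, and then deriving the $\lTDC$ case by applying the first part to both the original and the $\ell$-shifted triples. No gaps.
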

\begin{proof}
	If $b=a$ or $b=c$ then the claim trivially holds. Assuming $a<b<c$, denote $b = \lambda a + (1-\lambda) c$ for the suitable scalar $\lambda\in (0,1)$. Denote $\kappa = \min\{ \TDC_{\bar y}(a), \TDC_{\bar y}(c) \}$. Let $\bar z_a$ and $\bar z_c$ be the two completions such that $\TD\Big( \bar y \circ a \circ \bar z_a \Big), \TD\Big( \bar y \circ c \circ \bar z_c \Big) \geq \kappa$. It follows that the two points $p_a = \Big( \bar y \circ a \circ \bar z_a \Big)$ and $p_c=\Big( \bar y \circ c \circ \bar z_c \Big)$ are in $D_P(\kappa)$. Due to the convexity of $D_P(\kappa)$ it holds that the point $p_b = \lambda p_a + (1-\lambda)p_c$ also belongs to $D_P(\kappa)$. Since the $i+1$ coordinate of $p_b$ is $b$ it follows that $\TDC_{\bar y}(b) \geq \kappa$. 
	
	As for the function $\lTDC_{\bar y}$, note that by definition
	\[ \min\left\{ \lTDC_{\bar y}(a), \lTDC_{\bar y}(c)  \right\} = \min\{ \TDC_{\bar y}(a), \TDC_{\bar y}(c), \TDC_{\bar y}(a+\ell), \TDC_{\bar y}(c+\ell) \} \stackrel{\rm def} = \rho  \]
	And since we have shown the quasi-concavity of $\TDC$ then we have that
	\begin{align*}
	\TDC_{\bar y}(b) &\geq \min\{\TDC_{\bar y}(a), \TDC_{\bar y}(c)\}  \geq \rho
	\intertext{and similarly that}
	\TDC_{\bar y}(b+\ell) &\geq \min\{\TDC_{\bar y}(a+\ell), \TDC_{\bar y}(c+\ell)\}  \geq \rho
	\intertext{so it holds that}
	\lTDC_{\bar y}(b) &= \min\{ \TDC_{\bar y}(b),\TDC_{\bar y}(b+\ell)  \}  \geq \rho = \min\left\{ \lTDC_{\bar y}(a), \lTDC_{\bar y}(c)  \right\}
	\end{align*}
\end{proof}

Having established the quasi-concavity of $\TDC_{\bar y}$ it follows that on the real line the values of the function ascend from $0$ to the max-value, then descend back to $0$. In particular, for any (integer) $\kappa$ from $0$ to the max-value of the $\TDC_{\bar y}$-function, there exists an interval $[a_\kappa,b_\kappa]$ such that for any $x\in [a_\kappa,b_\kappa]$ it holds that $\TDC_{\bar y}(x)\geq \kappa$. And so, we give an algorithm that finds these sets of nested intervals $\{ [a_\kappa,b_\kappa] \}_{\kappa > 0}$, and then finds the maximum $\kappa$ whose interval intersect the given point $x$ or interval $I$.

\begin{algorithm}[ht]
	\caption{\label{alg:compute_TDC}Pre-processing for Computing the $\TDC$-function}
	{\bf Input}: $P\subset\mathbb{R}^{d}$; an $i$-tuple $\bar y = (y_1, y_2,.., y_{i})$.\\
	{\bf Output}: A collection of nested intervals.
	\begin{algorithmic}[1]
		\FOR {\textbf{each $\kappa$ from $1$ to $\frac n 2$}}
		\STATE Compute the $\kappa$-Tukey region $D_P(\kappa)$.
		\STATE Compute the intersection $D_P(\kappa) \cap S_{\bar y}$ where $S_{\bar y}$ is the affine subspace $S_{\bar y} = \{\bar y \circ \bar x:~ \bar x\in \R^{d-i}\}$.
		\STATE \textbf{if} ($D_P(\kappa) \cap S_{\bar y}=\emptyset$) \textbf{then break for-loop}
		%\\		(For each $H$ in the collection ${\cal H}_{\kappa}$ that makes the convex polytope $D_P(\kappa)$)
		\STATE Compute $a_\kappa$ and $b_{\kappa}$ where
		
		\begin{minipage}[h]{0.35\textwidth}
			\begin{align*}
			a_\kappa &= \min \langle \bar e_{i+1}, \bar y\circ \bar x\rangle \\
			\textrm{ s.t. }& \bar x \in D_P(\kappa) \cap S_{\bar y}
			\end{align*}
		\end{minipage}~and~
		\begin{minipage}[h]{0.35\textwidth}
		\begin{align*}
		b_\kappa &= \max \langle \bar e_{i+1}, \bar y\circ \bar x\rangle\\
		\textrm{ s.t. }& \bar x\in D_P(\kappa) \cap S_{\bar y}
		\end{align*}
		\end{minipage}
		\ENDFOR
		\RETURN the collection $\{[a_\kappa,b_\kappa]\}$
	\end{algorithmic}
\end{algorithm}
\begin{theorem}
	\label{thm:alg_TDC_works}
	For any $0\leq i <d$ and any $i$-tuple $\bar y$, Algorithm~\ref{alg:compute_TDC} runs in time polynomial in $n$ and $\upsilon$ and exponential in $d$ and returns a collection of nested intervals such that for any interval $I\subset\R$ it holds that $\TDC_{\bar y}(I) = \max \{\kappa: I\cap [a_\kappa,b_\kappa]\neq \emptyset\}$.
\end{theorem}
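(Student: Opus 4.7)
The plan is to verify correctness by establishing the cleaner intermediate claim: for every positive integer $\kappa$ (up to the max Tukey depth) and every $x\in\R$, we have $\TDC_{\bar y}(x)\geq \kappa$ iff $x\in[a_\kappa,b_\kappa]$. The $(\Leftarrow)$ direction follows because any $\bar x\in D_P(\kappa)\cap S_{\bar y}$ whose $(i{+}1)$-th coordinate equals $x$ (which exists by convexity of the intersection, since both endpoints $a_\kappa,b_\kappa$ of the projection onto $\bar e_{i+1}$ are attained at points in the intersection, so every intermediate coordinate is attained too) gives a completion of $\bar y\circ x$ that witnesses $\TD\geq\kappa$. For the $(\Rightarrow)$ direction, if $\TDC_{\bar y}(x)\geq \kappa$ then there is a point of the form $\bar y\circ x\circ\bar z$ in $D_P(\kappa)$, which lives in $D_P(\kappa)\cap S_{\bar y}$ and has $(i{+}1)$-th coordinate equal to $x$, so $x$ falls between the min and max values of that coordinate on the intersection, i.e.\ $x\in[a_\kappa,b_\kappa]$.

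The nesting $[a_{\kappa+1},b_{\kappa+1}]\subseteq[a_\kappa,b_\kappa]$ then falls out immediately from $D_P(\kappa+1)\subseteq D_P(\kappa)$, and the final equation reduces to a tautology: $\TDC_{\bar y}(I)=\max_{x\in I}\TDC_{\bar y}(x)$ is the largest $\kappa$ for which some $x\in I$ satisfies $\TDC_{\bar y}(x)\geq \kappa$, which by the equivalence just proved is the largest $\kappa$ such that $I$ intersects $[a_\kappa,b_\kappa]$. The loop's early termination at the first empty intersection is justified by the same nesting: once $D_P(\kappa)\cap S_{\bar y}=\emptyset$, the same holds for every $\kappa'>\kappa$, so no further intervals exist.

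For the runtime, I would invoke the algorithm of Liu--Mosler--Mozharovskyi cited in Section~\ref{subsec:prelim_geometry} which computes $D_P(\kappa)$ in $O(n^d\log n)$, and observe that each $D_P(\kappa)$ is a polytope given as the intersection of at most $\binom{n}{d}$ half-spaces. Intersecting with the affine subspace $S_{\bar y}$ just adds $d-i$ equality constraints (fixing $\bar y$'s coordinates), and the optimizations for $a_\kappa$ and $b_\kappa$ are then linear programs in $d-i$ variables with $O(n^d)$ constraints, each solvable in time polynomial in $n$ and $\upsilon$ (the bit-complexity of the coefficients) and exponential in $d$. Since the outer loop runs at most $n/2$ times, the overall runtime is $\poly(n,\upsilon)\cdot 2^{O(d\log d)}$, matching the claim.

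The only mild subtlety, which I would want to flag explicitly, is that $D_P(\kappa)\cap S_{\bar y}$ may be lower-dimensional or even a single point; the LP formulation handles this uniformly since it only needs the scalar objective $\langle\bar e_{i+1},\cdot\rangle$ to attain a min and a max on a nonempty bounded polyhedron (boundedness follows from $P\subset[0,1]^d$). Beyond this, all steps are routine, and the main ``content'' of the proof really is just the two-way correspondence between $D_P(\kappa)\cap S_{\bar y}$ and the super-level sets of $\TDC_{\bar y}$.
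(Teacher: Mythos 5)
Your proof is correct and follows essentially the same route as the paper's: nestedness from $D_P(\kappa')\subset D_P(\kappa)$, a two-way correspondence between $[a_\kappa,b_\kappa]$ and the super-level sets of $\TDC_{\bar y}$, and an LP-based runtime bound. The only cosmetic difference is that you obtain the $(\Leftarrow)$ direction directly from convexity of $D_P(\kappa)\cap S_{\bar y}$ (every intermediate value of the coordinate is attained), whereas the paper routes the same convexity argument through the quasi-concavity of $\TDC_{\bar y}$ (Proposition~\ref{pro:TDC_is_quasy_concave}); these are equivalent.
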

\begin{proof}
	Since there are at most $n/2$ possible intervals returned by the algorithm and each is formed by solving a LP in $n^{{\rm poly}(d)}$ constraints, the claim regarding the runtime of the algorithm holds. In fact, it is known that the exact computation of the Tukey region takes $T(n,d)=O(n^{(d-1)\lfloor d/2 \rfloor })$ time, a computation that returns a set of $\leq T(n,d)$ vetrices that form the boundary of the $D_P(\kappa)$-polytope. Thus, solving each LP involves $d-i$ variables with $T(n,d)$ many constraints, so it takes at most $O((d-i)!T(n,d))$-time~\cite{Seidel91}.	
		
	Secondly, each $D_P(\kappa)$ is a convex polytope so its intersection with the (convex) affine subspace $S_{\bar y}$ is also a convex polytope. Now, since for any $\kappa<\kappa'$ we have that $D_P(\kappa')\subset D_P(\kappa)$ then~--- denoting $\bar x_{\kappa'}$ as the vector that obtains $a_{\kappa'}$~--- we have that $\bar x_{\kappa'}$ is a valid solution for the minimization LP for $D_P(\kappa)$ hence $a_{\kappa} \leq \langle \bar e_{i+1}, \bar y\circ \bar x_{\kappa'}\rangle = a_{\kappa'}$, and similarly, $b_{\kappa}\geq b_{\kappa'}$; thus $[a_{\kappa'},b_{\kappa'}]\subset [a_{\kappa},b_{\kappa}]$. This implies the set of intervals are nested in one another.
	
	Next fix any interval $I\subset \R$. We argue that for any $\kappa$ such that there exists some $x\in I$ that also falls inside the interval $[a_{\kappa},b_{\kappa}]$, we have that $\TDC_{\bar y}(x)\geq \kappa$, making $\TDC_{\bar y}(I)\geq \kappa$. The reason is the following: denote the vectors $\bar u = \big(\bar y \circ a_{\kappa} \circ \bar z \big)$ and $\bar v = \big(\bar y \circ b_{\kappa} \circ \bar z' \big)$ as two vectors in $D_P(\kappa)\cap S_{\bar y}$ that obtain $a_{\kappa}$ and $b_{\kappa}$. These vectors give that $\TDC_{\bar y}(a_\kappa)\geq \kappa$ and $\TDC_{\bar y}(b_\kappa)\geq \kappa$ so by quasi-concaveness we have that $\TDC_{\bar y}(x)\geq k$. This shows that $\TDC_{\bar y}(I)\geq \max \{\kappa: I\cap [a_\kappa,b_\kappa]\neq \emptyset\}$. Conversely, denote $\kappa^*= \TDC_{\bar y}(I)$, then for some $x\in I$ we have that $\TDC_{\bar y}(x) = \kappa^*$ which means that for some completion $\bar u = \big(\bar y \circ x \circ \bar z\big)\in D_P(\kappa^*) \cap S_{\bar y}$, and so we have that $a_{\kappa^*} \leq x \leq b_{\kappa^*}$ as $a_{\kappa^*}$ (resp. $b_{\kappa^*}$) is a solution for a minimization (resp. maximization) problem over a domain containing $\bar u$. Thus $I \cap [a_{\kappa^*},b_{\kappa^*}]\neq\emptyset$ making $\kappa^* \leq \max \{\kappa: I\cap [a_\kappa,b_\kappa]\neq \emptyset\}$. 
\end{proof}

Now, given the collection of intervals returned by Algorithm~\ref{alg:compute_TDC}, which is in essence a set of $\leq 2\kappa^*\leq n$ points 
\begin{equation}
\label{eq:nested_intervals}
a_1 \leq a_2 \leq ... \leq a_{\kappa^*}\leq b_{\kappa^*}\leq b_{\kappa^*-1} \leq ... \leq b_2\leq b_1
\end{equation} 
on the real line where the $\TDC_{\bar y}$-function changes its value, we argue that for any interval $I$ computing $\TDC_{\bar y}(I)$ is simple and can be done in $O(\log(\kappa^*)) = O(\log(n))$-time by the following scheme. Denoting $I = [p,q]$ we have:
\begin{itemize}
	\item If $q < a_{\kappa^*}$ then $I$ is contained in the part of the real line where $\TDC_{\bar y}$ is monotone non-decreasing, thus $\TDC_{\bar y}(I) = \TDC_{\bar y}(q)$, so using binary search we find $\kappa$ such that $a_{\kappa-1}<q\leq a_{\kappa}$ and return it.
	\item Symmetrically, if $p > b_{\kappa^*}$ then $I$ is contained in the part of the real line where $\TDC_{\bar y}$ is monotone non-increasing, thus $\TDC_{\bar y}(I) = \TDC_{\bar y}(p)$, so using binary search we find $\kappa$ such that $b_{\kappa}\geq p>b_{\kappa-1}$ and return it.
	\item Otherwise, $q \geq a_{\kappa^*}$ and $p\leq b_{\kappa^*}$ which means $I\cap [a_{\kappa^*},b_{\kappa^*}]\neq \emptyset$ and so we return $\kappa^*$.
\end{itemize}
Next, in order to compute $\lTDC_{\bar y}(x)$ we just compute $\TDC_{\bar y}(x), \TDC_{\bar y}(x+\ell)$ and take the min of the two values, so this takes $O(\log(n))$ time as well. 

Lastly, in order to compute $\lTDC_{\bar y}(I)$, we append the collection of $2\kappa^*$ change-points with the set $\{a_{\kappa}+\ell:~ 1\leq \kappa\leq k^*\}\cup \{b_{\kappa}-\ell:~ 1\leq \kappa\leq k^*\}$ and sort the $4\kappa^*$ points. This is a superset of all the change points of $\lTDC_{\bar y}$: it is clear that between any pair of consecutive points $(a',b')$ the function $\lTDC_{\bar y}$ takes the same value (because both $(a',b')$ and $(a'+\ell,b'+\ell)$ are contained in a pair consecutive points among the original points in Eq~\eqref{eq:nested_intervals}). We then compute the value of $\lTDC_{\bar y}$ on each interval using a representative $x\in (a',b')$ and omit from the $4\kappa^*$ any point in which the value of $\lTDC_{\bar y}$ doesn't change. As $\kappa^*\leq n/2$ then this takes $O(n\log(n))$. Once we have a sorted list of points on the real line where the value of $\lTDC_{\bar y}$-function change, we can now compute the value of the quasi-concave function $\lTDC_{\bar y}(I)$ in a similar fashion to computing $\TDC_{\bar y}$ in $O(\log(n))$-time.

\paragraph{Extension.} We comment that the above-algorithm works for any set of convex polytopes $C_1 \supset C_2 \supset C_3 \supset..$ with at most $n^{\poly(d)}$-vertices each. We will rely on the this fact later, when we work with \emph{projections} of the various Tukey-regions. However, one of the key uses to the $\TDC$-function we rely on is when we rotate directions so that the first axis aligns with a given direction $v$. In such a case, this is equivalent to rotating the set $P$, so we use the notation $\TDC_{\bar y}^{R_v(P)}$ and on occasion just $\TDC_{\bar y}^{R_v}$.

\paragraph{Grid Refinement.} This establishes that for any $0\leq i\leq d-1$ and any $\bar y$ there exists an efficient (with pre-processing time of $O(d!n^{d(d-1)/2})$ and query time of $O(\log(n))$) algorithm that computes $\TDC_{\bar y}(x)$ and $\lTDC_{\bar y}(x)$. But as Beimel et al~\cite{BeimelMNS19} noted, it is not a-priori clear that the coordinates of the completion lie on the same grid $\G^d$ we start with. 

To this we provide two answers. The first, which we prefer by far, is that we can keep using the same grid $\G$, and each time we find a point $p$ we instead of formally stating ``we find a point $p$ inside the convex body'' we use ``we find a point $p$ within distance $\sqrt d \Upsilon$ from a point inside the convex body.'' After all, our work already deals with approximations, so under the (rather benign) premise that the diameter of the convex body is sufficiently larger than $\Upsilon$, this little additive factor changes very little in the overall scheme.

The second answer is to use a refinement of the grid $\G$ into some $\G'$. This approach is described here, in order for our results to be comparable with the results of~\cite{BeimelMNS19,KaplanSS20} regarding finding a point inside the convex-hull. However, past this section we assume this refinement has already happened as a pre-processing step for our analysis and so we set $\G\gets \G'$ and continue with the remainder of the algorithms as is.

In order to construct the grid $\G'$, we begin with the observation of Kaplan et al~\cite{KaplanSS20} that for any $\kappa$, the vertices of $D_P(\kappa)$ lie on a grid $\tilde \G^d$ with granularity of $\tilde\Upsilon = d^{\frac{-d(d+1)}{2}}\Upsilon^{d^2}$. We also use the notation $\xi = 1/\tilde\Upsilon$. We argue inductively that when applying $\TDC$-sequentially to reveal the coordinates of a point inside the convex hull, we obtain a point $p$ whose $i$th coordinate lies on a grid of granularity lower bounded by $\prod_{j=1}^{i}(\tilde\Upsilon/\sqrt{j})^j $. Note, this makes our grid (much like the grid in~\cite{BeimelMNS19}) highly unbalanced: on the first axis it suffices to use a discretization of $\tilde \Upsilon$, but on the $d$-th axis we require a discretization of $(\tilde \Upsilon/\sqrt d)^{O(d^2)}$. 

The claim is proven inductively. Let $\{a_\kappa^i,b_\kappa^i\}_{\kappa=1}^{\kappa^*}$ be the collection of coordinates returned by Algorithm~\ref{alg:compute_TDC} in the process of computing $\TDC_{\bar y}(x)$ where for each $i$ the prefix $\bar y$ is precisely the first $i$ coordinates of $p$. Now, for $i=1$ it is clear that each $a_\kappa^1$ or $b_\kappa^1$ is a coordinate of some vertex of $D_P(\kappa)$ and so it has granularity $\tilde\Upsilon$. Thus it suffices to place the grid $\tilde \G$ on the $[0,1]$-interval find use a DP-algorithm that returns a point on this grid, and so the first coordinate $p_1$ has granularity $\tilde\Upsilon$. Now, for each $i+1$, the coordinates $a_\kappa^{i+1},b_\kappa^{i+1}$ are the $i+1$-coordinates of two vertices of $D_P(\kappa)\cap S_{(p_1,..,p_i)}$. For brevity we denote $\bar p = (p_1,..,p_i)$. Such a vertex is found when we take a $(i+1)$-facet of $D_P(\kappa)$, whose vertices we denote as $v^1, v^2,..., v^{i+1}$, and find the intersection of $S_{\bar p}$ with this facet. So we check if there exists a point in this facet whose first $i$ coordinates are $\bar p$ and if so, retrieve its $i+1$-coordinate. Namely, we see if $\bar p$ is a convex combination of the $i+1$ vectors which are the $i$-prefixes of the facet's vertices, denoted as $\bar v^1,..., \bar v^{i+1}$; if indeed for some convex combination $\bar p= \sum_{j=1}^{i+1}\lambda_j \bar v^{j}$ then the $i+1$-coordinate is $\sum_{j=1}^{i+1}\lambda_j v^{j}_{i+1}$. Finding this convex combination requires that we solve a system of $i+1$ linear constraints $M\bar\lambda = (\bar p,1)$ where each column of $M$ is composed of the $i$-dimensional vector $\bar v^j$ concatenated with $1$, and the RHS is composed of $\bar p$ concatenated with $1$. Thus $\bar \lambda = M^{-1}(\bar p,1)$, and the $i+1$ coordinate we are after is a dot-product of $ M^{-1}(\bar p,1)$ with the vector $\bar u=(v^1_{i+1}, v^2_{i+2},..., v^{i+1}_{i+1})$ whose coordinates lies on $\tilde\G$.

Note that $M$ is composed of coordinates that lie on $\tilde \G$ as it is composed of prefixes of vertices of $D_P(\kappa)$ and ones. Thus each coordinate of $M$ has granularity $\tilde \Upsilon$, and so $\xi M$ is an integer matrix with entries in $[0,\xi]$. By Hadamard's inequality, $\det(\xi M)\leq (\sqrt{i+1}\xi)^{i+1}$, and so, writing $(\xi M)^{-1}$ using the adjugate formula, each entry of $(\xi M)^{-1}$ can be written as a fraction with a denominator of not larger than $(\sqrt{i+1} \xi)^{i+1}$.
%; and so the same holds for each entry of $M^{-1}=\xi\cdot (\xi M)^{-1}$. 
By our induction hypothesis, each coordinate of $(p_1,..,p_i)$ can be written as a rational fraction with the same denominator, and the denominator doesn't exceed $\prod_{j=1}^{i}(\sqrt{j}\xi)^j$. Lastly, by definition each coordinate of $\bar u$ can be written as a fraction with denominator $\xi$, so $\xi\bar u$ is a vector of integers. This means that $\langle M^{-1}(\bar p,1),\bar u\rangle = \langle \xi\cdot(\xi M)^{-1}(p,1), \bar u\rangle = \langle (\xi M)^{-1}(\bar p,1),\xi \bar u\rangle$ can be written as a rational fraction where its denominator doesn't exceed  $\prod_{j=1}^{i+1} (\sqrt j\xi)^j$.

This proves that the level of discretization we require for any axis is bounded below by \linebreak $d^{-\frac{(d+1)(d+2)}{4}}\cdot\tilde \Upsilon^{-\frac{(d+1)(d+2)}{2}} = d^{-O(d^4)}\Upsilon^{\frac{d^4+3d^3+2d^2}{2}}\geq \Upsilon^{4d^4}$ (assuming $d^{-1}>\Upsilon$).

\paragraph{Summary.} Now that we refined the grid from $\G$ to $\G'$ with granularity $\Upsilon^{4d^4} = 2^{-\upsilon(4d^4)}$, we can apply any DP-algorithm that w.p.$\geq1-\beta$ returns a point on $\G'$ with roughly the same value of the maximal value. This gives a DP-algorithm that returns w.p.$\geq 1-\beta$ a point $x\in \G'$ with either $\TDC_{\bar y}$-value or $\lTDC_{\bar y}$-value which is $\alphaqc(\cdot,\cdot,\cdot)$-close to the max-possible value on the grid. Altogether, we have the following corollary.\footnote{Note that we have not bothered applying the advanced composition theorem~\cite{DworkRV10} since we assume $d$ is a small constant.}

\begin{corollary}
	\label{cor:DP_approx_TD}
	Fix $\epsilon>0$, $\delta\geq 0$, $\beta\in (0,1/2)$. There exists an efficient $(\epsilon,\delta)$-DP-algorithm, denoted {\tt DPPointInTukeyRegion}, that takes as input a dataset $P$ and a parameter $\kappa$ where $D_P(\kappa)\neq\emptyset$ and w.p. $\geq1-\beta$ returns a point $\bar x\in (\G')^d$ whose Tukey-depth is at least
	\begin{equation}
	\kappa-d\alphaqc(\frac \epsilon d, \frac \delta d, \frac \beta d)\geq \frac{n}{d+1} - d\alphaqc(\frac \epsilon d, \frac \delta d, \frac \beta d)
	\end{equation}
	In particular, for any $\kappa\geq0$ we return a point of Tukey-depth $\geq \kappa$ provided $n = \Omega(d\kappa+d^2\alphaqc(\frac \epsilon d, \frac \delta d, \frac \beta d) )$
	\begin{equation}
	 = \begin{cases}
	\Omega(d\kappa + d^3\frac{d^4\upsilon+\log(d/\beta)}{\epsilon}), &~\textrm{Using the $\epsilon$-DP binary-search}\\
	\tilde \Omega(d\kappa + d^3\frac{\log(d\upsilon/\beta\epsilon\delta) }{\epsilon}), &~\textrm{Using the ``Between Thresholds'' algorithm}\\
	\Omega(d\kappa + d^3\frac{8^{\log^*(d\upsilon)}\log^*(d\upsilon)\cdot \log(d\log^*(\upsilon)/\delta\beta) }{\epsilon}), &~\textrm{Using the ``RecConcave'' algorithm}\\
	\end{cases}
	\end{equation} 
\end{corollary}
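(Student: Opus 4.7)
The plan is to build the output point $\bar x = (y_1, y_2, \ldots, y_d)$ coordinate-by-coordinate, invoking the private quasi-concave optimizer of Theorem~\ref{thm:prelim_quasi_concave} once per coordinate on the function $\TDC_{\bar y}$. Specifically, at step $i = 0, 1, \ldots, d-1$, having committed to a prefix $\bar y = (y_1, \ldots, y_i)$, I apply the quasi-concave optimizer with privacy budget $(\epsilon/d, \delta/d)$ and failure probability $\beta/d$ to the univariate function $\TDC_{\bar y}(\cdot)$ over the grid $\G'$, and let $y_{i+1}$ be its output. After $d$ such steps I return $\bar x$.

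For correctness, I need three facts. First, $\TDC_{\bar y}$ is quasi-concave (Proposition~\ref{pro:TDC_is_quasy_concave}) and has global sensitivity $1$, since adding or removing a single point in $P$ changes $\TD(\cdot, P)$ of every point by at most $1$ and the max over completions preserves this. Second, Algorithm~\ref{alg:compute_TDC} together with Theorem~\ref{thm:alg_TDC_works} let us compute $\max_{x \in I}\TDC_{\bar y}(x)$ for any closed interval $I$ in time polynomial in $n$ and exponential in $d$; this is exactly the interval-max oracle required by Theorem~\ref{thm:prelim_quasi_concave}. Third, the grid-refinement paragraph above ensures that the maximizer of $\TDC_{\bar y}(\cdot)$ over $\R$ at each step is attained on the refined grid $\G'$ of granularity $\Upsilon^{4d^4}$, so the quasi-concave optimizer's guarantee on $\G'$ is not vacuous.

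For the utility bound, the key telescoping observation is that if the optimizer returns $y_{i+1}$ with $\TDC_{(y_1,\ldots,y_i)}(y_{i+1}) \geq \max_{x \in \G'}\TDC_{(y_1,\ldots,y_i)}(x) - \alphaqc(\epsilon/d, \delta/d, \beta/d)$, then the maximum Tukey-depth attainable by any completion of the new prefix $(y_1,\ldots,y_{i+1})$ drops by at most $\alphaqc$ relative to the maximum attainable from $(y_1,\ldots,y_i)$. Chaining across all $d$ steps and noting that the starting ``empty prefix'' maximum is $\kappa^* = \max_x \TD(x,P) \geq \max\{\kappa, \tfrac{n}{d+1}\}$ (the second inequality is Edelsbrunner's bound~\cite{Edelsbrunner87}), the final point has depth at least $\kappa^* - d\alphaqc(\epsilon/d, \delta/d, \beta/d)$. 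A union bound over the $d$ steps puts the overall failure probability at $\beta$.

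For privacy, each of the $d$ invocations is $(\epsilon/d, \delta/d)$-DP, and each is adaptive only through the previously-selected coordinates (which are themselves post-processings of earlier private outputs); basic composition gives overall $(\epsilon,\delta)$-DP. Plugging the three bounds of Theorem~\ref{thm:prelim_quasi_concave} for $\alphaqc(\epsilon/d, \delta/d, \beta/d)$ with $\upsilon$ replaced by $4d^4 \upsilon$ (the bit-complexity of $\G'$) and rearranging the condition ``the returned depth is $\geq \kappa$'' into a lower bound on $n$ gives the three regimes stated. The only mildly delicate point is bookkeeping the grid-refinement from $\G$ to $\G'$ inside the $\alphaqc$ expressions; this simply replaces $\upsilon$ by $O(d^4 \upsilon)$ in each bound, which is already absorbed into the stated expressions.
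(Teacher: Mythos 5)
Your proposal is correct and matches the paper's construction: the corollary is obtained exactly by running the private quasi-concave optimizer of Theorem~\ref{thm:prelim_quasi_concave} once per coordinate on $\TDC_{\bar y}$ (computable via Algorithm~\ref{alg:compute_TDC}, quasi-concave by Proposition~\ref{pro:TDC_is_quasy_concave}, sensitivity $1$), over the refined grid $\G'$, with the privacy and failure budgets split $d$ ways and combined by basic composition. Your telescoping argument for the depth loss and the substitution $\upsilon \mapsto O(d^4\upsilon)$ in the $\alphaqc$ bounds are precisely the bookkeeping the paper performs in its ``Grid Refinement'' and ``Summary'' paragraphs.
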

Again, we comment that quantitatively, the results are just as those obtained by~\cite{BeimelMNS19}. The key improvement of our work is the runtime which decreases from $\poly(1/\Upsilon)$ to $\poly(\upsilon)$. 

Similarly, we also obtain the following corollary.
\begin{corollary}
	\label{cor:DP_approx_TD_with_ell}
	Fix $\epsilon>0$, $\delta\geq 0$, $\beta\in (0,1/2)$ and also $\ell>0$. Denote
	\[ \kappa^*  = \max\{1\leq \kappa \leq \frac n 2:~ \exists p^1, p^2\in D_P(\kappa) \textrm{ where their first coordinates satisfy } p^2_1-p^1_1 \geq \ell \} \]
	Given $\ell$, there exists a $(\epsilon,\delta)$-DP-algorithm that w.p. $\geq1-\beta$ returns a pair of points $\bar x,\bar y\in (\G')^d$ s.t. $y_1-x_1 = \ell$ and where the Tukey-depth of both $\bar x$ and $\bar y$ is at least $\kappa^* -d\alphaqc(\frac \epsilon {2d-1}, \frac \delta {2d-1}, \frac \beta {2d-1})$ 
	\[ = 
	\kappa^* - 	\begin{cases}
	 O(d^2\frac{d^4\upsilon+\log(d/\beta)}{\epsilon}), &~\textrm{Using the $\epsilon$-DP binary-search}\\
	\tilde O(d^2\frac{\log(d\upsilon/\beta\epsilon\delta) }{\epsilon}), &~\textrm{Using the ``Between Thresholds'' algorithm}\\
	O(d^2\frac{8^{\log^*(\upsilon)}\log^*(\upsilon)\cdot \log(d\log^*(\upsilon)/\delta\beta) }{\epsilon}), &~\textrm{Using the ``RecConcave'' algorithm}\\
	\end{cases}\]
\end{corollary}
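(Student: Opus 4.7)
The plan is to mimic the iterative coordinate-revelation scheme of Corollary~\ref{cor:DP_approx_TD} but to couple the two points $\bar x$ and $\bar y$ through their first coordinates, so that $y_1-x_1=\ell$ is enforced from the start. Concretely, I would use $2d-1$ sequential invocations of the DP quasi-concave optimizer from Theorem~\ref{thm:prelim_quasi_concave}, each run with privacy budget $(\frac{\epsilon}{2d-1},\frac{\delta}{2d-1})$ and failure probability $\frac{\beta}{2d-1}$, composed by basic composition and a union bound.

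In the first invocation I would apply the optimizer to the function $\lTDC^P_\emptyset$ (with the given $\ell$) over the refined grid $\G'$, producing some $x_1\in\G'$ with $\lTDC^P_\emptyset(x_1)\geq \max_z \lTDC^P_\emptyset(z) - \alphaqc(\frac{\epsilon}{2d-1},\frac{\delta}{2d-1},\frac{\beta}{2d-1})$. This is legal because Proposition~\ref{pro:TDC_is_quasy_concave} gives quasi-concavity and the discussion after Theorem~\ref{thm:alg_TDC_works} gives an efficient interval oracle for $\lTDC^P_\emptyset$. The key observation is that $\max_z \lTDC^P_\emptyset(z)\geq \kappa^*$: by definition there exist $p^1,p^2\in D_P(\kappa^*)$ with $p^2_1-p^1_1\geq\ell$, and since $D_P(\kappa^*)$ is convex the convex combination of $p^1,p^2$ whose first coordinate equals $p^1_1+\ell$ also lies in $D_P(\kappa^*)$; hence $\TDC^P_\emptyset(p^1_1)\geq \kappa^*$ and $\TDC^P_\emptyset(p^1_1+\ell)\geq\kappa^*$, giving $\lTDC^P_\emptyset(p^1_1)\geq\kappa^*$. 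Setting $y_1=x_1+\ell$, both values are individually extendible to points of Tukey-depth $\geq \kappa^*-\alphaqc$.

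In the next $d-1$ invocations I would reveal $x_2,\ldots,x_d$ exactly as in Corollary~\ref{cor:DP_approx_TD}, applying the quasi-concave optimizer to $\TDC^P_{(x_1,\ldots,x_{i-1})}$ at step $i$; a simple induction shows that the max of this function is at least $\kappa^*-(i-1)\alphaqc$, so after the optimizer's additional $\alphaqc$ slack we can extend $(x_1,\ldots,x_i)$ to Tukey-depth $\geq\kappa^*-i\alphaqc$. The final $d-1$ invocations do the same symmetric procedure for $y_2,\ldots,y_d$ starting from the prefix $(y_1)$. After all $2d-1$ calls, both $\bar x$ and $\bar y$ lie in $(\G')^d$, satisfy $y_1-x_1=\ell$, and each has Tukey-depth $\geq \kappa^*-d\,\alphaqc(\frac{\epsilon}{2d-1},\frac{\delta}{2d-1},\frac{\beta}{2d-1})$, which matches the stated bound once the three choices from Theorem~\ref{thm:prelim_quasi_concave} are substituted.

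The only mild obstacle is verifying that $\lTDC^P_\emptyset$ can be fed into Theorem~\ref{thm:prelim_quasi_concave}, i.e., that it satisfies the three hypotheses: quasi-concavity follows from Proposition~\ref{pro:TDC_is_quasy_concave}; global sensitivity $1$ follows from the $\min$ of two $\TDC$-functions, each of which inherits sensitivity $1$ from the Tukey-depth; and efficient maximization over closed intervals comes from the augmented change-point construction (sorting the $4\kappa^*$ candidates $\{a_\kappa,b_\kappa,a_\kappa+\ell,b_\kappa-\ell\}$) discussed immediately after Theorem~\ref{thm:alg_TDC_works}. Everything else is a direct adaptation of the argument for Corollary~\ref{cor:DP_approx_TD}.
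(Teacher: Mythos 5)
Your proposal is correct and matches the paper's own argument: the paper likewise finds $x_1$ via one DP optimization of $\lTDC$, sets $y_1=x_1+\ell$, completes the remaining $d-1$ coordinates of each point with $\TDC$, and splits the privacy budget over the resulting $2d-1$ calls. Your convexity argument for why $\max_z\lTDC^P_\emptyset(z)\geq\kappa^*$ and your verification of the hypotheses of Theorem~\ref{thm:prelim_quasi_concave} fill in details the paper only sketches.
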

The idea behind Corollary~\ref{cor:DP_approx_TD_with_ell} is that we first find $x_1$ using our DP-algorithm for approximating $\lTDC$, and set the first coordinate of $\bar x$ to be $x_1$ whereas the first coordinate of $\bar y$ is set as $x_1+\ell$. We then continue and find the rest of the coordinates of $\bar x$ one by one, and the same for $\bar y$. Since we run the algorithm for $\lTDC$ once, and run the $\TDC$ algorithm twice for each of the $(d-1)$ remaining coordinates of $\bar x$ and $\bar y$, we divide the privacy budget by $2d-1$ per each execution of the algorithm.

\paragraph{Comment.} Note that, as mentioned above, in the reminder of the paper we either avoid refining the grid any further and rely on an additive $\Upsilon$-approximation, or alternatively refine the grid and apply the rest of the algorithms in this work after setting $\upsilon \gets (d^4+1)\upsilon$, namely setting $\upsilon$ as the $\log$-of the new grid size.

\section{Tools, Part 2: Approximating the Diameter \& Width of a Tukey-Region}
\label{sec:approx_diam_and_width}

\subsection{The Diameter}
\label{subsec:approx_diam}

Recall, given the dataset $P$, we denote the $\kappa$-Tukey region as $D_P(\kappa)$ and in this section our goal is to approximate the diameter of $D_P(\kappa)$, defined as 
$\diam_\kappa = \max_{a, b\in D_P(\kappa)}\|b-a\|$.
Yet, it is clear that the diameter, as well as other properties (such as the volume, width, etc.) of the $\kappa$-Tukey region are highly sensitive to the presence or absence of a single datum. Thus, our work returns an approximation $\ell$ which is a $(\alpha,\Delta)$-approximation, in the sense that
\begin{equation}
\label{eq:alpha_delta_approx_of_diam}
(1-\alpha) \diam_{\kappa} \leq \ell \leq \diam_{\kappa-\Delta}
\end{equation}
Clearly, since $D_P(\kappa) \subset D_P(\kappa-\Delta)$ then $\diam_{\kappa}\leq \diam_{\kappa-\Delta}$; yet the question whether $\diam_{\kappa-\Delta}$ is comparable to $\diam_{\kappa}$ or not is data-dependent. (Obviously, we comment that $\ell/(1-\alpha)$ is an upper bound on $\diam_{\kappa}$, a fact we occasionally require.)

In order to find such a diameter-approximation, we leverage on the idea of discretizing all possible directions, which is feasible in constant-dimension Euclidean space. We rely on a $\zeta$-angle cover of the unit-sphere, $V_\zeta$, for a suitably chosen $\zeta$.  Specifically, we use the following property.
\begin{proposition}
	\label{pro:discretization}
	For any $\zeta < 1/2$ and for any set $P\subset \R^d$ we have that
	\[ (1-\zeta^2)\diam(P)\leq  \max_{v\in V_\zeta} \max_{a,b\in P} \langle b-a,v\rangle \leq \diam(P)  \]
\end{proposition}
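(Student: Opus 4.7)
The plan is to prove the two inequalities separately. The upper bound is immediate from Cauchy--Schwarz: for any unit vector $v \in V_\zeta$ and any $a,b \in P$, we have $\langle b-a, v\rangle \leq \|b-a\|\cdot\|v\| = \|b-a\| \leq \diam(P)$, so taking the double-maximum preserves this bound.

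For the lower bound, I would pick a diameter-achieving pair $a^*, b^* \in P$ with $\|b^* - a^*\| = \diam(P)$, and let $u = (b^* - a^*)/\|b^* - a^*\| \in \mathbb{S}^{d-1}$ be the direction aligned with the diameter. By definition of a $\zeta$-angle cover there exists $v \in V_\zeta$ with $\angle(u,v) \leq \zeta$, and by Proposition~\ref{pro:nearest_vector_in_angle_cover} this $v$ satisfies $\|u - v\| \leq \sqrt{2}\zeta$. Since $u$ and $v$ are both unit vectors,
\[
\langle u, v\rangle = 1 - \tfrac{1}{2}\|u - v\|^2 \geq 1 - \zeta^2.
\]
Then, using that $\langle b^* - a^*, v\rangle = \|b^* - a^*\|\cdot \langle u, v\rangle$, I would conclude
\[
\max_{v \in V_\zeta}\max_{a,b \in P}\langle b-a, v\rangle \geq \langle b^* - a^*, v\rangle \geq \diam(P)\cdot(1 - \zeta^2),
\]
which is exactly the claimed lower bound.

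There is essentially no obstacle here: the only subtlety is choosing whether to bound $\langle u, v\rangle$ via the chord-length identity $\langle u,v\rangle = 1 - \tfrac12\|u-v\|^2$ followed by Proposition~\ref{pro:nearest_vector_in_angle_cover}, or directly via $\cos(\zeta) \geq 1 - \zeta^2$ (valid for $\zeta < 1/2$, exactly as used in the proof of Proposition~\ref{pro:nearest_vector_in_angle_cover}). Either route produces the same constant, so no further estimation is required.
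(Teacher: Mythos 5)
Your proposal is correct and follows essentially the same route as the paper: the upper bound via Cauchy--Schwarz, and the lower bound by projecting the diameter-achieving pair onto the nearest cover direction and bounding $\langle u,v\rangle \geq 1-\zeta^2$ (the paper uses $\cos(\zeta)\geq 1-\zeta^2$ directly, which is the same estimate your detour through Proposition~\ref{pro:nearest_vector_in_angle_cover} relies on). No gaps.
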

\begin{proof}
	On the one hand, for any $v\in V_\zeta$ and any $a,b\in P$ we have that
	$ \langle b-a,v\rangle \leq \|b-a\| \leq \diam(P)$
	so clearly the maximum over all $V_\zeta$ and all pairs of points in $P$ doesn't exceed this upper bound. On the other hand, denoting $a$ and $b$ as the two points in $P$ that obtain its diameter, and denoting
	$u_{ab}$ as the direction of the straight-line going from $a$ to $b$, we know that there exists a direction $v\in V_\zeta$ whose angle with $u_{ab}$ is at most $\zeta$, thus  
	\begin{equation*}  \langle  b-a,v \rangle  =  \langle  \big(a+ \|b-a\|\cdot u_{a,b}\big)-a,v \rangle  = \|b-a\| \cdot \langle u_{ab}, v\rangle \geq \diam(P)\cos(\zeta) \stackrel{\rm Taylor}\geq (1-\zeta^2)\diam(P)
	\label{eq:diam_projection}
	\end{equation*}
	hence the maximum is at least this lower bound.
\end{proof}

Based on the discretization $V_{\zeta}$, our approximation is fairly straight-forward, as it uses the Sparse-Vector Technique (SVT). For each $\ell$ we pose the query
\begin{equation}
q_P(\ell) = \max_{v\in V_{\zeta}} \max_{x\in \R} \lTDC^{R_v(P)}(x)
\end{equation}
where $R_v$ is a rotation that sets $v$ as the first vector basis, namely $v\stackrel{R_v}\mapsto e_1$, and $R_v(P) = \{R_v(p): ~ p \in P\}$. In other words, we rotate the standard basis so that the projection onto $v$ becomes the first coordinate, then run the query $\lTDC\big( (-\infty, \infty) \big)$.

\begin{algorithm}[ht]\caption{\label{alg:ApproximateDiameter}{\tt DPTukeyDiam} Approximate Tukey-Region Diameter}
	{\bf Input}: $P\subset \G^{d}\subset[0,1]^d$ of a given size $n$; privacy loss $\epsilon>0$ approximation parameters $\alpha,\beta >0$; Tukey depth parameter $\kappa \geq 0$.
	%\\{\bf Output}:
	\begin{algorithmic}[1]
		\STATE Set $\zeta = \sqrt{{\alpha/2}}$ and $T = \lceil \frac {2\upsilon+\ln(d)} {\alpha}\rceil$.
		\STATE Sample $X\sim \Lap{\tfrac 3 \epsilon}$.
		\STATE Set the seqeunce of $T+1$ lengths $\ell_i = \sqrt{d}(1-\alpha/2)^i$ for $i=0,1,2..., T$.
		\STATE Iterate on $i$ from $0$ to $T$. For each $i$ sample $Y_i \sim \Lap{\tfrac 3 \epsilon}$. Halt on the first $i$ satisfying \[Y_i +q_P(\ell_i) = Y_i + \max_{v\in V_{\zeta}} \max_{x\in \R} \ell_i\mhyphen\TDC^{R_v(P)}(x)  \geq \kappa - \tfrac{6}{\epsilon}\log((T+2)/\beta) + X\]
		\RETURN $\ell_i$ if halted on some $i$ and $0$ otherwise.
	\end{algorithmic}
\end{algorithm}

\begin{theorem}
	\label{thm:DP_alg_approx_diam_Tukey_region}
	Algorithm~\ref{alg:ApproximateDiameter} is a $\epsilon$-DP algorithm that w.p. $\geq 1-\beta$ returns a value $\ell$ which is $(\alpha,\Delta)$-approximation of $\diam_{\kappa}$ for $\Delta^{\rm diam}(\epsilon,\beta) = \frac{12\log((T+2)/\beta)}{\epsilon} = O(\frac{\log((\upsilon+\log(d))/\alpha\beta)}{\epsilon})$.
\end{theorem}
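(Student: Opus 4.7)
The plan is to invoke the standard analysis of the Sparse Vector Technique: Algorithm~\ref{alg:ApproximateDiameter} is AboveThreshold with noise $\Lap{3/\epsilon}$ on the threshold and on every answered query, halting (and hence releasing any information) at most once. It thus suffices to verify that $q_P(\ell) = \max_{v\in V_\zeta}\max_{x\in\R}\lTDC^{R_v(P)}(x)$ has global sensitivity $1$. Chasing the definitions: $\TD(\cdot,P)$ is $1$-sensitive by construction; the pointwise maximum over completions $(z_1,\ldots,z_{d-i-1})$ preserves sensitivity, so $\TDC^{R_v(P)}(x)$ is $1$-sensitive; then $\lTDC^{R_v(P)}(x)$ is a minimum of two $1$-sensitive quantities and is itself $1$-sensitive; finally, the outer $\max_{v,x}$ of a family of $1$-sensitive functions is $1$-sensitive. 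Standard SVT analysis then yields $\epsilon$-DP.

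\textbf{High-probability event and upper bound.} Set $\tau = \tfrac{3}{\epsilon}\log((T+2)/\beta)$. A union bound on the Laplace tail over the $T+2$ samples $X,Y_0,\ldots,Y_T$ gives $|X|,|Y_i|\leq \tau$ simultaneously with probability at least $1-\beta$, and I would condition on this ``good event'' throughout. If the algorithm halts at index $i^*$, the halting inequality rearranges to $q_P(\ell_{i^*}) \geq \kappa - \tfrac{6}{\epsilon}\log((T+2)/\beta) + X - Y_{i^*} \geq \kappa - 4\tau = \kappa - \Delta^{\rm diam}$. Unpacking $q_P$, there exist $v\in V_\zeta$ and $x\in\R$ with $\lTDC^{R_v(P)}(x) \geq \kappa - \Delta^{\rm diam}$, so by definition of $\lTDC$ there are two points $p_1,p_2\in D_P(\kappa-\Delta^{\rm diam})$ with $\langle p_2-p_1,v\rangle = \ell_{i^*}$. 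Hence $\diam_{\kappa-\Delta^{\rm diam}} \geq \|p_2-p_1\| \geq \langle p_2-p_1,v\rangle = \ell_{i^*}$, which is the desired upper bound.

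\textbf{Lower bound.} It remains to show $\ell_{i^*} \geq (1-\alpha)\diam_\kappa$ on the good event. I would repeat the argument of Proposition~\ref{pro:discretization} with $D_P(\kappa)$ in place of $P$ — its proof only invokes two extremal points inside the set, so it carries over verbatim — to obtain $v\in V_\zeta$ and $a,b\in D_P(\kappa)$ with $\langle b-a,v\rangle \geq (1-\zeta^2)\diam_\kappa = (1-\alpha/2)\diam_\kappa$. For any index $i$ with $\ell_i \leq (1-\alpha/2)\diam_\kappa$, set $x = R_v(a)_1$; then $R_v(a)_1 \leq x+\ell_i \leq R_v(b)_1$, and convexity of $D_P(\kappa)$ places a point of the segment $\overline{ab}$ whose first coordinate after $R_v$ equals $x+\ell_i$ inside $D_P(\kappa)$. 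Thus $\TDC^{R_v(P)}(x)$ and $\TDC^{R_v(P)}(x+\ell_i)$ are both at least $\kappa$, so $q_P(\ell_i)\geq \kappa$, and on the good event the halt condition fires (since $\kappa - \tau \geq \kappa - 2\tau + X$). The halting index therefore satisfies $i^* \leq i^{**}$, where $i^{**}$ is the smallest index with $\ell_{i^{**}}\leq (1-\alpha/2)\diam_\kappa$, and geometric decay gives $\ell_{i^*} \geq \ell_{i^{**}} = (1-\alpha/2)\ell_{i^{**}-1} > (1-\alpha/2)^2\diam_\kappa \geq (1-\alpha)\diam_\kappa$ (the edge case $i^{**}=0$ is trivial since $\ell_0 = \sqrt d \geq \diam_\kappa$). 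A final sanity check that $i^{**}\leq T$ uses $\diam_\kappa \geq 2^{-\upsilon}$ (vertices living on the already-refined grid) together with $-\log(1-\alpha/2)\geq\alpha/2$, which the choice $T=\lceil(2\upsilon+\ln d)/\alpha\rceil$ accommodates. The only conceptual obstacle in the whole argument is confirming $1$-sensitivity of $q_P$ through the outer $\max_{v,x}$; the rest is routine SVT bookkeeping together with convexity.
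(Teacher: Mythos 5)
Your proposal is correct and follows essentially the same route as the paper's proof: SVT privacy via the $1$-sensitivity of $q_P$, a union bound giving the good event, the discretization argument (Proposition~\ref{pro:discretization} applied to the convex set $D_P(\kappa)$) for the lower bound, and unpacking $\lTDC$ at the halting index for the upper bound. The extra details you supply — the explicit convexity argument placing a point of the segment $\overline{ab}$ at first coordinate $x+\ell_i$, and the sanity check that $T$ is large enough — are points the paper leaves implicit, but they do not constitute a different approach.
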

\begin{proof}
	First, Algorithm~\ref{alg:ApproximateDiameter} is $\epsilon$-DP since it applies the SVT over $T+1$ queries of global sensitivity $1$. (Since for any $x$ we have that $\lTDC(x)$ has global sensitivity of $1$, then a maximum of such queries over a fixed set also has global-senstivity of $1$, see~\cite{BlumLR08}.) Second, note that w.p. $\geq 1-\beta$ it holds that all the random variables in the SVT never exceed $\frac{3\log((T+2)/\beta)}{\epsilon}=\frac{\Delta} 4$ in magnitude. Under this event, since our halting condition is that the noisy answer of the query $\geq \kappa-\frac{\Delta}2$ it follows that upon reaching a query where $q_P(\ell)\geq \kappa$ we must halt, and for the query we halt on it must be that $q_P \geq \kappa-\Delta$.
	
	Now, consider any $\ell$ such that $\ell \leq (1-\alpha/2) \diam_\kappa$, and note that for the two points $a,b\in D_P(\kappa)$ obtaining $\diam_{\kappa}$ and for some $v\in V_\zeta$ we have $\langle b - a,v\rangle \geq (1-\zeta^2)\diam_{\kappa}\geq (1-\alpha/2)\diam_{\kappa}\geq \ell$. Thus, it must hold that $q_P(\ell) \geq \kappa$. Thus, if we denote $i_0 = \min\{i\in \N:~\ell_i \leq (1-\alpha/2) \diam_\kappa\}$ then $q_P(\ell_{i_0} )\geq \kappa$ and so  we halt at some $i\leq i_0$. By the minimality of $i_0$ we have that $\sqrt d(1- \alpha/2)^{i_0}\leq (1-\alpha/2)\diam_{\kappa} < \sqrt d(1-\alpha/2)^{i_0-1}$ and so we return $\ell_i \geq \ell_{i_0} = \sqrt d(1- \alpha/2)^{i_0} > (1-\alpha/2)^2\diam_\kappa>(1-\alpha)\diam_{\kappa}$.
	
	Conversely, for the $i$ on which we do halt we have that  $q_P(\ell_i) \geq \kappa - \Delta$. It follows that there exists two points $a',b'$, both of Tukey-depth at least $\kappa-\Delta$ whose projection over some $v\in V_\zeta$ is $\geq \ell_i$. But since $\langle b'-a', v\rangle \leq \|b'-a'\|\leq \diam_{\kappa-\Delta}$ then we have that $\diam_{\kappa-\Delta} \geq  \ell_i$
\end{proof}

\subsection{The Width}
\label{subsec:dp_approx_width}

We now turn our attention to the width estimation of the Tukey region $D_P(\kappa)$. Informally, the width of a set is the smallest ``sandwich'' of parallel hyperplanes that can hold the entire set; namely~--- of all pairs of parallel hyperplanes that bound the given set we pick the closest two, and the gap between them is the set's width. Formally, $\width_\kappa = \min_{v:~\|v\|=1} \max_{a,b\in D_P(\kappa)} \left| \langle b,v\rangle - \langle a,v\rangle   \right|$. Much like the in the case of the diameter, $\width_\kappa$ can also be drastically effected by the presence or absence of a single datum. Thus, our private approximation gives a $(\alpha,\Delta)$-approximation of the width, where we return a value $w$ such that
\[ (1-\alpha)\width_{\kappa} \leq w \leq (1+\alpha)\width_{\kappa-\Delta}  \]

Non-private width estimation is tougher problem than diameter estimation, and involves solving multiple LPs~\cite{DuncanGR97}. It is tempting to think that, much like the approach taken in Section~\ref{subsec:approx_diam}, a similar discertization/cover of all directions ought to produce a $(1+\alpha)$-approximation of the width. Alas, this approach fails when the width is very small, and in fact smaller or proportional to the discretization level. Somewhat surprisingly, the contra-positive is also true~--- when the discretization is up-to-scale, then we can easily argue the correctness of the discretization approach.

\begin{proposition}
	\label{pro:approx_width_via_discretization}
	Fix any $\alpha >0$. Given a set $P\subset \R^d$ with diameter $D$ and width $w$, if we set $\zeta \leq  \min\{\frac{\alpha w}{\sqrt 2 D},~1/2\}$ and take $V_\zeta$ as a $\zeta$-angle cover of the unit-sphere, then we have that
	\[ w  \leq \min_{v\in V_\zeta}\max_{a,b\in P} \langle b-a,v\rangle \leq (1+\alpha) w   \] 
\end{proposition}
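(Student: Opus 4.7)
The plan is to prove the two inequalities separately. The lower bound is immediate from the definition of width: the quantity $\max_{a,b\in P}\langle b-a,v\rangle$, for any unit vector $v$, is an upper bound on how narrow $P$ looks along direction $v$, so the minimum of this over any subset of $\mathbb{S}^{d-1}$ (in particular $V_\zeta$) cannot be smaller than $\width(P)=w$.

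For the upper bound, I would let $u^*\in\mathbb{S}^{d-1}$ be a direction that witnesses the width, so $\max_{a,b\in P}\langle b-a,u^*\rangle = w$. By Proposition~\ref{pro:nearest_vector_in_angle_cover} applied with our choice $\zeta\leq 1/2$, there exists $v\in V_\zeta$ with $\|v-u^*\|\leq \sqrt{2}\,\zeta$. For this $v$, I would decompose, for any $a,b\in P$,
\[
\langle b-a,v\rangle \;=\; \langle b-a,u^*\rangle + \langle b-a, v-u^*\rangle \;\leq\; w + \|b-a\|\cdot\|v-u^*\|,
\]
using Cauchy--Schwarz on the second summand. Since $\|b-a\|\leq D$ and $\|v-u^*\|\leq \sqrt{2}\,\zeta\leq \sqrt{2}\cdot\frac{\alpha w}{\sqrt{2}\,D}=\frac{\alpha w}{D}$, the right-hand side is at most $w+D\cdot\frac{\alpha w}{D}=(1+\alpha)w$. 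Taking the maximum over $a,b\in P$ and then the minimum over $V_\zeta$ yields $\min_{v\in V_\zeta}\max_{a,b\in P}\langle b-a,v\rangle \leq (1+\alpha)w$.

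There is no real obstacle here; the only subtle point is simply recognizing that, unlike the diameter approximation in Proposition~\ref{pro:discretization} where the error from the angle cover scales with the diameter itself, here the approximation error scales with $D$ while the quantity we approximate is $w$, so the discretization level must be tightened by a factor of $w/D$. This is precisely what the choice $\zeta \leq \alpha w/(\sqrt 2 D)$ achieves, and it is the reason the proposition requires knowing (a lower bound on) $w/D$ in advance~--- a point the paper uses in later sections when it first ensures $D_P(\kappa)$ is ``fat'' before invoking this discretization-based width approximation.
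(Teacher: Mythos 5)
Your proof is correct and follows essentially the same route as the paper's: the lower bound from $V_\zeta\subset\mathbb{S}^{d-1}$, and the upper bound by taking the width-witnessing direction, moving to its nearest cover vector via Proposition~\ref{pro:nearest_vector_in_angle_cover}, and bounding the error term by Cauchy--Schwarz with $\|b-a\|\leq D$. Your closing remark about why $\zeta$ must scale with $w/D$ is a correct reading of the role this proposition plays later in the paper.
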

\begin{proof}
	Since $V_\zeta \subset \mathbb{S}^{d-1}$ then obviously
	\[ w = \min_{v\in \mathbb{S}^{d-1}}\max_{a,b\in P} \langle b-a,v\rangle \leq \min_{v\in V_\zeta}\max_{a,b\in P} \langle b-a,v\rangle  \]
	Now, let $v$ be the direction on which the width of $P$ is obtained, i.e. $w = \max_{a,b\in P}\langle b-a,v\rangle$. Denote $u\in V_z$ as a vector whose angle with $v$ is at most $\zeta$, which by Proposition~\ref{pro:nearest_vector_in_angle_cover} is of distance $\leq \sqrt\zeta$ to $v$. This implies that for any $a,b\in P$ it holds that
	\[ \langle b-a,u\rangle =  \langle b-a,v\rangle - \langle b-a,v-u\rangle \leq w + \|b-a\|\cdot \|v-u\| \leq w + D \cdot \sqrt 2 \zeta \leq w+\alpha w  \]
	Thus $\max_{b,a\in P}\langle b-a,u\rangle \leq (1+\alpha)w$ implying that $\min_{v\in V_\zeta}\max_{a,b\in P} \langle b-a,v\rangle\leq (1+\alpha)w$.
\end{proof}

Following Proposition~\ref{pro:approx_width_via_discretization} we present our private approximation of $\width_\kappa$. This approximation too leverages on the query $\lTDC$ for a decreasing sequence of lengths $\ell_1 > \ell_2 > ...$, however, as opposed to Algorithm~\ref{alg:ApproximateDiameter}, with each smaller $\ell$ we also use a different discretization of the unit sphere. Details appear in Algorithm~\ref{alg:DP_approx_width_tukey_region}.

\begin{algorithm}[hbt]\caption{\label{alg:DP_approx_width_tukey_region}{\tt DPTukeyWidth} Approximate Tukey-Region Width}
	{\bf Input}: $P\subset \G^{d}\subset[0,1]^d$ of a given size $n$; privacy loss $\epsilon>0$ approximation parameters $\alpha,\beta >0$; Tukey depth parameter $\kappa \geq 0$; an upper-bound $D$ on the diameter of $D_P(\kappa)$ and a lower bound $B$ on the width of $D_P(\kappa)$.
	%\\{\bf Output}:
	\begin{algorithmic}[1]
		\STATE Set $T = \lceil \frac {2\ln(D/B)} {\alpha}\rceil$.
		\STATE Set the sequence of $T+1$ lengths $\ell_i = D(1-\alpha/2)^i$ for $i=0,1,2..., T$.
		\STATE Sample $X\sim \Lap{\tfrac 3 \epsilon}$.
		\FOR {$i$ {\bf from} $0$ {\bf to} $T$} 
		\STATE Set $\zeta =\min\{\frac{\alpha \ell_i}{4 D} , 1/2\} $ and $V_\zeta$ as the $\zeta$-angle cover of the unit-sphere.
		\STATE Denote $q_P(\ell_i) = \min\limits_{v\in V_\zeta}\max\limits_{x\in \R} \ell_i\mhyphen\TDC^{R_v(P)}(x)$ where $R_v$ is a rotation that sets $v$ as the first vector basis, namely $v\stackrel{R_v}\mapsto e_1$.
		\STATE Sample $Y_i \sim \Lap{\tfrac 3 \epsilon}$ and  {\bf break loop if}  
		$Y_i + q_P(\ell_i)  \geq \kappa - \frac{6}{\epsilon}\log((T+2)/\beta) + X$
		\ENDFOR
		\RETURN$\ell_i$ if halted on some $i$ and $0$ otherwise.
	\end{algorithmic}
\end{algorithm}
\begin{theorem}
	\label{thm:DP_alg_approx_width_Tukey_region}
	Algorithm~\ref{alg:DP_approx_width_tukey_region} is a $\epsilon$-DP algorithm that w.p. $\geq 1-\beta$ returns a value $\ell$ which is $(\alpha,\Delta)$-approximation of $\width_{\kappa}$ for $\Delta^{\rm width}(\epsilon,\beta) = \frac{12\log((T+2)/\beta)}{\epsilon} = O(\frac{\log((\upsilon+\log(d))/\alpha\beta)}{\epsilon})$.
\end{theorem}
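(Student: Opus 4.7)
The plan is to mirror the proof of Theorem~\ref{thm:DP_alg_approx_diam_Tukey_region} almost verbatim, replacing the ``max over directions'' argument (and Proposition~\ref{pro:discretization}) with a ``min over directions'' argument built on Proposition~\ref{pro:approx_width_via_discretization}. First I would argue privacy: the algorithm is an instantiation of the Sparse Vector Technique with a threshold and $T+1$ adaptively-chosen queries, each of the form $q_P(\ell_i) = \min_{v\in V_\zeta}\max_{x\in\R}\ell_i\mhyphen\TDC^{R_v(P)}(x)$. For fixed $v$ and fixed $x$ the function $\ell_i\mhyphen\TDC^{R_v(P)}(x)$ has global sensitivity $1$ (since $\TD(\cdot,P)$ does, and min of two sensitivity-$1$ functions has sensitivity $1$), and taking a max over $x\in\R$ followed by a min over $v\in V_\zeta$ of a fixed family of sensitivity-$1$ functions again has sensitivity $1$ (as in~\cite{BlumLR08}). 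With the noise scales of the algorithm, SVT then gives $\epsilon$-DP.

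Next I would set $\Delta^{\rm width} = \frac{12\log((T+2)/\beta)}{\epsilon}$ and union-bound so that, with probability at least $1-\beta$, each of the $T+2$ Laplace variables $X, Y_0, \ldots, Y_T$ has magnitude at most $\Delta^{\rm width}/4$. Conditioned on this event, the SVT halting threshold $\kappa - \tfrac{6}{\epsilon}\log((T+2)/\beta)+X$ lies in $[\kappa-\tfrac{3}{4}\Delta^{\rm width}, \kappa-\tfrac{1}{4}\Delta^{\rm width}]$ and $Y_i$ shifts $q_P(\ell_i)$ by at most $\Delta^{\rm width}/4$. Consequently, (i) on any $\ell_i$ with $q_P(\ell_i)\geq \kappa$ the algorithm must halt, and (ii) for the $\ell_i$ on which the algorithm does halt it holds that $q_P(\ell_i)\geq \kappa-\Delta^{\rm width}$.

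For the lower bound $(1-\alpha)\width_\kappa \leq \ell_i$, I would note that $\max_x \ell\mhyphen\TDC^{R_v(P)}(x)\geq \kappa$ whenever there exist two points in $D_P(\kappa)$ whose projection gap along $v$ is at least $\ell$ (using the characterization of $\lTDC$ via two points in the Tukey region, combined with the quasi-concavity established in Proposition~\ref{pro:TDC_is_quasy_concave}). Hence whenever $\ell \leq \width_\kappa$, every direction $v\in V_\zeta$ witnesses such a gap of at least $\width_\kappa\geq \ell$, so $q_P(\ell)\geq \kappa$. Letting $i_0$ be the smallest index with $\ell_{i_0}\leq \width_\kappa$, the geometric decay gives $\ell_{i_0}>(1-\alpha/2)\width_\kappa$, and the algorithm halts at some $i\leq i_0$, so the returned $\ell_i\geq \ell_{i_0}>(1-\alpha/2)\width_\kappa\geq (1-\alpha)\width_\kappa$.

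For the upper bound $\ell_i\leq (1+\alpha)\width_{\kappa-\Delta^{\rm width}}$, the halting condition gives $\min_{v\in V_\zeta}\max_{a,b\in D_P(\kappa-\Delta^{\rm width})}\langle b-a,v\rangle \geq \ell_i$. I would then apply Proposition~\ref{pro:approx_width_via_discretization} to the convex body $D_P(\kappa-\Delta^{\rm width})$ with its $\alpha$ parameter replaced by $\alpha/2$: the chosen $\zeta=\min\{\alpha\ell_i/(4D),1/2\}$ is fine enough because $D$ may be taken $\geq \sqrt d\geq \diam_{\kappa-\Delta^{\rm width}}$ (as $P\subset[0,1]^d$), so $\zeta\leq \alpha\width_{\kappa-\Delta^{\rm width}}/(\sqrt 2\cdot\diam_{\kappa-\Delta^{\rm width}})$ whenever $\ell_i$ is comparable to $\width_{\kappa-\Delta^{\rm width}}$. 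The proposition then yields $\ell_i\leq (1+\alpha/2)\width_{\kappa-\Delta^{\rm width}}\leq (1+\alpha)\width_{\kappa-\Delta^{\rm width}}$. The main technical subtlety I expect is precisely this last step: unlike the diameter proof, the discretization parameter $\zeta$ here scales with the candidate $\ell_i$ rather than the (unknown) true width, so one must justify that by the time we halt at $\ell_i$, $\zeta$ is already small enough for Proposition~\ref{pro:approx_width_via_discretization} to apply to the shallower region $D_P(\kappa-\Delta^{\rm width})$. This is handled either directly through a perturbation bound $\langle b-a,v^*\rangle \geq \ell_i - \diam_{\kappa-\Delta^{\rm width}}\cdot\sqrt 2\zeta$ (for $v^*$ the width-minimizer), or by absorbing a constant factor into $\alpha$; either way the resulting $(\alpha,\Delta^{\rm width})$-approximation holds.
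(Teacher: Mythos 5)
Your proposal is correct and follows essentially the same route as the paper: the paper simply factors your two geometric steps into Claim~\ref{clm:approx_width_query} (part (i) is your lower-bound argument, and part (ii) is, in contrapositive form, exactly your perturbation bound $\width_{\kappa-\Delta}\geq \ell_i-\diam_{\kappa-\Delta}\cdot\sqrt2\zeta\geq(1-\alpha/2)\ell_i$), and then runs the same SVT accuracy analysis. The subtlety you flag about $\zeta$ scaling with $\ell_i$ rather than the true width is handled in the paper exactly by your ``perturbation bound'' alternative (together with the na\"ive choice $D=\sqrt d$ so that $D\geq\diam_{\kappa-\Delta}$), so there is no gap.
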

In the statement of Theorem~\ref{thm:DP_alg_approx_width_Tukey_region} we use the na\"ive upper bound of $\diam_\kappa \leq \sqrt d$ and na\"ive lower bound of $B\geq \Upsilon$. 
Prior to proving the theorem, we need to establish two properties of the query $q_P(\ell_i)$ used by Algorithm~\ref{alg:DP_approx_width_tukey_region}.
\begin{claim}
	\label{clm:approx_width_query}
	Fix any $P\subset[0,1]^d$, any $\kappa\geq 0$, any $\ell>0$, any $D$ where $D\geq \diam(D_P(\kappa))$, any $\zeta < \frac 1 2$ and any $V_\zeta$ which is a $\zeta$-angle cover of the unit sphere. Then for the query $q_P(\ell_i) = \min\limits_{v\in V_\zeta}\max\limits_{x\in \G} \ell_i\mhyphen\TDC^{R_v(P)}(x)$ we have that (i) if $\width(D_P(\kappa)) \geq \ell$ then $q_P(\ell) \geq \kappa$; and (ii) if $\width(D_P(\kappa)) \leq (1-\alpha)\ell$ and $\zeta\leq\frac{\alpha\ell}{2 D}$ then $q_P(\ell) < \kappa$.
\end{claim}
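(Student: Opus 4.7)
The plan is first to reformulate the inner quantity in more transparent form. Unwinding the definition of $\lTDC$ together with the rotation $R_v$ (which sends $v\mapsto e_1$), fixing the first coordinate in the rotated frame to equal $x$ is the same as constraining the projection on $v$ to equal $x$ in the original frame; since $\TD$ is rotation invariant,
\begin{equation*}
\max_{x\in \R}\; \lTDC^{R_v(P)}(x) \;=\; \max_{\substack{p^1,p^2\in\R^d\\ \langle p^2-p^1,\, v\rangle = \ell}} \min\{\TD(p^1,P),\TD(p^2,P)\},
\end{equation*}
where the equality uses the standard identity $\min\{\max_A f,\max_B g\} = \max_{(a,b)\in A\times B}\min\{f(a),g(b)\}$. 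In particular, $\max_x\lTDC^{R_v(P)}(x)\geq \kappa$ iff there exist $p^1,p^2\in D_P(\kappa)$ with $\langle p^2-p^1,v\rangle = \ell$.

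For part (i), assume $\width(D_P(\kappa))\geq \ell$ and fix any $v\in V_\zeta$. By the definition of width, some $a,b\in D_P(\kappa)$ satisfy $\langle b-a, v\rangle\geq \ell$. Let $t=\ell/\langle b-a, v\rangle\in (0,1]$ and set $p^1=a$ and $p^2=a+t(b-a)$. By convexity of $D_P(\kappa)$, both $p^1,p^2\in D_P(\kappa)$, and $\langle p^2-p^1,v\rangle=\ell$ by construction. Hence the inner max is $\geq\kappa$ for every $v\in V_\zeta$, giving $q_P(\ell)\geq\kappa$.

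For part (ii), the angle-cover hypothesis $\zeta\leq\frac{\alpha\ell}{2D}$ enters. Let $u\in\mathbb{S}^{d-1}$ be a direction that realises the width of $D_P(\kappa)$, and let $v\in V_\zeta$ be its closest vector. By Proposition~\ref{pro:nearest_vector_in_angle_cover}, $\|v-u\|\leq \sqrt{2}\,\zeta \leq \frac{\alpha\ell}{\sqrt{2}\,D}$. Then for any $a,b\in D_P(\kappa)$, using $\|b-a\|\leq D$,
\begin{equation*}
\langle b-a, v\rangle \;=\; \langle b-a, u\rangle + \langle b-a, v-u\rangle \;\leq\; (1-\alpha)\ell + D\cdot \tfrac{\alpha\ell}{\sqrt{2}\,D} \;=\; \Bigl(1-\alpha\bigl(1-\tfrac{1}{\sqrt{2}}\bigr)\Bigr)\ell \;<\; \ell.
\end{equation*}
So no pair in $D_P(\kappa)$ achieves a projection-gap of exactly $\ell$ along this particular $v$; consequently every pair $p^1,p^2\in\R^d$ with $\langle p^2-p^1,v\rangle=\ell$ has at least one point outside $D_P(\kappa)$, giving $\min\{\TD(p^1,P),\TD(p^2,P)\}\leq \kappa-1$. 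Therefore the inner max is $<\kappa$, and so $q_P(\ell)\leq \max_x\lTDC^{R_v(P)}(x)<\kappa$.

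I expect no serious obstacle: once the query is rewritten as a max over pairs of points at a prescribed projection-gap, convexity of $D_P(\kappa)$ handles (i) and a perturbation estimate via the $\zeta$-angle cover handles (ii), with the constant $2$ in the hypothesis $\zeta\leq\alpha\ell/(2D)$ tuned precisely so that $(1-\alpha)\ell + D\|v-u\|$ stays strictly below $\ell$. The only minor subtlety is the discrepancy between $x\in\R$ (as in Algorithm~\ref{alg:DP_approx_width_tukey_region}) and $x\in\G$ in the statement of the claim, which is absorbed by the grid-refinement argument of Section~\ref{sec:TDC}: the relevant vertices of $D_P(\kappa)$ (and the line segments through them used in (i)) land on the refined grid, so restricting $x$ to $\G$ does not change the value of the max.
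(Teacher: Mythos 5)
Your proposal is correct and follows essentially the same route as the paper's proof: part (i) uses convexity of $D_P(\kappa)$ to produce, for every $v\in V_\zeta$, a pair of points in $D_P(\kappa)$ at projection-gap exactly $\ell$, and part (ii) is the same perturbation estimate via the nearest cover vector $u\mapsto v$ with $\|u-v\|\leq\sqrt2\zeta$ (the paper phrases it as a contradiction, you phrase it directly, and your explicit rewriting of $\max_x\lTDC^{R_v(P)}(x)$ as a max over point pairs is just making precise what the paper leaves implicit). No gaps.
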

\begin{proof}
	Clearly, if $\width(D_P(\kappa)) \geq \ell$ then due to the convexity of $D_P(\kappa)$, in any direction $v$ on can find two $a,b\in D_P(\kappa)$ where $\langle b-a,v\rangle=\ell\leq \width_{\kappa}$. Setting $x = \langle a,v\rangle$, we have that $\lTDC^{R_v(P)}(x) \geq \kappa$ and so $q_P(\ell)\geq \kappa$.
	
	Conversely, suppose $\width(D_P(\kappa)) \leq (1-\alpha)\ell$ and denote $u$ as the direction on which the width is obtained. Let $v\in V_\zeta$ be a vector whose angle with $u$ is at most $\zeta$. We argue that $\max_{a,b\in D_P(\kappa)} \lTDC^{R_v(P)}< \kappa$ which shows $q_P(\ell)<\kappa$. ASOC that there does exist a pair of points $a,b\in D_P(\kappa)$ such that $\langle b-a,v\rangle =\ell$. By the same argument as in the proof of Proposition~\ref{pro:approx_width_via_discretization}, we have that
	\[  \width_\kappa\geq \langle b-a,u\rangle = \langle b-a,v\rangle+\langle b-a,u-v\rangle \geq  \ell  - \|b-a\|\|u-v\| \geq \ell - D\cdot \sqrt 2\zeta \geq \ell - \frac{\alpha \ell}{\sqrt 2} > (1-\alpha)\ell \]
	which contradicts the assumption that $\width(D_P(\kappa))\leq (1-\alpha)\ell$.
\end{proof}
\begin{proof}[Proof of Theorem~\ref{thm:DP_alg_approx_width_Tukey_region}.]
	Much like in the proof of Theorem~\ref{thm:DP_alg_approx_diam_Tukey_region}, it is evident that our algorithm is $\epsilon$-DP since it applies the SVT over $T+1$ queries of global sensitivity $1$. Also, note that w.p. $\geq 1-\beta$ it holds that all the random variables in the SVT never exceed $\frac{3\log((T+2)/\beta)}{\epsilon}=\frac{\Delta} 4$ in magnitude. We continue our proof under the assumption this event hold, and since our algorithm adds noise to threshold of $\geq \kappa-\frac{\Delta}2$ it follows that upon reaching a query where $q_P(\ell)\geq \kappa$ we must halt, and for the query we halt on it must be that $q_P \geq \kappa-\Delta$.
	
	Based on Claim~\ref{clm:approx_width_query}, we have that by iteration $i_0 = \min\{i:~ \ell_i \leq \width(D_P(\kappa))\}$ we must halt, and so we return $\ell_i \geq \ell_{i_0} \geq (1-\alpha/2)\width(D_P(\kappa))$. Similarly, denoting the query on which we halt as $i$, then we have that if it were the case that $(1-\alpha/2)\ell_i \geq \width(D_P(\kappa-\Delta))$ then the value of the query is $<\kappa-\Delta$ and we would continue. Thus $(1-\alpha/2)\ell_i < \width(D_P(\kappa-\Delta))$ implying $\ell_i <\width(D_P(\kappa-\Delta))/(1-\alpha/2) < (1+\alpha)\width_{\kappa-\Delta}$ Thus, under our event (of bounded random noise) we return a $(\alpha,\Delta)$-approximation of the width of $D_P(\kappa)$.
\end{proof}

\paragraph{On the runtime of our algorithms.} Denoting $R_n$ as the runtime of executing the $\max_x\lTDC(x)$ query and using $T$ to denote the number of queries used in the SVT, it is fairly straight-forward that Algorithm~\ref{alg:ApproximateDiameter}  can be implemented in time $O(T|V_\zeta|)\cdot R_n = \tilde O(\frac{\upsilon}{\alpha}\cdot \alpha^{-\frac{d-1} 2} \cdot R_n)$. (Also, the algorithms in the following subsection are even easier to implement than the diameter-approximation algorithm and so they are also efficient.) Algorithm~\ref{alg:DP_approx_width_tukey_region} however requires we refine the discretization $V_\zeta$ with each iteration. In the extreme case where we only rely on the na\"ive lower bound of $B\geq \Upsilon$ and we indeed reach the last iteration $T$, the refinement we use is smaller than $\Upsilon$ making the runtime of the algorithm $\poly(1/\Upsilon)$ rather than $\poly(\upsilon)$. That is why in our work we rely on having a particular lower bound, of either $1/2d\cdot 5^d~(d!)$ or $1/(4d^{5/2}\cdot 5^d \cdot (d!))$. The reason for these particular bounds will become clear in later sections (specifically, Section~\ref{subsec:bounding_box_to_fatness}).
%However, the same analysis of the algorithm follows through if instead of using $q_P(\ell)$ we rely on the query $\tilde q_P(\ell) = \max\{ q_P(\ell), \kappa-\Delta-1\}$, which also has sensitivity of $1$ (thus the privacy preserving property isn't affected by replacing $q$ with $\tilde q$). Now, as Claim~\ref{clm:approx_width_query} suggests, \os{COMPLETE RUNTIME DISCUSSION!}

\subsection{Additional Tools: Max Projection and Large-Depth Direction}
\label{subsec:dp_directional_width}

Next, we give two similar algorithms for two particular tasks we will require later. The two algorithms may be of independent interest, although they do not provide an approximation of a well-studied quantity such as the diameter or width of a convex-set. These two algorithms are based on the SVT and they are both even simpler than the algorithm for approximating the diameter. We thus omit their proofs and merely describe them and state their correctness. 

\paragraph{Max-Projection.} First we deal the problem of approximating the max projection along any fixed direction $v$ of $D_P(\kappa)$. The algorithm for approximating the max-projection on a given direction is remarkably similar to Algorithm~\ref{alg:ApproximateDiameter} and in fact, is even simpler. Its guarantee is also similar: it returns a $(\alpha,\Delta)$-approximation of the length of the longest projection from a given point $p$ in direction $v$. Namely, it returns a number $\ell$ satisfying
$(1-\alpha)\max_{x\in D_P(\kappa)} \langle x-p,v\rangle \leq \ell\leq \max_{x\in D_{p}(\kappa-\Delta)}\langle x-p,v\rangle $.
The algorithm and its correctness are stated below. Note that the algorithm requires some a-priori knowledge about $D_P(\kappa)$~---not only does it need to be provided a point $p$ inside $D_P(\kappa)$, it also requires an upper-bound $D$ on $\diam_{\kappa}$. 

\begin{algorithm}[hbt]\caption{\label{alg:dp_approx_max_proj}{\tt DPMaxProjection:} Approximate Tukey-Region Max-Projection along a Given Direction}
	{\bf Input}: $P\subset \G^{d}\subset[0,1]^d$ of a given size $n$; privacy loss $\epsilon>0$, approximation parameters $\alpha,\beta >0$; Tukey-depth parameter $\kappa \geq 0$; a given direction (unit-length vector)~$v$; a point $p\in D_P(\kappa)$; an upper bound $D$ on the diameter of $D_P(\kappa)$.
	%\\{\bf Output}:
	\begin{algorithmic}[1]
		\STATE $T = \lceil \frac {2\upsilon+2\ln(D)} {\alpha}\rceil$.
		\STATE Sample $X\sim \Lap{\tfrac 3 \epsilon}$.
		\STATE Set the sequence of $T+1$ lengths $\ell_i = D(1-\alpha/2)^i$ for $i=0,1,2..., T$.
		\STATE Compute $x\gets \langle p,v\rangle$
		\STATE Iterate on $i$ from $0$ to $T$. For each $i$ sample $Y_i \sim \Lap{\tfrac 3 \epsilon}$. Halt on the first $i$ satisfying \[ Y_i +  \TDC^{R_v(P)}(x+\ell_i)  \geq \kappa - \tfrac{6}{\epsilon}\log((T+2)/\beta) + X\]
		where $R_v$ is a rotation that sets $v$ as the first vector basis, namely $v\stackrel{R_v}\mapsto e_1$.
		\RETURN $\ell_i$ if halted on some $i$ and $x$ otherwise.
	\end{algorithmic}
\end{algorithm}
\begin{theorem}
	\label{thm:dp_approx_max_proj}
	Algorithm~\ref{alg:dp_approx_max_proj} is a $\epsilon$-DP algorithm that w.p. $\geq 1-\beta$ returns a value $\ell$ which satisfied \[(1-\alpha)\max_{x\in D_P(\kappa)} \langle x-p,v\rangle \leq \ell\leq \max_{x\in D_{p}(\kappa-\Delta^{\rm max-proj})}\langle x-p,v\rangle\] where $\Delta^{\rm max-proj}(\epsilon,\beta) = \frac{12\log((T+2)/\beta)}{\epsilon} = O(\frac{\log((\upsilon+\log(d))/\alpha\beta)}{\epsilon})$.
\end{theorem}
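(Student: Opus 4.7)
The plan is to mirror the proof of Theorem~\ref{thm:DP_alg_approx_diam_Tukey_region} quite closely, since Algorithm~\ref{alg:dp_approx_max_proj} is essentially its single-direction single-endpoint variant. For privacy, notice that the auxiliary inputs $p$ and $v$ are not part of the dataset, so the scalar $x=\langle p,v\rangle$ is a fixed (data-independent) quantity, and the only data-dependent quantities in the halting test are the $T+1$ queries $q_i(P)=\TDC^{R_v(P)}(x+\ell_i)$. Each such query has global sensitivity $1$ (insertion/deletion of a single datum changes the Tukey depth of every point by at most $1$, and hence changes the max over a fixed completion domain by at most $1$). Applying the standard analysis of the Sparse Vector Technique over these $T+1$ queries then yields $\epsilon$-DP.

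For utility, define the good event $\mathcal{E}$ that $|X|,|Y_0|,\ldots,|Y_T|\leq \frac{3\log((T+2)/\beta)}{\epsilon}=\Delta^{\rm max-proj}/4$. By a union bound over Laplace tails, $\Pr[\mathcal{E}]\geq 1-\beta$. Condition on $\mathcal{E}$. Because the noisy threshold equals $\kappa-\Delta^{\rm max-proj}/2$ and the random offsets are of magnitude $\leq \Delta^{\rm max-proj}/4$, upon reaching any $i$ with $q_i(P)\geq \kappa$ the algorithm must halt, and for the $i$ on which it halts one has $q_i(P)\geq \kappa-\Delta^{\rm max-proj}$. For the lower bound on $\ell$, let $\ell^*=\max_{z\in D_P(\kappa)}\langle z-p,v\rangle$ and set $i_0=\min\{i\geq 0:\ell_i\leq \ell^*\}$ (which exists because $\ell_0=D\geq \diam_\kappa\geq \ell^*$). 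The key step is to argue that $q_{i_0}(P)\geq \kappa$: pick $q\in D_P(\kappa)$ attaining $\ell^*$, then by convexity of the Tukey region the segment $[p,q]$ lies in $D_P(\kappa)$; in the rotated frame $p$ has first coordinate $x$ and $q$ has first coordinate $x+\ell^*\geq x+\ell_{i_0}$, so by continuity some point on this segment has first coordinate exactly $x+\ell_{i_0}$ and witnesses $\TDC^{R_v(P)}(x+\ell_{i_0})\geq \kappa$. Hence the algorithm halts at some $i\leq i_0$, so the returned length is $\ell_i\geq \ell_{i_0}$; and for $i_0\geq 1$ the minimality of $i_0$ gives $\ell_{i_0}=(1-\alpha/2)\ell_{i_0-1}>(1-\alpha/2)\ell^*>(1-\alpha)\ell^*$, while for $i_0=0$ we simply return $\ell_0=D\geq \ell^*$.

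For the upper bound, on the halting iteration $i$ we have $\TDC^{R_v(P)}(x+\ell_i)\geq \kappa-\Delta^{\rm max-proj}$; unwinding the definition of $\TDC$, there exists $y\in D_P(\kappa-\Delta^{\rm max-proj})$ whose first coordinate in the rotated frame is $x+\ell_i$, i.e., $\langle y-p,v\rangle=\ell_i$, which immediately yields $\ell_i\leq \max_{z\in D_P(\kappa-\Delta^{\rm max-proj})}\langle z-p,v\rangle$. The only nontrivial step, and the one worth flagging, is the convexity argument that converts the promise $p\in D_P(\kappa)$ together with the existence of a far witness $q\in D_P(\kappa)$ into a witness whose first rotated coordinate is \emph{exactly} $x+\ell_{i_0}$; this is where the hypothesis ``$p\in D_P(\kappa)$'' (an input to the algorithm) is genuinely used, and without it the lower-bound direction of the guarantee would fail. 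The routine accounting of $T=O((\upsilon+\log D)/\alpha)$ then yields $\Delta^{\rm max-proj}=O(\frac{1}{\epsilon}\log\tfrac{\upsilon+\log d}{\alpha\beta})$ as stated.
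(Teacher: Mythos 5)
Your proof is correct and follows exactly the route the paper intends: the paper explicitly omits the proofs of Theorems~\ref{thm:dp_approx_max_proj} and~\ref{thm:dp_find_direction_with_large_TDC}, stating only that they are simpler variants of the SVT argument used for Theorem~\ref{thm:DP_alg_approx_diam_Tukey_region}, and your write-up is precisely that argument specialized to a single direction and a single endpoint. The one step you rightly flag as non-routine --- using convexity of $D_P(\kappa)$ and the hypothesis $p\in D_P(\kappa)$ to produce a witness with first rotated coordinate exactly $x+\ell_{i_0}$ --- is handled correctly.
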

Note that in the bound of Theorem~\ref{thm:dp_approx_max_proj} we relied on the na\"ive upper bound of $\diam_{\kappa}\leq\sqrt{d}$. Clearly, if $D\ll\sqrt d$ then we get a tighter bound on $\Delta^{\rm max-proj}$.

\paragraph{Large-TDC Direction.} Second, we deal with a problem of finding a good direction $v$ where there a point $q$, where $\langle q,v\rangle$ takes a particular value and $q$ has large Tukey-depth. Formally, our algorithm takes as input a particular point $p$ and a scalar $\lambda$, a candidate set of possible directions $V$, and a Tukey-depth parameter $\kappa$. It returns (w.h.p.) a directions $v$ where there exists a point $q$ of large Tukey-depth and where $\langle q,v\rangle = \langle p, v\rangle+\lambda$ (if such a direction exists).

\begin{algorithm}[hbt]\caption{\label{alg:dp_large_TDC_direction}{\tt DPLargeTDCDirection:} Finds $v$ where some point $q$ or large Tukey-depth exists such that $\langle q,v\rangle$ is given}
	{\bf Input}: $P\subset \G^{d}\subset[0,1]^d$; privacy loss $\epsilon>0$, approximation parameters $\beta >0$; Tukey-depth parameter $\kappa \geq 0$; a given set of directions (unit-length vectors)~$V$; a point $p$; a scalar~$\lambda$.
	%\\{\bf Output}:
	\begin{algorithmic}[1]
		\STATE $T = |V|$.
		\FOR {\textbf{each} $v\in V$}
		\STATE Compute $x\gets \langle p,v\rangle+\lambda$
		\STATE Sample $Y_v \sim \Lap{\tfrac 3 \epsilon}$.
		\STATE \textbf{if} $\TDC^{R_v}(x) + Y_v \geq \kappa - \frac{6\ln(T+1)/\beta)}{\epsilon} + X$,
		where $R_v$ is a rotation that sets $v$ as the first vector basis (namely $v\stackrel{R_v}\mapsto e_1$), \textbf{then return} $v$ and \textbf{halt}.
		\ENDFOR
		\RETURN $\bot$.
	\end{algorithmic}
\end{algorithm}

\begin{theorem}
	\label{thm:dp_find_direction_with_large_TDC}
	Algorithm~\ref{alg:dp_large_TDC_direction} is a $\epsilon$-DP algorithm that, given a point $p$, a scalar $\lambda$ and a set of possible directions $V$, returns w.p. $\geq 1-\beta$ a direction $v\in V$ such that there exists a point $q$ with Tukey-depth $\kappa-\Delta^{\rm LargeTDCDir}(\epsilon,\beta)$ where $\langle q,v\rangle = \langle p,v\rangle + \lambda$, with $\Delta^{\rm LargeTDCDir}(\epsilon,\beta) = \frac{12\ln((|V|+1)/\beta)}{\epsilon} = O(\frac{\log(|V|/\beta)}{\epsilon})$ (if such a direction exists).
\end{theorem}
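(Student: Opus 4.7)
The structure mirrors the proofs of Theorems~\ref{thm:DP_alg_approx_diam_Tukey_region} and~\ref{thm:DP_alg_approx_width_Tukey_region}: the algorithm is an instance of the Sparse Vector Technique applied to the queries $q_v(P) = \TDC^{R_v}(\langle p,v\rangle+\lambda)$, indexed by $v\in V$. (I interpret the algorithm as sampling a threshold noise $X\sim\Lap{3/\epsilon}$ once at the outset, consistently with Algorithms~\ref{alg:ApproximateDiameter}--\ref{alg:DP_approx_width_tukey_region}; this appears to be a typographical omission.) The plan is therefore to (i) verify that each query has global sensitivity $1$ to invoke the SVT for privacy, and (ii) union-bound the $|V|+1$ Laplace variables to obtain a deterministic soundness-and-completeness statement under a high-probability event.

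For privacy, fix a direction $v$. Since the rotation $R_v$ is a bijection between neighboring datasets $P$ and $P'$ (so that $R_v(P)$ and $R_v(P')$ are also neighbors), and since the Tukey depth of any fixed point changes by at most $1$ when one datapoint is added or removed, the global sensitivity of $q_v$ is $1$; the same then holds when we maximize over completions in the definition of $\TDC$ (again using that adding a single datum changes the Tukey depth of any fixed completion by at most $1$). Hence Algorithm~\ref{alg:dp_large_TDC_direction} is a canonical SVT over at most $T=|V|$ unit-sensitivity queries with noise scales $3/\epsilon$, which is $\epsilon$-DP.

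For utility, I would set $\eta = \frac{3\ln((|V|+1)/\beta)}{\epsilon}$, so $\Delta^{\rm LargeTDCDir} = 4\eta$, and condition on the high-probability event $E$ that $|X|\leq \eta$ and $|Y_v|\leq \eta$ for every $v\in V$; by the Laplace tail bound and a union bound $\Pr[E]\geq 1-\beta$. \emph{Soundness:} if the algorithm halts on some $v$, then on $E$ we have $\TDC^{R_v}(x)\geq \kappa - \tfrac{6\ln((T+1)/\beta)}{\epsilon} + X - Y_v \geq \kappa - 2\eta - 2\eta = \kappa - \Delta^{\rm LargeTDCDir}$, and unwrapping the definition of $\TDC$ yields a completion $q\in \R^d$ whose coordinate along $v$ equals $\langle p,v\rangle+\lambda$ and whose Tukey depth is at least $\kappa-\Delta^{\rm LargeTDCDir}$. \emph{Completeness:} if there exists $v^*\in V$ and a point $q^*$ of Tukey depth $\geq \kappa$ with $\langle q^*,v^*\rangle = \langle p,v^*\rangle+\lambda$, then $\TDC^{R_{v^*}}(x^*)\geq \kappa$; on $E$ the noisy query value is at least $\kappa-\eta$ while the noisy threshold is at most $\kappa-2\eta+\eta = \kappa-\eta$, so the halting condition is triggered no later than iteration $v^*$ and the algorithm returns some direction rather than $\bot$.

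The only subtlety, and the one worth emphasizing, is the asymmetry between the hypothesis and the conclusion: completeness requires a direction witnessing depth $\geq \kappa$ to guarantee halting, while soundness only delivers a direction witnessing depth $\geq \kappa-\Delta^{\rm LargeTDCDir}$~--- this is the usual bi-criteria gap induced by the SVT slack and matches the $(\alpha,\Delta)$-style guarantees used throughout the paper. Everything else (including efficient evaluation of each query $\TDC^{R_v}(x)$) is handled by the machinery of Section~\ref{sec:TDC}, so no new ideas are needed beyond the standard SVT accounting.
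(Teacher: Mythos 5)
Your proposal is correct and follows exactly the route the paper intends: the authors explicitly omit this proof as a routine instance of the Sparse Vector Technique argument already given for Theorems~\ref{thm:DP_alg_approx_diam_Tukey_region} and~\ref{thm:DP_alg_approx_width_Tukey_region}, and your privacy accounting (unit sensitivity of each $\TDC^{R_v}$ query), your union bound with $\eta=\frac{3\ln((|V|+1)/\beta)}{\epsilon}$ giving the slack $4\eta=\Delta^{\rm LargeTDCDir}$, and your identification of the unsampled $X\sim\Lap{\frac 3 \epsilon}$ as a typographical omission are all consistent with that template.
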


\section{Private Approximation of a Kernel~--- For a ``Fat'' Tukey Region}
\label{sec:DP_kernel_for_fat_Tukey_region}

\subsection{Different Notions of Kernels and Various Definitions of Fatness}
\label{subsec:kernel_definitions}

Before we give our algorithm(s) for finding a kernel of a Tukey-region, we first discuss \emph{our goal}~--- what it is we wish to output, and \emph{our premise}~--- the kinds of datasets on which we are guaranteed to release such outputs. Recall, our goal is to give a  differentially private algorithm that outputs a collection of points $\calS$ which is a $(\alpha,\Delta)$-kernel of $D_P(\kappa)$. Namely, this $\calS$ satisfies that
\begin{equation}
\label{eq:kernel_as_CH}
(1-\alpha) D_P(\kappa)\subset \CH(\calS) \subset (1+\alpha)D_P(\kappa-\Delta) 
\end{equation}
Clearly, if two convex bodies ${\cal A}\subset {\cal B}$ then for any projection $\Pi$ we have that $\Pi({\cal A})\subset \Pi({\cal B})$. (In fact, this holds for any affine transformation, not just projections.) In particular, if $\calS$ is a $(\alpha,\Delta)$-kernel, then: 
\begin{equation}
\label{eq:directional_approximation_kernel}
\forall\textrm{ direction $u$,}~~~~~(1-\alpha) \max_{p\in D_{P}(\kappa)} \langle p,u\rangle \leq \max_{p\in \CH(\calS)}\langle p,u\rangle \leq (1+\alpha) \max_{p\in D_{P}(\kappa-\Delta)}\langle p,u\rangle
\end{equation}
It is actually easy to see that the two are equivalent conditions.
\begin{proposition}
	\label{pro:approx_every_direction_is_approximate_kernel}
	Assume that the origin $\bar 0$ is a point in $D_P(\kappa)$. Let $\calS$ be a set that satisfy that for every direction $u$ it holds that $(1-\alpha) \max_{p\in D_{P}(\kappa)} \langle p,u\rangle \leq \max_{p\in \CH(\calS)}\langle p,u\rangle \leq (1+\alpha) \max_{p\in D_{P}(\kappa-\Delta)}\langle p,u\rangle$; then $\calS$ is a $(\alpha,\Delta)$-kernel.
\end{proposition}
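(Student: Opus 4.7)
The plan is to take $c_1 = c_2 = \bar 0$ and then invoke the standard support-function characterization of closed convex sets: for any closed convex set $K\subset \R^d$,
\[ K = \bigl\{x \in \R^d :~ \langle x,u\rangle \leq h_K(u) \text{ for all } u\in \mathbb{S}^{d-1} \bigr\},\quad\text{where } h_K(u) = \sup_{p\in K}\langle p,u\rangle.\]
Both $D_P(\kappa), D_P(\kappa-\Delta)$ (as Tukey-regions) and $\CH(\calS)$ (as the convex hull of a finite point-set) are closed convex, so they each equal the intersection of the half-spaces they inscribe. This reduces the two containment statements in Definition~\ref{def:alpha_delta_kernel} to a pair of inequalities on support functions over all unit directions $u$, which is precisely what the hypothesis provides.

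First I would verify the inner containment $(1-\alpha)D_P(\kappa)\subset \CH(\calS)$. Any point in $(1-\alpha)D_P(\kappa)$ has the form $(1-\alpha)y$ for $y \in D_P(\kappa)$. Then for every $u\in \mathbb{S}^{d-1}$,
\[ \langle (1-\alpha)y, u\rangle \;=\; (1-\alpha)\langle y,u\rangle \;\leq\; (1-\alpha) h_{D_P(\kappa)}(u) \;\leq\; h_{\CH(\calS)}(u), \]
where the last inequality is exactly the hypothesis. By the support-function characterization of $\CH(\calS)$, $(1-\alpha)y \in \CH(\calS)$. Second, for the outer containment $\CH(\calS)\subset (1+\alpha)D_P(\kappa-\Delta)$, take any $x\in \CH(\calS)$. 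Then for every $u$,
\[ \langle x,u\rangle \;\leq\; h_{\CH(\calS)}(u) \;\leq\; (1+\alpha) h_{D_P(\kappa-\Delta)}(u) \;=\; h_{(1+\alpha)D_P(\kappa-\Delta)}(u),\]
so applying the characterization to $(1+\alpha)D_P(\kappa-\Delta)$ yields $x\in (1+\alpha)D_P(\kappa-\Delta)$.

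The one subtlety worth flagging, and really the reason the hypothesis $\bar 0 \in D_P(\kappa)$ is needed, is the validity of the multiplicative scalings $(1-\alpha)$ and $(1+\alpha)$ applied to the support functions. Because $\bar 0\in D_P(\kappa)\subset D_P(\kappa-\Delta)$, we have $h_{D_P(\kappa)}(u),h_{D_P(\kappa-\Delta)}(u)\geq \langle \bar 0, u\rangle = 0$ for every $u\in \mathbb{S}^{d-1}$. Thus multiplying by $(1-\alpha)\in(0,1)$ can only decrease the support function and multiplying by $(1+\alpha)>1$ can only increase it, so the inequality signs in the two displays above go the right way (and in particular the identity $h_{cK}(u)=c\cdot h_K(u)$ for $c\geq 0$ applies). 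Without some containment guaranteeing non-negativity of these support functions, the scalings could flip directions for those $u$ where $h_{D_P(\kappa)}(u)<0$, and the argument would break. This is the only delicate point; the rest is bookkeeping with support functions.
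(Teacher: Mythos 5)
Your proof is correct, and it follows essentially the same route as the paper: the paper writes $\CH(\calS)$ (and, for the reverse containment, $D_P(\kappa-\Delta)$) as a finite intersection of closed half-spaces $\{x:\langle x,v_i\rangle\leq\lambda_i\}$ and compares directional maxima, which is exactly your support-function characterization restricted to the finitely many facet normals. The one difference worth noting concerns your closing paragraph. In the paper's finite-half-space version, the hypothesis $\bar 0\in D_P(\kappa)$ is used to deduce $\lambda_i\geq 0$, which is needed to handle the case $\langle p,v_i\rangle<0$ separately. In your packaging no such case analysis ever arises: the chain $\langle(1-\alpha)y,u\rangle=(1-\alpha)\langle y,u\rangle\leq(1-\alpha)h_{D_P(\kappa)}(u)\leq h_{\CH(\calS)}(u)$ is valid for every $u$ regardless of the sign of $h_{D_P(\kappa)}(u)$ (the first inequality only uses $1-\alpha>0$, the second is the hypothesis verbatim, and $h_{cK}=c\,h_K$ holds for any $c\geq 0$ without non-negativity of $h_K$). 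So your last paragraph slightly misidentifies the role of $\bar 0\in D_P(\kappa)$: in your argument it is never actually invoked; its real purpose in the proposition is to make the premise (which already mixes a $(1-\alpha)$ lower bound with a $(1+\alpha)$ upper bound on the same support values) satisfiable, and to justify taking $c_1=c_2=\bar 0$ in Definition~\ref{def:alpha_delta_kernel}. This is a cosmetic imprecision, not a gap.
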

\begin{proof}
	$\CH(\calS)$ is the intersection of a finite number, $k$, of closed half-spaces. Thus there exist $k$ vectors $v_1,...v_k$ and $k$ scalars $\lambda_1,...\lambda_k$ such that $\CH(\calS) = \{x\in \R^d: \forall v_i, ~~ \langle x,v_i\rangle\leq  \lambda_i\}$. For any $v_i,\lambda_i$ we have that for any $p\in D_P(\kappa)$ where $\langle p,v_i\rangle \geq 0$ it holds that
	\[ \langle(1-\alpha) p, v_i\rangle \leq (1-\alpha)\max_{p\in D_P(\kappa)}\langle p,v_i\rangle \leq \max_{p\in \CH(\calS)}\langle p,v_i\rangle \leq \lambda_i \]
	In particular, for the origin, $\bar 0\in D_P(\kappa)$ this shows that $(1-\alpha)\langle \bar 0,v_i\rangle = 0\leq \lambda_i$ proving that $\lambda_i$ is non-negative. Thus, for any $p$ where $\langle p,v_i\rangle < 0$ we obviously have
	$\langle(1-\alpha) p, v_i\rangle < 0 \leq \lambda_i$. And so $(1-\alpha)D_P(\kappa)\subset \CH(\calS)$. The proof that $\CH(\calS)\subset(1+\alpha)D_P(\kappa-\Delta)$ is symmetric, since we now know $\bar  0\in (1-\alpha)D_P(\kappa)\subset \CH(\calS)$.
\end{proof}
It is worth noting that in addition to the property in~\eqref{eq:directional_approximation_kernel}, if $\calS$ is a $(\alpha,\Delta)$-kernel of $D_P(\kappa)$ then it also holds that
\begin{equation}
\label{eq:directional_kernel_width}
\forall\textrm{ direction $u$,}~~~~~(1-\alpha) \max_{p,q\in D_{P}(\kappa)} \langle p-q,u\rangle \leq \max_{p,q\in \CH(\calS)}\langle p-q,u\rangle \leq (1+\alpha) \max_{p,q\in D_{P}(\kappa-\Delta)}\langle p-q,u\rangle
\end{equation}
In the standard, non-private, setting, the definition of a directional-kernel~\cite{AgarwalHV04} is a set $\calS$ that is required to satisfy both the property in~\eqref{eq:directional_kernel_width} (with $\Delta=0$) and the property that $\calS\subset D_P(\kappa)$. These two properties together yield the desired property of a kernel given in~\eqref{eq:kernel_as_CH}. It turns out that in our setting, with $\Delta>0$, since it doesn't necessarily hold that $\calS\subset D_P(\kappa)$, then property~\eqref{eq:directional_kernel_width} does \emph{not} guarantee that we output $(\alpha,\Delta)$-kernel. Figure~\ref{fig:approx_width_not_a_kernel} illustrates such a setting. 

\begin{figure}[t]
	\begin{minipage}[c]{0.575\textwidth}
	\caption{\label{fig:approx_width_not_a_kernel}
		An example showing that the property of Equation~\eqref{eq:directional_kernel_width} doesn't imply that $(1-\alpha)D_P(\kappa)\subset \CH(\calS)$. Suppose $D_P(\kappa)$ is an equilateral triangle of edge-length $2r$ and $\calS$ happens to be a ball of diameter $2\cdot0.99\cdot r$ (and $D_P(\kappa-\Delta)$ is a much larger region). Note that $\calS$ does satisfy Equation~\eqref{eq:directional_kernel_width} for $\alpha=0.01$ yet $0.99 D_P(\kappa) \nsubset \CH(\calS)$. (E.g, the ball containing $D_P(\kappa)$ must have radius $\geq \frac{2}{\sqrt 3} r$ whereas $\CH(\calS)$ is contained inside a ball of radius $0.99r$.)
	}
	\end{minipage}
\hfill
\begin{minipage}[c]{0.4\textwidth}
	\centering\includegraphics[scale=0.3]{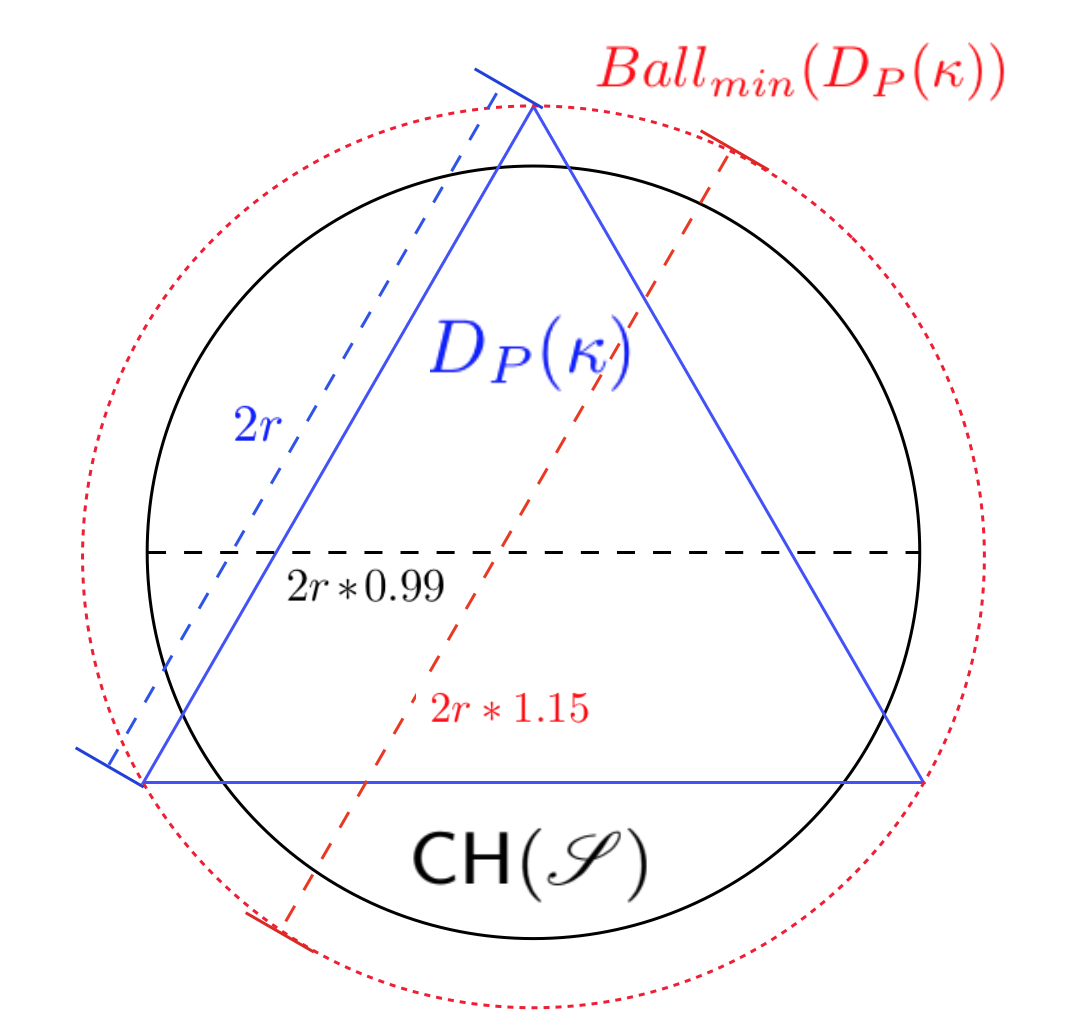}
\end{minipage}
\end{figure}

In our work, we give algorithms that satisfy variations of property~\eqref{eq:directional_approximation_kernel}. We give now the respective claims showing that each variation indeed yields a $(\alpha',\Delta)$-kernel.

\begin{claim}
	\label{clm:additive_width_approx_yields_kernel}
	Let $\calS$ be a set that satisfies the following property in regards to $D_P(\kappa)$ and $D_P(\kappa-\Delta)$:
	\begin{equation}
	\label{eq:additive_width_kernel}
	\forall\textrm{ direction $u$,}~~~~~\max_{p\in D_{P}(\kappa)} \langle p,u\rangle - \alpha\cdot\width_{\kappa} \leq \max_{p\in \CH(\calS)}\langle p,u\rangle \leq  \max_{p\in D_{P}(\kappa-\Delta)}\langle p,u\rangle + \alpha\cdot \width_{\kappa-\Delta}
	\end{equation}
	then, setting $\alpha' = 2\alpha\sqrt{d+\frac 1 2}$, there exists two vectors $p_1$ and $p_2$ s.t. we can shift $D_P(\kappa)$ and $D_P(\kappa-\Delta)$ and have that  $(1-\alpha')(D_{P}(\kappa)-p_1) \subset \CH(\calS)-p_1$ and $\CH({\calS})-p_2\subset (1+\alpha')(D_P(\kappa-\Delta)-p_2)$.
\end{claim}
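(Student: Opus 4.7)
The plan is to translate the claim through support functions, which is the natural language for both width approximations and multiplicative containments. Writing $h_K(u) := \max_{p \in K}\langle p, u\rangle$ and using the standard identities $h_{A+v}(u) = h_A(u) + \langle v, u\rangle$, $h_{\lambda A} = \lambda h_A$ for $\lambda \ge 0$, and $A \subset B \Leftrightarrow h_A \le h_B$, the hypothesis~\eqref{eq:additive_width_kernel} reads
\[
h_{D_P(\kappa)}(u) - \alpha\cdot \width_\kappa \;\le\; h_{\CH(\calS)}(u) \;\le\; h_{D_P(\kappa-\Delta)}(u) + \alpha\cdot \width_{\kappa - \Delta}
\]
for every unit $u$. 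Meanwhile, the desired first containment $(1 - \alpha')(D_P(\kappa) - p_1) \subset \CH(\calS) - p_1$ unfolds (by expanding both sides as support functions) into $h_{D_P(\kappa)}(u) - h_{\CH(\calS)}(u) \le \alpha'\bigl(h_{D_P(\kappa)}(u) - \langle p_1, u\rangle\bigr)$ for every unit $u$. Combining the two displays, it suffices to find $p_1$ with $h_{D_P(\kappa)}(u) - \langle p_1, u\rangle \ge \alpha \width_\kappa /\alpha' = \width_\kappa / (2\sqrt{d+1/2})$ for every unit $u$.

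The key observation is that this last condition is equivalent to $B(p_1, \width_\kappa / (2\sqrt{d+1/2})) \subset D_P(\kappa)$, since the support function of $B(p, r)$ in direction $u$ is $\langle p, u\rangle + r\|u\|$. So the entire problem reduces to the following geometric fact: every convex body $K \subset \R^d$ admits an inscribed ball of radius at least $\width(K)/(2\sqrt{d+1/2})$. This is a variant of Steinhagen's classical inequality, with the extremal configuration essentially a regular simplex. I would invoke this fact for $K = D_P(\kappa)$ and let $p_1$ be the Chebyshev center of $D_P(\kappa)$, i.e., the center of its largest inscribed ball.

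The second containment $\CH(\calS) - p_2 \subset (1+\alpha')(D_P(\kappa-\Delta) - p_2)$ is handled symmetrically: the same support-function algebra reduces it to $h_{\CH(\calS)}(u) - h_{D_P(\kappa - \Delta)}(u) \le \alpha'\bigl(h_{D_P(\kappa - \Delta)}(u) - \langle p_2, u\rangle\bigr)$, which combined with the hypothesis requires an inscribed ball of radius $\width_{\kappa - \Delta}/(2\sqrt{d+1/2})$ inside $D_P(\kappa - \Delta)$; taking $p_2$ as its center finishes the argument.

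The hard part is justifying the specific constant $2\sqrt{d+1/2}$ in the inradius-to-width comparison. The sharp Steinhagen bound has slightly different forms for even and odd $d$, but the constant $2\sqrt{d+1/2}$ is a clean overestimate that dominates all dimensions and admits a uniform proof (e.g.\ via a John-ellipsoid / volume-comparison argument, or by direct verification against the extremal simplex). Once this geometric ingredient is in hand, the rest is a routine unwinding of the support-function inequalities above.
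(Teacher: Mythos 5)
Your proof is correct, and the key geometric ingredient is the same one the paper invokes: a Steinhagen-type inradius bound. The paper cites (from Gritzmann--Klee) that a convex body of width $w$ contains a ball of radius $\geq \frac{w\sqrt{d+2}}{2(d+1)}$, and this dominates your required $\frac{w}{2\sqrt{d+1/2}}$ because $(d+2)(d+\tfrac12)=(d+1)^2+\tfrac d2\geq(d+1)^2$; that one-line comparison is the only thing you left implicit. For the second containment your support-function computation is essentially a repackaging of the paper's argument, which writes $D_P(\kappa-\Delta)=\bigcap_i\{x:\langle x-p_2,v_i\rangle\leq\lambda_i\}$ with each $\lambda_i\geq r$ --- those $\lambda_i$ are exactly $h_{D_P(\kappa-\Delta)}(v_i)-\langle p_2,v_i\rangle$ in your notation. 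For the first containment you genuinely diverge, and your route is shorter: you inscribe the ball in $D_P(\kappa)$ itself, so the needed chain $h_{D_P(\kappa)}(u)-h_{\CH(\calS)}(u)\leq\alpha\,\width_\kappa\leq\alpha'\left(h_{D_P(\kappa)}(u)-\langle p_1,u\rangle\right)$ is immediate. The paper instead first proves $\width(\CH(\calS))\geq(1-2\alpha)\width_\kappa$, inscribes the ball in $\CH(\calS)$, deduces $D_P(\kappa)-p_1\subset\left(1+\tfrac{2\alpha}{1-2\alpha}\sqrt{d+\tfrac12}\right)(\CH(\calS)-p_1)$, and then inverts the blow-up factor back to a shrinking factor. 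Your argument avoids both the width-comparison step and the inversion, and in particular does not need $2\alpha<1$ for that step to make sense. The only hygiene points worth stating are the standing assumption $\alpha'<1$ (so that multiplying by $1-\alpha'$ is a genuine shrinking and positive homogeneity of support functions applies) and the observation that for a convex set containing the origin, shrinking by a smaller factor only makes the set smaller, so proving the containment with the possibly better factor $1-\eta/r$ implies it with $1-\alpha'$.
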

\begin{proof}
	We first argue about the relation between $\CH({\calS})$ and $D_P(\kappa-\Delta)$. Denote the convex polytope $D_P(\kappa-\Delta)$ as the intersection of a finite number, $k$, of closed half-spaces: $\{x\in \R^d: \langle x,v_i\rangle \leq\lambda_i\}$. We continue and leverage on the fact (see~\cite{GritzmannK92}) that any convex body with width $w$ must contain a ball of radius at least $\frac{w\sqrt{d+2}}{2(d+1)}$. Let $p_2$ be the center of this ball, and so $D_P(\kappa-\Delta)-p_2$ is a shift of $D_P(\kappa-\Delta)$ where this ball is centered at the origin. 
	%Now, we denote the bounding hyperplanes of the shifted $D_P(\kappa-\Delta)-p_2 = \{x:~ \langle x,v_i\rangle \leq \lambda_i\}$. 
	Note that the origin is not only a point inside this shifted convex polytope, it is also a point of distance at least $\frac{\width_{\kappa-\Delta}\sqrt{d+2}}{2d+2}$ of all hyperplanes bounding it. Thus, based on for this particular shift, we can redefine the closed halfspaces and have that  $D_P(\kappa-\Delta)=\bigcap_i\{ x\in \R^d: \langle x-p_2,v_i\rangle \leq \lambda_i \}$ where we also have that each $\lambda_i \geq \frac{\width_{\kappa-\Delta}\sqrt{d+2}}{2d+2}$.
	
	For any closed halfspaced parameterized by $v_i,\lambda_i$ we have that for any $x\in \CH(\calS)$ it holds that
	\begin{align*} 
	\langle x-p_2,v_i\rangle &\leq \max_{x\in \CH(\calS)}\langle x,v_i\rangle - \langle p_2,v_i\rangle \leq \max_{x\in D_{P}(\kappa-\Delta)}\langle x,v_i\rangle +\alpha\cdot width_{\kappa-\Delta} - \langle  p_2,v_i\rangle 
	\cr & = \max_{x\in D_P(\kappa-\Delta)} \langle x-p_2,v_i\rangle + \alpha\cdot \width_{\kappa-\Delta} \leq \lambda_i + \frac{2d+2}{\sqrt{d+2}}\alpha \lambda_i \leq (1+2\alpha\sqrt{d+\tfrac 1 2})\lambda_i  
	\end{align*}
	This proves that $\CH(\calS)-p_2\subset (1+\alpha')(D_P(\kappa-\Delta)-p_2)$.
	
	Next, we show that there exists $p_1$ such that $(1-\alpha)(D_P(\kappa)-p_1) \subset \CH(\calS)-p_1$. 
	We start by comparing $\width(\CH(\calS))$ to $\width_\kappa$.
	Let $u$ be the direction on which the width of $\CH(\calS)$ is obtained. We thus have that 
	\begin{align*}
	\width(\CH(\calS)) &= \max_{a,b\in \CH(\calS)}\langle a-b,u\rangle = \max_{a\in \CH(\calS)}\langle a,u\rangle- \min_{b\in \CH(\calS)}\langle b,u\rangle 
	\cr &=  \max_{a\in \CH(\calS)}\langle a,u\rangle + \max_{b\in \CH(\calS)}\langle b,-u\rangle 
	\cr &\geq \max_{a\in D_P(\kappa)}\langle a,u\rangle + \max_{b\in D_P(\kappa)}\langle b,-u\rangle - 2\alpha\cdot  \width_{\kappa} 
	\cr &= \max_{a,b\in D_P(\kappa)}\langle a-b,u\rangle - 2\alpha\cdot  \width_{\kappa} \geq \width_{\kappa}(1-2\alpha)
	\end{align*}
	This implies that in any direction $u$ we have that $\max_{p\in D_P(\kappa)}\langle p,u\rangle \leq \max_{p\in \CH(\calS)} \langle p,u\rangle + \alpha\cdot \width_{\kappa} \leq  \max_{p\in \CH(\calS)} \langle p,u\rangle + \frac{\alpha}{1-2\alpha}\cdot \width(\CH(\calS))$. We can now apply the above argument and have that \linebreak $D_P(\kappa)~-~p_1~\subset (1+\frac{2\alpha}{1-2\alpha}\sqrt{d+\tfrac 1 2})(\CH(\calS)-p_1)$ for some $p_1$. Thus, $(1-{2\alpha}\sqrt{d+\tfrac 1 2})(D_P(\kappa)-p_1)\subset \CH(\calS)-p_1$.
\end{proof}

As discussed, our first algorithm yields a set $\calS$ that (w.h.p.) satisfies the premise of Claim~\ref{clm:additive_width_approx_yields_kernel} and therefore it is a  $(\alpha,\Delta)$-kernel. Similarly, the second algorithm we provide (under slightly different conditions) yields the premise of the following claim.

\begin{claim}
	\label{clm:(1-alpha)_width_approx_yields_kernel}
	Fix $\alpha<1/6$ and let $\calS\subset D_P(\kappa-\Delta)$ be a set that satisfy the following property in regards to $D_P(\kappa)$. There exists a point $c\in D_P(\kappa)\cap \calS$ such that:
	\begin{equation}
	\label{eq:(1-alpha)_width_approximation_yields_kernel}
	\forall\textrm{ direction $u$,}~~~~~(1-\alpha)\max_{p\in D_{P}(\kappa)} \langle p-c,u\rangle \leq \max_{p\in \CH(\calS)}\langle p-c,u\rangle + \alpha\cdot \width_{\kappa}
	\end{equation}
	then there exists a vector $b$ such that we can shift $D_P(\kappa)$ and $\CH(\calS)$ by $b$ and have that  \[D_{P}(\kappa)-b \subset (1+\alpha')\left(\CH(\calS)-b\right)\] for $\alpha' = \frac{\alpha}{1-\alpha}(1+4\sqrt{d+\frac 1 2})$; and we also have that $\CH({\calS})\subset D_P(\kappa-\Delta)$. Thus, obviously, $\calS$ is a $\left(\frac{\alpha'}{1+\alpha'},\Delta\right)$-kernel of $D_P(\kappa)$.
\end{claim}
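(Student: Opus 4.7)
The plan is to mirror the strategy used in Claim~\ref{clm:additive_width_approx_yields_kernel}: reduce the target containment to a per-direction inequality for support functions, and find a shift $b$ making the inequality hold. The containment $\CH(\calS)\subset D_P(\kappa-\Delta)$ is immediate since $\calS\subset D_P(\kappa-\Delta)$ and $D_P(\kappa-\Delta)$ is convex, so the only substantive task is to find $b$ with $D_P(\kappa)-b\subset(1+\alpha')(\CH(\calS)-b)$.

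First I would establish $\width(\CH(\calS))\geq(1-3\alpha)\width_\kappa$. Applying the one-sided hypothesis to $u$ and to $-u$ and summing yields, for every unit direction $u$, $(1-\alpha)\max_{p,q\in D_P(\kappa)}\langle p-q,u\rangle\leq\max_{p,q\in\CH(\calS)}\langle p-q,u\rangle+2\alpha\width_\kappa$; taking $u$ to be the direction minimizing the right-hand side and using that the width of $D_P(\kappa)$ in any direction is at least $\width_\kappa$ gives the bound. By Gritzmann-Klee, $\CH(\calS)$ then contains an inscribed ball of radius $r\geq(1-3\alpha)\width_\kappa\sqrt{d+2}/(2(d+1))$ centered at some $b_0$. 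Setting $k=4\sqrt{d+1/2}$, the inequality $(d+1/2)(d+2)\geq(d+1)^2$ combined with $\alpha\leq 1/6$ immediately yields $kr\geq\width_\kappa$, the numerical relation we will need.

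The main obstacle is the choice of shift $b$; taking $b=b_0$ does not suffice, because $c\in\calS$ can be far from the inscribed-ball center, making $\langle b_0-c,v\rangle$ uncontrollable for some bad $v$. Instead I would pick the convex combination $b=(kb_0+c)/(k+1)$, which still lies in $\CH(\calS)$. With this choice $b-c=\tfrac{k}{k+1}(b_0-c)$ and $b_0-b=\tfrac{1}{k+1}(b_0-c)$, so for every unit direction $v$ the point $b_0+rv\in\CH(\calS)$ witnesses
\[ \max_{q\in\CH(\calS)}\langle q-b,v\rangle\geq\langle b_0+rv-b,v\rangle=\tfrac{1}{k+1}\langle b_0-c,v\rangle+r=\tfrac{1}{k}\langle b-c,v\rangle+r. \]
Multiplying by $k$ and using $kr\geq\width_\kappa$ gives the key per-direction inequality
\[ k\cdot\max_{q\in\CH(\calS)}\langle q-b,v\rangle\geq\langle b-c,v\rangle+\width_\kappa\qquad\forall\,v\in\mathbb{S}^{d-1}. \]

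To conclude, I would rewrite the hypothesis using $\max_{p}\langle p-c,v\rangle=\max_{p}\langle p-b,v\rangle+\langle b-c,v\rangle$ on both sides, obtaining
\[ \max_{p\in D_P(\kappa)}\langle p-b,v\rangle\leq\frac{\max_{q\in\CH(\calS)}\langle q-b,v\rangle+\alpha\langle b-c,v\rangle+\alpha\width_\kappa}{1-\alpha}. \]
With $\alpha'=\tfrac{\alpha}{1-\alpha}(1+k)$ one checks that $(1-\alpha)(1+\alpha')-1=\alpha k$, so the right-hand side is at most $(1+\alpha')\max_{q\in\CH(\calS)}\langle q-b,v\rangle$ precisely when the key bound above holds, which it does for every $v$. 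Since $b$ is interior to $\CH(\calS)$, the resulting support-function inequality in every direction yields the desired containment $D_P(\kappa)-b\subset(1+\alpha')(\CH(\calS)-b)$, completing the proof.
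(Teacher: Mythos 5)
Your proof is correct and follows essentially the same route as the paper's: the same width lower bound $\width(\CH(\calS))\geq(1-3\alpha)\width_\kappa$, the same Gritzmann--Klee inscribed ball, and the same shift $b$ (your $b=(kb_0+c)/(k+1)$ with $k=4\sqrt{d+\tfrac12}$ is exactly the paper's $b=(1-\beta)z+\beta c$ with $\beta=1/(1+k)$), leading to the same $\alpha'$. The only difference is presentational: where the paper verifies the containment constraint-by-constraint on the halfspace representation of $\CH(\calS)$ (using the identity $(1+\alpha')\beta=\beta+\frac{\alpha}{1-\alpha}$ to avoid casing on the sign of $\langle z-c,v_i\rangle$), you verify the equivalent support-function inequality uniformly over all directions via the witness point $b_0+rv$, which is a clean and valid way to package the same computation.
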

\begin{proof}
	First, since $\calS\subset D_P(\kappa-\Delta)$ then it is obvious that $\CH(\calS)\subset D_P(\kappa-\Delta)$. The difficulty lies in showing the first part.
	
	We start by a similar argument to the one in the proof of Claim~\ref{clm:additive_width_approx_yields_kernel}, comparing $\width(\CH(\calS))$ to $\width_\kappa$.
	Let $u$ be the direction on which the width of $\CH(\calS)$ is obtained. We thus have that 
	\begin{align*}
	\width(\CH(\calS)) &= \max_{a,b\in \CH(\calS)}\langle a-b,u\rangle = \max_{a\in \CH(\calS)}\langle a-c,u\rangle- \min_{b\in \CH(\calS)}\langle b-c,u\rangle 
	\cr &=  \max_{a\in \CH(\calS)}\langle a-c,u\rangle + \max_{b\in \CH(\calS)}\langle b-c,-u\rangle 
	\cr &\geq (1-\alpha)\max_{a\in D_P(\kappa)}\langle a-c,u\rangle + (1-\alpha)\max_{b\in D_P(\kappa)}\langle b-c,-u\rangle - 2\alpha\cdot  \width_{\kappa} 
	\cr &= (1-\alpha)\max_{a,b\in D_P(\kappa)}\langle a-b,u\rangle - 2\alpha\cdot  \width_{\kappa} \geq \width_{\kappa}(1-3\alpha)
	\end{align*}
	This implies that in any direction $u$ we have that $\max_{p\in D_P(\kappa)}\langle p-c,u\rangle \leq \frac 1 {1-\alpha}\max_{p\in \CH(\calS)} \langle p-c,u\rangle + \frac{\alpha}{1-\alpha}\cdot \width_{\kappa} \leq  \frac 1 {1-\alpha}\max_{p\in \CH(\calS)} \langle p-c,u\rangle + \frac{\alpha}{(1-\alpha)(1-3\alpha)}\cdot \width(\CH(\calS))$.
	
	Now, based on a claim from~\cite{GritzmannK92}, we know that $\CH(\calS)$ contains a ball, centered at some point $z$, such that its radius is at least $\width(\CH(\calS)) \frac{\sqrt{d+2}}{(2d+2)}$. Thus we denote the convex polytope $\CH(\calS)$ as $\CH(\calS) = \bigcap_i \{x\in\R^d:~ \langle x-z,v_i\rangle \leq \lambda_i\}$ where for every $i$ it holds that $\lambda_i \geq \width(\CH(\calS))\frac{\sqrt{d+2}}{2d+2} \geq \width_{\kappa}(1-3\alpha)\frac{\sqrt{d+2}}{2d+2}$.
	
	Set $\beta = \frac{1}{1+4\sqrt{d+\frac 1 2}}\in[0,1]$, and denote $b = (1-\beta)z+\beta c$. Note that $b\in \CH(\calS)$ due to convexity. Moreover, since for every $v_i$ it holds that $\forall x, \langle x-z,v_i\rangle \leq \lambda_i$ iff $\langle x-b,v_i\rangle \leq \lambda_i + \beta\langle z-c,v_i\rangle$, then we can rewrite $\CH(\calS)$ as $\CH(\calS) = \cap_i\{x\in \R^d:~ \langle x-b, v_i\rangle \leq \lambda_i +\beta\langle z-c,v_i\rangle\}$. So, as our goal is to show that $D_P(\kappa)-b \subset (1+\alpha')(\CH(\calS)-b)$. Namely, we show that all $x\in D_P(\kappa)$ satisfy $\langle x-b, v_i\rangle \leq (1+\alpha')\left(\lambda_i +\beta\langle z-c,v_i\rangle\right)$ for any $v_i,\lambda_i$.
	
	Fix any closed halfspace parameterized by $v_i,\lambda_i$ and any $x\in D_P(\kappa)$. We have that
	\begin{align*}
	\langle x-b,v_i\rangle &= \langle x-c,v_i\rangle + \langle c-b,v_i\rangle
	\cr &\leq \frac 1 {1-\alpha}\max_{p\in \CH(\calS)} \langle p-c,v_i\rangle + \frac{\alpha}{1-\alpha}\cdot \width_{\kappa}  + \langle c-b,v_i\rangle
	\cr & \leq \frac 1 {1-\alpha}\max_{p\in \CH(\calS)} \langle p-b,v_i\rangle + \frac{1}{1-\alpha}\langle b-c,v_i\rangle + \lambda_i\frac{\alpha(2d+2)}{(1-\alpha)(1-3\alpha)\sqrt{d+2}}  - \langle b-c,v_i\rangle
	\cr& \leq \frac 1 {1-\alpha}\big(\lambda_i + \beta\langle z-c,v_i\rangle\big) + \lambda_i\frac{\alpha(2d+2)}{(1-\alpha)(1-3\alpha)\sqrt{d+2}} + \frac{\alpha}{1-\alpha}\langle (1-\beta)(z-c) ,v_i\rangle
	\cr & = \lambda_i + \lambda_i\frac{\alpha}{1-\alpha}+\lambda_i\frac{\alpha(2d+2)}{(1-\alpha)(1-3\alpha)\sqrt{d+2}} + \frac{\beta + \alpha-\alpha\beta}{1-\alpha}\langle z-c,v_i\rangle
	\cr& = \lambda_i  \left[1+\frac{\alpha}{1-\alpha}\left(1+\frac{2d+2}{(1-3\alpha)\sqrt{d+2}}\right)\right] + \left(\beta + \frac{\alpha}{1-\alpha}\right) \langle z-c,v_i\rangle
	\end{align*}
	Note that
	$
	(1+\alpha')\beta= \left(1+\frac{\alpha}{1-\alpha} \cdot \frac 1 \beta\right)\beta = \beta + \frac{\alpha}{1-\alpha}
	$. The key point here is that we have equality, not inequality, and this allows us to ignore partitioning into cases and see whether $\langle z-c,v_i\rangle$ is positive or not. Plugging in this equality into the above bound we get
	\begin{align*}
	\langle x-b,v_i\rangle &\leq  \left[1+\tfrac{\alpha}{1-\alpha}\left(1+\tfrac{2d+2}{(1-3\alpha)\sqrt{d+2}}\right)\right]\lambda_i + (1+\alpha')\beta\langle z-c,v_i\rangle \leq (1+\alpha')\lambda_i + (1+\alpha')\beta\langle z-c,v_i\rangle
	\end{align*}
	where the last inequality holds because $\lambda_i>0$ and because \[\tfrac{\alpha}{1-\alpha}\left(1+\tfrac{2d+2}{(1-3\alpha)\sqrt{d+2}}\right)\leq \alpha' = \tfrac{\alpha}{1-\alpha}(1+4\sqrt{d+\tfrac 1 2})\] (Since $\alpha\leq 1/6$ then $\frac{2d+2}{(1-3\alpha)\sqrt{d+2}}\leq 4\frac{d+1}{\sqrt {d+2}}\leq 4\sqrt{d+\frac 1 2}$.)
\end{proof}

\medskip

\paragraph{Definition of Fantess.} In the following subsection we detail our algorithms whose respective outputs satisfy the premise of Claims~\ref{clm:additive_width_approx_yields_kernel} and~\ref{clm:(1-alpha)_width_approx_yields_kernel}.
Unfortunately, we were unable to find an algorithm that returns a kernel for \emph{any} $D_P(\kappa)$. Much like in the non-private setting~\cite{AgarwalHV04}, in order to give an algorithm that outputs a kernel of $D_P(\kappa)$ we must require $D_P(\kappa)$ satisfies a certain ``fatness'' property. In the standard, non-private setting, a convex polytope $D_P(\kappa)$ is called $c_d$-fat if there exists a constant $c_d\geq 1$ (depends solely on the dimension $d$) where $\diam(D_P(\kappa))\leq c_d \width(D_P(\kappa))$ (see~\cite{AgarwalHV04}). Alas, our differentially private algorithm requires something stronger. Formally, we define the follow various notions of fatness. 
\begin{definition}
	\label{def:fat_Tukey_region}
	Given a dataset $P$, we say that its $\kappa$-Tukey region is
	\begin{itemize}
		\item  \emph{$(c_d,\Delta)$-fat} if it holds that $\width_\kappa \geq \tfrac 1 {c_d}\diam_{\kappa-\Delta}$.
		\item \emph{$(c_d,\Delta_+,\Delta_-)$-fat} if it holds that
		$\width_{\kappa+\Delta_+} \geq \tfrac 1 {c_d}\diam_{\kappa-\Delta_-}$.
		\item \emph{$c_d$-absolutely fat} if
		$\width_{\kappa} \geq \tfrac {1}{c_d}$.
	\end{itemize}
\end{definition}
\noindent The following properties are immediate from the various definitions.
\begin{itemize}
	\item If $D_P(\kappa)$ is $(c_d,\Delta)$-fat then for any $0\leq \Delta'\leq \Delta$ it is also $(c_d, \Delta')$-fat. In particular it is also $(c_d,0)$-fat which is the standard, non-private, definition of fatness.
	\item If $D_P(\kappa)$ is $(c_d,\Delta)$-fat then for any $c_d'\geq c_d$ it is also $(c_d', \Delta)$-fat.
	\item $D_P(\kappa)$ is $(c_d,\Delta_+,\Delta_-)$-fat iff $D_P(\kappa+\Delta_+)$ is $(c_d, (\Delta_++\Delta_-)~)$-fat. 
	\item If $D_P(\kappa)$ is $(c_d,\Delta_+,\Delta_-)$-fat then for any $\tilde{\Delta}_+\leq\Delta_+$ and $\tilde{\Delta}_-\leq\Delta_-$ it holds that $D_P(\kappa)$ is also\linebreak $(c_d, \tilde{\Delta}_+, \tilde{\Delta}_-)$-fat. 
	\item Since $P\subset[0,1]^d$, then for any $\kappa'$ we have that $\diam_{\kappa'}\leq\diam(P)\leq \sqrt d$. It follows that if $D_P(\kappa)$ is $c_d$-absolutely fat, then it is also $(c_d/\sqrt d, \Delta)$-fat for any $\Delta$.
\end{itemize}

\paragraph{Discussion.} It is clear that the fatness properties (i.e., non-private $c_d$-fat, $(c_d,\Delta)$-fat, $c_d$-absolutely fat) can be violated by the addition or removal of a single datapoint to/from $P$. Therefore, no differentially private algorithm can always assert w.h.p. whether $D_P(\kappa)$ is fat or not, nor estimate its fatness parameter $c_d$. We therefore proceed as follows. In the next few subsections we give our differentially private algorithms for fat Tukey-regions. That is, in Subsection~\ref{subsec:private_kernel_absolute_fatness} we assume that $D_P(\kappa)$ is $c_d$-absolutely fat and return a set $\calS$ that satisfies the premise of Claim~\ref{clm:additive_width_approx_yields_kernel}; and in Subsection~\ref{subsec:private_kernel_fatness} we assume $D_P(\kappa)$ is $(c_d,\Delta)$-fat and return a set $\calS$ which satisfies the premise of Claim~\ref{clm:(1-alpha)_width_approx_yields_kernel}. Moreover, in Subsection~\ref{subsec:private_selection} we propose a heuristic that, assuming $D_P(\kappa)$ is $(c_d,\kappa_1,\kappa_2)$-fat for some particular values of $\kappa_1,\kappa_2$, returns an estimation of $c_d$. But more importantly, in Sections~\ref{sec:approx_bounding_box} and~\ref{sec:finding_good_kappa} we show how to privately find a transformation $T$ that turns $T(D_P(\kappa))$ into a fat dataset. This transformation relies on the promise that $\vol(D_P(\kappa))\geq\frac 12 \vol(D_P(\kappa-\Delta))$, a promise which $D_P(\kappa)$ may not satisfy. However, in Section~\ref{sec:finding_good_kappa} we show how to find a value $\kappa^*$ where $D_P(\kappa^*)$ does satisfy this promise, allowing us to convert $D_P(\kappa^*)$ into a fat Tukey-region and then produce a kernel for $D_P(\kappa^*)$.

\subsection{Private Kernel Approximation Under ``Absolute Fatness''}
\label{subsec:private_kernel_absolute_fatness}

In this section, we work under the premise that $D_P(\kappa)$ is $c_d$-absolutely fat, that is, that $\width_\kappa \geq 1/c_d$. For some instances, we are able to privately check whether $D_P(\kappa)$ is $c_d$-absolutely fat~--- if it happens to be the case that $D_P(\kappa+\Delta^{\rm width})$ is $c_d$-absolutely fat, we can apply Algorithm~\ref{alg:DP_approx_width_tukey_region} and verify it is indeed the case.
Moreover, even when $D_P(\kappa)$ isn't absolutely-fat, in Sections~\ref{sec:approx_bounding_box} and~\ref{sec:finding_good_kappa} we discuss at length how to privately find a parameter $\kappa$ and a mapping $T$ that transforms $D_P(\kappa)$ into a absolutely-fat Tukey-region.

For absolutely fat Tukey-regions, we are able to give a pretty simple algorithm: we traverse a fine enough grid and add a point to $\calS$ if it is in a vicinity of a point in $D_P(\kappa)$. Details appear below.

\begin{algorithm}[ht]
	\caption{\label{alg:private-kernel-abs-fatness} Private Kernel Approximation of an Absolutely-Fat Tukey Region}
	{\bf Input}: Dataset $P\subset \mathbb{R}^{d}$; Approximation parameter $0< \alpha,\beta <1/2$; privacy parameters $\epsilon,\delta> 0$; Tukey depth parameter $\kappa$ and fatness parameter $c_d$.
	\begin{algorithmic}[1]
		\STATE Set $\zeta \gets \frac{\alpha}{c_d\sqrt d}$. Let $G_\zeta$ be the partitioning of the unit-cube $[0,1]^d$ to subcubes of edge-length $\zeta$. 
		\STATE Set $k \gets |G_\zeta|$ and $\epsilon_0\gets \frac{\epsilon}{2\sqrt{k\log(1/\delta)}}$, $\beta_0 \gets \frac{\beta}{k}$.
		\STATE Init $\calS \gets \emptyset$.
		\FOR{\textbf{each} $C\in G_\zeta$}
		\IF { ($\max_{x\in C}\TD(x,P)+Y_C\geq \kappa-\frac{\ln(1/\beta_0)}{\epsilon_0}$ where $Y_C\sim\Lap{\frac 1 {\epsilon_0}}$)}
		\STATE Add the center of $C$ to $\calS$.
		\ENDIF 
		\ENDFOR
		\RETURN $\calS$
	\end{algorithmic}
\end{algorithm}
\begin{theorem}
	\label{thm:private_kernel_abs_fatness}
	Algorithm~\ref{alg:private-kernel-abs-fatness} is an efficient, $(\epsilon,\delta)$-DP algorithm that returns w.p. $\geq 1-\beta$ a set $\calS$  that satisfies for every direction $u$ that $\max_{p\in D_P(\kappa)}\langle p,u\rangle - \alpha\cdot \width_{\kappa} \leq \max_{p\in\calS}\langle p,u\rangle \leq \max_{p\in D_P(\kappa-\Delta^{\rm kernel})}\langle p,u\rangle - \alpha\cdot \width_{\kappa-\Delta^{\rm kernel}}$,
	where $\Delta^{\rm kernel}= O(\frac{d(\frac {c_d\sqrt d}{\alpha})^{d/2}\sqrt{\log(1/\delta)}\log(\frac{c_d d}{\alpha\beta})}{\epsilon})$.
\end{theorem}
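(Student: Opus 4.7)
My approach is to treat Algorithm~\ref{alg:private-kernel-abs-fatness} as $k = |G_\zeta|$ sequential applications of a single-query Laplace mechanism and then translate a ``bounded-noise'' event into a sandwich on $\calS$. For privacy, I would first observe that the query $f_C(P) := \max_{x\in C}\TD(x,P)$ has global sensitivity $1$: Tukey-depth is $1$-sensitive pointwise, and the supremum of $1$-sensitive functions over a fixed set $C$ remains $1$-sensitive. Hence each iteration publishes a post-processing (the bit ``add $C$'s center or not'') of an $\epsilon_0$-DP Laplace release. Composing $k$ such releases via advanced composition with $\epsilon_0 = \epsilon/(2\sqrt{k\log(1/\delta)})$ delivers the claimed $(\epsilon,\delta)$-guarantee.

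For utility, a union bound over the $k$ Laplace draws shows that with probability $\geq 1-k\beta_0 = 1-\beta$ we have $|Y_C|\leq \tau := \ln(1/\beta_0)/\epsilon_0$ for every $C$. I condition on this event and set $\Delta^{\rm kernel} := 2\tau$. Then for every $C$ that intersects $D_P(\kappa)$ the (noisy) test succeeds and $C$'s center joins $\calS$; conversely, every center placed in $\calS$ comes from a cube that intersects $D_P(\kappa - \Delta^{\rm kernel})$. Since every cube center is at distance at most $\zeta\sqrt d/2$ from any point of its cube, this yields a ``Hausdorff-style'' sandwich: every $p\in D_P(\kappa)$ has a point of $\calS$ within $\zeta\sqrt d/2$, and every $c\in\calS$ has a point of $D_P(\kappa-\Delta^{\rm kernel})$ within $\zeta\sqrt d/2$.

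To convert this into the directional projection bound in the theorem, fix any unit $u$ and let $p^\star\in\argmax_{p\in D_P(\kappa)}\langle p,u\rangle$; its cube's center $c^\star\in\calS$ satisfies $\langle c^\star,u\rangle\geq \langle p^\star,u\rangle - \|c^\star-p^\star\|\geq \langle p^\star,u\rangle - \zeta\sqrt d/2$. By the choice $\zeta = \alpha/(c_d\sqrt d)$ the discretization error equals $\alpha/(2c_d)$, and $c_d$-absolute fatness ($\width_\kappa\geq 1/c_d$, hence also $\width_{\kappa-\Delta^{\rm kernel}}\geq 1/c_d$ since Tukey regions are nested) dominates this by $\alpha\width_\kappa/2 \leq \alpha\cdot\width_\kappa$, giving the lower inequality. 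The upper inequality follows symmetrically: every $c\in\calS$ lies within $\zeta\sqrt d/2$ of some $p'\in D_P(\kappa-\Delta^{\rm kernel})$, so $\langle c,u\rangle\leq \langle p',u\rangle + \alpha\width_{\kappa-\Delta^{\rm kernel}}$. Finally, plugging $k = (c_d\sqrt d/\alpha)^d$, $\beta_0=\beta/k$, $\epsilon_0 = \epsilon/(2\sqrt{k\log(1/\delta)})$ into $\Delta^{\rm kernel} = 2\tau$ yields the stated bound after simplification, the $\sqrt k$-factor producing the $(c_d\sqrt d/\alpha)^{d/2}$ term. For efficiency, I would precompute all nonempty Tukey regions $D_P(1),\ldots,D_P(\kappa^\star)$ via~\cite{LiuMM19} and, for each cube $C$, binary-search for the largest $j$ with $D_P(j)\cap C\neq\emptyset$ using an LP-feasibility test.

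The main obstacle I expect is really just bookkeeping: the factor of $\sqrt k$ forced by advanced composition directly produces the $(c_d\sqrt d/\alpha)^{d/2}$ dependence in $\Delta^{\rm kernel}$, and there is an unavoidable tension between fineness of $\zeta$ (which must be $\ll \width_\kappa/\sqrt d$ to absorb discretization into $\alpha\width_\kappa$) and the number of cubes $k = \zeta^{-d}$ that drives $\Delta$. Absolute fatness is exactly what allows the tradeoff to close with a clean $\alpha\cdot\width_\kappa$ term on both sides; without the $\width_\kappa\geq 1/c_d$ lower bound, one cannot convert the additive $\zeta\sqrt d/2$ grid error into a fraction of the width, which is why the next subsection must introduce alternate ``fatness'' notions before the approach extends.
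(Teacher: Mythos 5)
Your proposal is correct and follows essentially the same route as the paper's proof: sensitivity-$1$ queries $\max_{x\in C}\TD(x,P)$ composed via advanced composition, a union bound giving the bounded-noise event with $\Delta^{\rm kernel}=2\ln(1/\beta_0)/\epsilon_0$, and the conversion of the $\zeta\sqrt d/2=\alpha/(2c_d)\leq\alpha\width_\kappa/2$ grid error into the directional sandwich using absolute fatness. (Your ``$+\,\alpha\cdot\width_{\kappa-\Delta^{\rm kernel}}$'' on the upper side is the intended reading of the theorem's statement, whose minus sign there is a typo.)
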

\begin{proof}
	First, to see that Algorithm~\ref{alg:private-kernel-abs-fatness} is efficient, note that $k = O((c_d\sqrt d/\alpha)^d)$. For each of the $k$ cubes in $G_\zeta$ we find the largest $\kappa'$ such that $D_P(\kappa')\cap C\neq\emptyset$ using a LP, thus we are able to answer of the $\max_{x\in C}\TD(x,P)$ queries in $\poly$-time. (In fact, it is enough to check for each cube that some vertex of $D_P(\kappa')$ exists on each of the $2d$-sides of the cube.)
	
	Second, Algorithm~\ref{alg:private-kernel-abs-fatness} is clearly $(\epsilon,\delta)$-DP since it relies on $k$ queries, each with sensitivity of $1$. Thus, ``budgeting'' the additive Laplace mechanism with privacy-loss parameter of $\epsilon_0$ turns the entire algorithm to $(\epsilon,\delta)$-DP algorithm based on the advanced composition theorem~\cite{DworkRV10}.
	
	We continue under the event that for each $C\in G_\zeta$ we picked a random variable $Y_C$ such that $|Y_C|\leq \frac{\ln(1/\beta_0)}{\epsilon_0}$, an event we know to hold with probability $\geq 1-\beta$. Under this event, two things must occur: (i) for any $C$ where some $x_C\in C$ has Tukey-depth $\geq \kappa$ we place a point $y_C\in \calS$, and (ii) for any $C$ where we place its center point $y_C\in \calS$ there exists some $z_C\in C$ with Tukey-depth of at least $\kappa-\frac{2\ln(1/\beta_0)}{\epsilon_0} = \kappa - \Delta^{\rm kernel}$. Note that $\|x_C-y_C\|,\|y_C-z_C\| \leq \sqrt{d(\frac{\zeta}{2})^2} = \frac{\alpha}{2c_d} \leq \frac{\alpha \width_{\kappa}}{2}$. 
	
	%We rely on these two implications to show that indeed we output a $(\alpha,\Delta^{\rm kernel})$-kernel of $D_P(\kappa)$. Fix any direction $u$. Let $a,b\in D_P(\kappa)$ be the two points that obtain the directional-width of $D_P(\kappa)$, and let $y_a$ and $y_b$ be the two points in $\calS$ that are of distance $\leq \frac{\alpha \cdot \width_{\kappa}}{2}$ to $a$ and $b$ resp. We thus have that
	%\begin{align*} 
	%w_\kappa(u) &= \langle a-b,u\rangle = \langle y_a-y_b,u\rangle +\langle a-y_a,u\rangle -\langle b-y_b,u\rangle 
	%\cr &\leq \langle y_a-y_b,u\rangle + 2\frac{\alpha \cdot \width_{\kappa}}{2}\leq \max_{p,q\in \calS}\langle p-q,u\rangle + \alpha \cdot w_\kappa(u)
	%\end{align*}
	%implying that $(1-\alpha)w_\kappa(u)\leq \max_{p,q\in \calS}\langle p-q,u\rangle$.
	
	%Similarly, denote $p$ and $q$ as the two points in $\calS$ that obtain the directional width along $u$, and let $z_p$ and $z_q$ as the two points of depth $\geq \kappa-\Delta^{\rm kernel}$ closest to $p$ and $q$ resp. Again, we have that
	%\begin{align*} 
	%\max_{a,b\in \calS}\langle a-b,u\rangle &= \langle p-q,u\rangle \leq \langle z_p-z_q,u\rangle + 2\frac{\alpha \cdot \width_{\kappa}}{2} 
	%\cr &\leq \max_{z_1, z_2\in D_P(\kappa-\Delta^{\rm kernel})}\langle z_1-z_2,u\rangle + \alpha \cdot \width_{\kappa-\Delta^{\rm kernel}}(u) \leq (1+\alpha) w_{\kappa-\Delta^{\rm kernel}}(u)  
	%\end{align*}

	We rely on these two implications to show that indeed we output a $(\alpha,\Delta^{\rm kernel})$-kernel of $D_P(\kappa)$. Fix any direction $u$. Let $a\in D_P(\kappa)$ be the point that obtain the directional-max of $D_P(\kappa)$, and let $y_a$ be a point in $\calS$ that is of distance $\leq \frac{\alpha \cdot \width_{\kappa}}{2}$ from $a$. We thus have that
	\begin{align*}
	\max_{p\in D_P({\kappa})}\langle p,u\rangle &= \langle a,u\rangle = \langle y_a,u\rangle +\langle a-y_a,u\rangle 
	\cr &\leq \langle y_a,u\rangle + \frac{\alpha \cdot \width_{\kappa}}{2}\leq \max_{p\in \calS}\langle p,u\rangle + \alpha \cdot \width_\kappa
	\end{align*}
	
	Similarly, denote $p$ as the points in $\calS$ that obtains the directional width along $u$, and let $z_p$ be a point of depth $\geq \kappa-\Delta^{\rm kernel}$ closest to $p$. Again, we have that
	\begin{align*} 
	\max_{z\in \calS}\langle a,u\rangle &= \langle p,u\rangle = \langle z_p,u\rangle +\langle p-z_p,u\rangle \leq \langle z_p,u\rangle + \frac{\alpha \cdot \width_{\kappa}}{2} 
	\cr &\leq \max_{p\in D_P(\kappa-\Delta^{\rm kernel})}\langle p,u\rangle + \alpha \cdot \width_{\kappa-\Delta^{\rm kernel}}(u) \leq (1+\alpha) w_{\kappa-\Delta^{\rm kernel}}(u) \qedhere \end{align*}
	
	%Moreover, let $r$ be the radius of the smallest ball contained in $D_P(\kappa)$, which, as was shown in~\cite{GritzmannK92}, satisfies $r \geq \frac{\width_\kappa}{2\sqrt{d+1}}$. It follows that there exists some point $c\in D_P(\kappa)$ which is of distance \emph{at least} $\frac{\width_\kappa}{2\sqrt{d+1}}$ from any vertex of the polytope $D_P(\kappa)$. 
\end{proof}

\subsection{Private Kernel Approximation Under a $(c_d,\Delta)$-Fatness Assumption}
\label{subsec:private_kernel_fatness}

In this section we give a different algorithm for finding a $(\alpha,\Delta)$-kernel of $D_P(\kappa)$ under the notion of $(c_d,\Delta)$-fatness (recall Definition~\ref{def:fat_Tukey_region}), namely that $\width_{\kappa} \geq \frac{\diam_{\kappa-\Delta}}{c_d}$. Why do we present this algorithm in addition to the previous one? After all, if we know that the given dataset is $(c_d,\Delta)$-fat we can run Algorithm~\ref{alg:ApproximateDiameter} to find an approximation of $\diam_{\kappa-\Delta}$, use the algorithm in Corollary~\ref{cor:DP_approx_TD} to find a point $p\in D_P(\kappa)$, and then inflate the ball around $p$ and have that the resulting dataset is $c_d$-absolutely fat. 

The answer is composed of several facts. First, the above-mentioned suggestion for turning $D_P(\kappa)$ into a $c_d$-absolutely fat might fail, since there are datasets where it may return a number much greater than $\diam_{\kappa-\Delta}$. (Recall, the guarantee of Algorithm~\ref{alg:ApproximateDiameter} is based on $\diam_{\kappa-\Delta}$ and $\diam_{\kappa-\Delta-\Delta^{\diam}}$, and the latter could potentially be much larger than the former.) But even if this was not the case, there are additional reasons for presenting a dedicated algorithm for $(c_d,\Delta)$-fat Tukey regions. First and far most, there are datasets which are $c_d$-absolutely fat yet are $(c_d',\Delta)$-fat for a significantly smaller $c_d'< c_d$. Secondly, under a certain regime of parameters it may yield a smaller $\Delta$ then Algorithm~\ref{alg:private-kernel-abs-fatness}~--- it is scaled down by a factor of $d^{d/4}\sqrt{c_d/\alpha}$ (which significant for the large $c_d$ values we introduce) at the expense of added $\poly(d)$-factors. In addition, the guarantee of the returned kernel $\calS$ is slightly better: it actually satisfies that $\calS \subset D_P(\kappa-\Delta)$, which makes it so that $\CH(\calS)\subset D_P(\kappa-\Delta)$ (without the rescaling by a factor of $(1+\alpha)$ and without some unknown shift).

So throughout this section, we assume we know that for the given $\kappa$, our input dataset is $(c_d,\Delta)$-fat and so we produce a $(\alpha,\Delta)$-kernel for it (for the same value of $\Delta$). In fact, in order to avoid confusion with the definition of $\Delta^{\rm kernel}$ from Theorem~\ref{thm:private_kernel_abs_fatness}, we denote the change to the Tukey depth by $\Gamma^{\rm kernel}$ in this subsection and the following. 

The algorithm we discuss here mimics its non-private kernel analogue, where one uses a $\zeta$-angle cover $V_\zeta$ of the unit-sphere, with $\zeta \approx {\frac {\alpha} {c_d}}$. We start by finding some $c\in D_P(\kappa)$ using the algorithm from Corollary~\ref{cor:DP_approx_TD}. Then, iterating through all directions in $V_\zeta$, we find a point in $D_P(\kappa-\Gamma^{\rm kernel})$ which approximately maximizes the projection along the given direction. Our algorithm is thus provided below.

\begin{algorithm}[ht]
	\caption{\label{alg:private-kernel} Private Kernel Approximation of a $(c_d,\Delta)$-Fat Tukey Region}
	{\bf Input}: Dataset $P\subset \mathbb{R}^{d}$; Approximation parameter $0< \alpha,\beta <1/2$; privacy parameters $\epsilon>0,\delta\geq 0$; Tukey depth parameter $\kappa$ and fatness parameter $c_d$.
	\begin{algorithmic}[1]
		\STATE Set $\zeta \gets \min\{{\frac \alpha {2\sqrt 2c_d}}, 1/2\}$. Let $V_\zeta$ be a $\zeta$-angle cover of the unit sphere of size $2(\pi/\zeta)^{d-1}$.
		\STATE Set $k \gets d(|V_\zeta|+1)$ and $\epsilon_0\gets \frac{\epsilon}{2\sqrt{k\log(2/\delta)}}$, $\delta_0 \gets \frac{\delta}{2k}$, $\beta_0 \gets \frac{\beta}{k}$.
		\STATE Find $c\in D_{P}(\kappa)$ by setting $\kappa^* = \kappa + d\alphaqc(\epsilon_0,\delta_0,\beta_0)$ and applying the algorithm from Corollary~\ref{cor:DP_approx_TD} with $\kappa^*$.
		\STATE Init $\calS \gets \{c\}$.
		\FOR{\textbf{each} $v\in V_\zeta$}
		\STATE Compute $\ell_v\gets {\tt DPMaxProjection}(\epsilon_0,\beta_0,c,v,\kappa)$ approximated to a factor of $(1-\alpha)$.
		\STATE Using the rotation $R_v$ that maps $v$ as the first standard axis, complete the first coordinate $x=\langle c,v\rangle+\ell_v$ to a point $q_v$ by applying $\TDC(\epsilon_0,\delta_0,\beta_0)$ for $d-1$ times.\\
		Add $q_v$ to $\calS$.
		\ENDFOR
		\RETURN the pair $(c,\calS)$
	\end{algorithmic}
\end{algorithm}

\begin{theorem}
	\label{thm:dp_approx_kernel}
	Given $\epsilon>0,\delta>0, \alpha>0,\beta>0$ and $c_d\geq 1$, set 
	\begin{equation}
	\Gamma^{\rm kernel}_{c_d} = \begin{cases}
	O(\frac{d^{\frac 3 2}(\frac{c_d}{\alpha})^{\frac{d-1}{2}}\sqrt{\log(\frac 1 \delta)}\big(\upsilon + d\log(dc_d/\alpha\beta)\big)}{\epsilon}), &~\textrm{using $\epsilon$-DP binary-search}\\
	\tilde O(\frac{d^{\frac 5 2}(\frac{c_d}{\alpha})^{\frac{d-1}{2}}\sqrt{\log(\frac 1 \delta)} \log(\upsilon d c_d/\epsilon\delta\alpha\beta)}{\epsilon}), &~\textrm{using ``Between Thresholds''}\\
	O\left(\frac{d^{\frac 5 2}(\frac{c_d}{\alpha})^{\frac{d-1}{2}}\sqrt{\log(\frac 1 \delta)}\log(\frac{dc_d\upsilon}{\alpha\beta\delta})8^{\log^{*}(\upsilon)}\log^{*}(\upsilon)}{\epsilon}\right), &~\textrm{using ``RecConcave''}
	\end{cases}
	\end{equation} 
	Let $P\subset \G^d$ be a dataset where (i) for $\kappa^* = \kappa + d\alphaqc(\epsilon_0,\delta_0,\beta_0)$, its $\kappa^*$-Tukey region is non empty, and (ii) its $\kappa$-Tukey region is $(c_d,\Gamma_{c_d}^{\rm kernel})$-fat.
	Then Algorithm~\ref{alg:private-kernel} is an efficient $(\epsilon,\delta)$-differentially private algorithm that when applied to $P$ returns w.p. $\geq1-\beta$ a set $\calS$ and a point $c\in \calS$ which satisfies (i) $\calS~\subset~D_P(\kappa-\Gamma^{\rm kernel})$ and (ii) for every direction $u$ it holds that $(1-\alpha)\max_{p\in D_P(\kappa)}\langle p-c,u\rangle \leq \max_{p\in\calS}\langle p-c,u\rangle + \alpha\cdot \width_{\kappa}$.
\end{theorem}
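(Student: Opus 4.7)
The proof splits into three tasks: establishing $(\epsilon,\delta)$-differential privacy, the containment $\calS \subset D_P(\kappa - \Gamma^{\rm kernel})$, and the directional approximation bound. For privacy, I would observe that Algorithm~\ref{alg:private-kernel} touches $P$ only through $k = d(|V_\zeta|+1)$ differentially private primitives: the $d$ private quasi-concave maximizations inside {\tt DPPointInTukeyRegion} that produce $c$ (Corollary~\ref{cor:DP_approx_TD}), and for each $v\in V_\zeta$ one invocation of {\tt DPMaxProjection} together with $d-1$ private $\TDC$-maximizations to complete the remaining coordinates of $q_v$. Each primitive is run at budget $(\epsilon_0,\delta_0)$ with $\epsilon_0 = \epsilon/(2\sqrt{k\log(2/\delta)})$ and $\delta_0 = \delta/(2k)$, so advanced composition~\cite{DworkRV10} yields overall $(\epsilon,\delta)$-DP.

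\paragraph{Containment.}
A union bound guarantees that, with probability $\geq 1 - \beta$, every sub-invocation succeeds within its accuracy target. Under that event, Corollary~\ref{cor:DP_approx_TD} applied with $\kappa^\star = \kappa + d\alphaqc(\epsilon_0,\delta_0,\beta_0)$ returns $c$ with $\TD(c,P) \geq \kappa$. For each $v \in V_\zeta$, Theorem~\ref{thm:dp_approx_max_proj} produces a witness $p^\star_v \in D_P(\kappa - \Delta^{\rm max-proj})$ with $\langle p^\star_v - c, v\rangle \geq \ell_v$. Combined with $\TDC^{R_v(P)}(\langle c,v\rangle) \geq \kappa$ and the quasi-concavity of $\TDC$ (Proposition~\ref{pro:TDC_is_quasy_concave}) on the interval $[\langle c,v\rangle, \langle p^\star_v,v\rangle]$, this forces $\TDC^{R_v(P)}(\langle c,v\rangle + \ell_v) \geq \kappa - \Delta^{\rm max-proj}$. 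Completing the remaining $d-1$ coordinates costs at most $(d-1)\alphaqc$ in depth, so $q_v \in D_P(\kappa - \Delta^{\rm max-proj} - (d-1)\alphaqc)$. Plugging the value of $\epsilon_0$ into the three alternative bounds for $\alphaqc$ from Theorem~\ref{thm:prelim_quasi_concave}, together with $|V_\zeta| = O((c_d/\alpha)^{d-1})$, recovers the three stated expressions for $\Gamma^{\rm kernel}$.

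\paragraph{Directional approximation.}
Fix any direction $u$ and let $v \in V_\zeta$ be its nearest cover vector, so by Proposition~\ref{pro:nearest_vector_in_angle_cover} we have $\|u - v\| \leq \sqrt 2 \zeta \leq \alpha/(2c_d)$. The plan is a three-term chain. First, Cauchy--Schwarz applied to the maximizer $p^\dagger \in D_P(\kappa)$ of $\langle p-c,u\rangle$ gives $\langle p^\dagger - c, u\rangle \leq \max_{p\in D_P(\kappa)}\langle p - c, v\rangle + \diam_\kappa \cdot \|u-v\|$; the $(c_d,\Gamma^{\rm kernel})$-fatness hypothesis bounds $\diam_\kappa \leq \diam_{\kappa-\Gamma^{\rm kernel}}\leq c_d\cdot \width_\kappa$, so this error is at most $(\alpha/2)\cdot \width_\kappa$. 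Second, {\tt DPMaxProjection} on direction $v$ gives $(1-\alpha)\max_{p \in D_P(\kappa)}\langle p - c, v\rangle \leq \ell_v = \langle q_v - c, v\rangle$. Third, a symmetric Cauchy--Schwarz step yields $\langle q_v - c, v\rangle \leq \langle q_v - c, u\rangle + \|q_v - c\|\cdot \|u-v\|$; by the containment step $q_v \in D_P(\kappa - \Gamma^{\rm kernel})$, so again $\|q_v - c\| \leq c_d\cdot \width_\kappa$ and the error is $(\alpha/2)\cdot \width_\kappa$. Chaining the three bounds produces $(1-\alpha)\max_p \langle p - c, u\rangle \leq \langle q_v - c, u\rangle + \alpha\cdot\width_\kappa \leq \max_{p\in\calS}\langle p - c, u\rangle + \alpha\cdot\width_\kappa$, as required.

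\paragraph{Main obstacle.}
The delicate part is calibrating $\zeta$ so that the two Cauchy--Schwarz errors in the chain (one at $p^\dagger$ and one at $q_v$) together cost at most $\alpha\cdot\width_\kappa$ while the $(1-\alpha)$-factor from {\tt DPMaxProjection} is absorbed cleanly; the choice $\zeta = \alpha/(2\sqrt 2\, c_d)$ is exactly what makes each contribute $(\alpha/2)\cdot\width_\kappa$. Crucially, the second Cauchy--Schwarz bound requires knowing $q_v \in D_P(\kappa - \Gamma^{\rm kernel})$, so containment and directional approximation must be argued under a single high-probability event rather than sequentially. The remaining work is mostly bookkeeping: converting the advanced-composition budget $\epsilon_0$ and the three different $\alphaqc$-bounds of Theorem~\ref{thm:prelim_quasi_concave} into the three matching cases of $\Gamma^{\rm kernel}$.
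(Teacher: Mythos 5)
Your proposal is correct and follows essentially the same route as the paper's proof: advanced composition over the $k$ sub-procedures for privacy, a union bound plus Corollary~\ref{cor:DP_approx_TD} and Theorem~\ref{thm:dp_approx_max_proj} for the containment $\calS\subset D_P(\kappa-\Gamma^{\rm kernel})$, and the same three-term chain through the nearest cover direction $v$ with the two $\frac{\alpha}{2c_d}\diam_{\kappa-\Gamma^{\rm kernel}}$ error terms absorbed via the fatness hypothesis. The only cosmetic differences are that you make the quasi-concavity step behind the completion of $q_v$ explicit (the paper folds it into the cited corollaries) and omit the one-line efficiency remark.
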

\begin{proof}
	First, we argue this algorithm is efficient. This is clear, since $V_\zeta$ is of size $\propto (c_d/\alpha)^{d-1}$ and (initially and) for each direction we run an efficient $\TDC$-function (as discussed in Section~\ref{sec:TDC}).
	
	Second, this algorithm is $(\epsilon,\delta)$-DP due to the advanced composition theorem of~\cite{DworkRV10}, and the fact that overall we apply $k$ $(\epsilon_0,\delta_0)$-differentially private subprocedures.
	Furthermore, since each subprocedure has a probability of $\beta_0$ of failure, we continue this proof under the assumption that no subroutine has failed and all guarantees are satisfied, which happens with probability $\geq 1 - k\beta_0 = 1-\beta$.
	
	We thus have that $c$ has Tukey depth of $\geq  \kappa^*-d\alphaqc(\epsilon_0,\delta_0,\beta_0)=\kappa$ thus $c\in D_P(\kappa)$. Moreover, for each direction $v\in V_\zeta$ we have $\ell_v$ satisfies
	$(1- \alpha)\max_{p\in D_P(\kappa)} \langle p-c,v\rangle \leq \ell_v \leq \max_{p\in D_P(\kappa-\Delta^{\rm max-proj}(\epsilon_0,\beta_0))}$ where $\Delta^{\rm max-proj}(\epsilon_0,\beta_0) = O(\frac{\log(\frac{\upsilon + \log(d)}{\alpha\beta_0} )}{\epsilon_0}) = O(\frac{\sqrt{k\log(1/\delta)}\log(\frac{k(\upsilon + \log(d))}{\alpha\beta} )}{\epsilon})$ by Theorem~\ref{thm:dp_approx_max_proj}. As $k= O(d(\frac{c_d}{\alpha})^{({d-1})})$ we get that $\Delta^{\rm max-proj} =
	O( \frac{\sqrt{d^3}(\frac{c_d}{\alpha})^{\frac{d-1}2}\log( \upsilon dc_d/\alpha\beta )\sqrt{\log(1/\delta)}}{\epsilon} ) $. Then, by Corollary~\ref{cor:DP_approx_TD} we have that the point $q_v$ retrieved for direction $v$ have Tukey-depth of $\kappa-\Delta^{\rm max-proj}- (d-1)\alphaqc(\epsilon_0,\delta_0,\beta_0) = \kappa-\Gamma^{\rm kernel}$ where
	\begin{align*}
	\Gamma^{\rm kernel} & = O( \frac{d^{\frac 3 2} (\frac{c_d}{\alpha})^{\frac{d-1}2}\log( \upsilon dc_d/\alpha\beta )\sqrt{\log(\frac 1 \delta)}}{\epsilon} ) + O(d\alphaqc(\frac{\epsilon}{\sqrt{d\log(\frac 1 \delta)}(\frac {c_d} {\alpha})^{\frac{d-1}2}}, \frac{\delta}{d(\frac {c_d} {\alpha})^{{d-1}}}, \frac{\beta}{d(\frac {c_d} {\alpha})^{{d-1}}} ))
	\cr & =  O( \frac{d^{\frac 3 2} (\frac{c_d}{\alpha})^{\frac{d-1}2}\log( \upsilon dc_d/\alpha\beta )\sqrt{\log(1/\delta)}}{\epsilon} ) + 
	\begin{cases}
	O(\frac{d\sqrt{d\log(1/\delta)}(\frac {c_d} {\alpha})^{\frac{d-1}2}(\upsilon+\log(d(\frac {c_d} {\alpha})^{{d-1}}/\beta))}{\epsilon})\\
	\tilde O(\frac{d\sqrt{d\log(1/\delta)}(\frac {c_d} {\alpha})^{\frac{d-1}2}\log(d^2(\frac {c_d} {\alpha})^{2(d-1)}\upsilon/\beta\epsilon\delta) }{\epsilon})\\
	O\left(\frac{d\sqrt{d\log(1/\delta)}(\frac {c_d} {\alpha})^{\frac{d-1}2}8^{\log^{*}(\upsilon)}\log^{*}(\upsilon)\cdot \log(\frac{d^2(\frac {c_d} {\alpha})^{2(d-1)}\log^{*}(\upsilon)}{\beta\delta})}{\epsilon}\right)
	\end{cases}
	\cr & = \begin{cases}
	O(\frac{d^{\frac 3 2}(\frac{c_d}{\alpha})^{\frac{d-1}{2}}\sqrt{\log(\frac 1 \delta)}\big(\upsilon + d\log(dc_d/\alpha\beta)\big)}{\epsilon}), &~\textrm{using $\epsilon$-DP binary-search}\\
	\tilde O(\frac{d^{\frac 5 2}(\frac{c_d}{\alpha})^{\frac{d-1}{2}}\sqrt{\log(\frac 1 \delta)} \log(\upsilon d c_d/\epsilon\delta\alpha\beta)}{\epsilon}), &~\textrm{using ``Between Thresholds''}\\
	O\left(\frac{d^{\frac 5 2}(\frac{c_d}{\alpha})^{\frac{d-1}{2}}\sqrt{\log(\frac 1 \delta)}\log(\frac{dc_d\upsilon}{\alpha\beta\delta})8^{\log^{*}(\upsilon)}\log^{*}(\upsilon)}{\epsilon}\right), &~\textrm{using ``RecConcave''}
	\end{cases}
	\end{align*}
	And so, it follows that $\calS\subset D_P(\kappa-\Gamma^{\rm kernel})$.
	
	Next, we argue that for any $u\in\mathbb{S}^{d-1}$ it holds that $(1-\alpha)\max_{p\in D_P(\kappa)}\langle p-c,u\rangle \leq \max_{p\in\calS}\langle p-c,u\rangle + \alpha\cdot \width_{\kappa}$ for the point $c$ chosen by the algorithm (which we know to be in $D_P(\kappa)$). Fix any direction $u$. Let $p^*\in D_P(\kappa)$ be the point obtaining $\max_{p\in D_P(\kappa)}\langle p-c,u\rangle$ and denote $m_u = \langle p^*-c,u\rangle$. Denote $v\in V_\zeta$ as the nearest direction (of angle at most $\zeta$) to $u$, and recall that $\|u-v\|\leq \sqrt 2 \zeta\leq \frac{\alpha}{2c_d}$. It follows that
	\[ \langle p^*-c,v\rangle = \langle p^*-c,u\rangle + \langle p^*-c,v-u\rangle \geq m_u - \|p^*-c\|\|u-v\| \geq m_u - \frac{\alpha\diam_{\kappa}}{2c_d} \]
	As a result, it holds that $\max_{p\in D_P(\kappa)}\langle p-c,v\rangle \geq m_u - \frac{\alpha\diam_{\kappa}}{2c_d}$ and thus $\ell_v \geq (1- \alpha)(m_u - \frac{\alpha\diam_{\kappa}}{2c_d})$. Let $q_v$ be the point in $\calS\subset D_P(\kappa-\Gamma^{\rm kernel})$ which we picked for direction $v$ and whose projection onto $v$ is precisely $\ell_v+\langle c,v\rangle$. We therefore have that	
	Thus,
	\begin{align*}
	\max\limits_{q\in \calS}\langle q-c,u\rangle &\geq \langle q_v-c,u\rangle = \langle q_v-c,v\rangle + \langle q_v-c,u-v\rangle  \geq \ell_v - \diam_{\kappa-\Gamma^{\rm kernel}}\|u-v\|
	\cr & \geq (1- \alpha )(m_u-\frac{\alpha\diam_{\kappa}}{2c_d})- \diam_{\kappa-\Gamma^{\rm kernel}}\frac{\alpha}{2c_d} 
	\cr & \geq (1-\alpha)m_u - \frac{\alpha(\diam_\kappa+\diam_{\kappa-\Gamma^{\rm kernel}})}{2c_d}\geq  (1-\alpha)m_u - \frac{2\alpha\cdot\diam_{\kappa-\Gamma^{\rm kernel}}}{2c_d}
	\cr & \stackrel{\rm fatness}\geq (1-\alpha)m_u -\alpha\cdot \width_{\kappa}
	\end{align*}
\end{proof}

\subsection{Coping with the Relation between $c_d$ and $\Gamma_{c_d}^{\rm kernel}$}
\label{subsec:private_selection}

Theorem~\ref{thm:dp_approx_kernel} implies that we are able to privately output a $(\alpha,\Gamma^{\rm kernel})$-kernel for datasets which are a-priori guaranteed to be $(c_d,\Gamma^{\rm kernel})$-fat, where $\Gamma^{\rm kernel}_{c_d}$ is a function of multiple parameters, including $c_d$. Yet, when $c_d$ is not a-priori given, it is unclear how to verify the ``right'' $c_d$, seeing as it is unclear the diameter of which Tukey-region $\kappa'=\kappa-\Gamma(c_d)$ we should compare to the width of $D_P(\kappa)$. The non-private approach would be to try multiple values of $c_d$ but this leads to multiple (sensitive) queries about the input. 

We offer two solutions to this problem. One is a heuristic, and so~--- while we believe it does work for many datasets~--- it is not guaranteed to always work. This is the solution we discuss in this section. The other, which is guaranteed to work but involves choosing a particular $\kappa$, is discussed at length in the following sections.

The heuristic we pose here is based on the work of Liu and Talwar~\cite{LiuT19}. Basically, we traverse each option of $c_d$, here in powers of $2$, up to $c_{\max} = 4d^{5/2}\cdot 5^d\cdot(d!)$ (in Section~\ref{sec:approx_bounding_box} the choice for this particular parameter is explained), and for each value check whether the diameter of the suitable $D_P(\kappa-\Gamma^{\rm kernel})$ is upper bounded by $c_d\cdot \width_\kappa$ or not.

Formally, we set $t=\lceil\log_2(c_{\max})\rceil =\lceil\log_2(4d^{5/2}\cdot 5^d\cdot(d!))\rceil$ as the number of levels we test. For each $1\leq i \leq t$ let $M_i$ be the $2\epsilon$-DP mechanism that works as follows:
\begin{enumerate}
	\item Set $c_i = 2^{i}$, set $\Gamma_i = \Gamma^{\rm kernel}_{c_i}$ according to Theorem~\ref{thm:dp_approx_kernel}. 
	\item Run Algorithm~\ref{alg:ApproximateDiameter} with Tukey-depth parameter of $\kappa_i^1=\kappa-\Gamma_i$ and failure probability of $\frac{\beta}{12t\ln(2/\beta)}$. Denote the result $D$, and by Theorem~\ref{thm:DP_alg_approx_diam_Tukey_region} we know that w.p. $\geq 1-\frac{\beta}{12t\ln(2/\beta)}$ it holds that $D\geq (1-\alpha)\diam_{\kappa-\Gamma_i}$.
	\item Run Algorithm~\ref{alg:DP_approx_width_tukey_region} with a lower bound of $B = D/c_i$, with Tukey-depth parameter of $\kappa_2=\kappa + \Delta^{\width}(\epsilon,\frac{\beta}{12t\ln(\frac2 \beta)}) = \kappa + O(\frac{\log(\frac{t(\upsilon+\log(d))}{\alpha\beta})}{\epsilon})$ as defined in Theorem~\ref{thm:DP_alg_approx_width_Tukey_region}, and with failure probability of $\frac{\beta}{12t\ln(2/\beta)}$. Denote the result $w$, and by Theorem~\ref{thm:DP_alg_approx_width_Tukey_region} we know that w.p. $\geq 1-\frac{\beta}{12t\ln(2/\beta)}$ it holds that $w\leq (1+\alpha)\width_\kappa$ if $w\neq 0$.
	\item return the tuple $(i, \tau)$ when the score $\tau$ is set be $\tau=2^{-i}$ if $D \leq c_i \frac{1-\alpha}{1+\alpha}w$, or set as $\tau=0$ if any of the returned values is $0$ or if $D > c_i \frac{1-\alpha}{1+\alpha}w$.
\end{enumerate}

As in the work of Liu and Talwar~\cite{LiuT19} we define the $2\epsilon$-DP algorithm $Q$ which picks $i\in [t]$ u.a.r and runs $M_i$. Setting $\gamma = 1/3t$, we define $\tilde Q$ as the mechanism that works as follows:
\begin{itemize}
	\item Repeat:
	\begin{enumerate}
		\item Run $Q$, namely pick $i\in [t]$ u.a.r and add its output $M_i(P)$ to a (multi-)set $S$.
		\item Toss a biased coin: w.p. $\gamma$ output an element in $S$ with maximal $\tau$ and halt.
	\end{enumerate}
\end{itemize}
Applying Theorem~3.2 from~\cite{LiuT19}, we infer that $\tilde Q$ is $6\epsilon$-DP. Moreover, we can argue the following about its utility.
\begin{claim}
	\label{clm:private_selection_of_a_good_c}
	W.p. $\geq 1-\beta$, if $\tilde Q$ returns an index $i$ with score $\tau>0$ then $D_P(\kappa)$ is $(c_i, \Gamma_i)$-fat. 
	Furthermore, denote 
	\[Good =\{1\leq i \leq t: D_P(\kappa) \textrm{ is }\big(\tfrac{1-\alpha}{1+\alpha}c_i, \Delta^{\width}(\epsilon,\tfrac{\beta}{12t\ln(\frac2 \beta)}),\Gamma_i+\Delta^{\diam}(\epsilon,\tfrac{\beta}{12t\ln(\frac2 \beta)})\big)\textrm{-fat}\}\] If $Good\neq\emptyset$ then, assuming $\beta<1/6$, w.p. $\geq 1/2$ the mechanism $\tilde Q$ returns $i\leq \min\{i\in Good\}$.
\end{claim}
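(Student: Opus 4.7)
The plan is to identify a single high-probability ``good event'' on which every call to Algorithms~\ref{alg:ApproximateDiameter} and~\ref{alg:DP_approx_width_tukey_region} inside the runs of $\tilde Q$ satisfies its claimed approximation guarantee, and then derive both parts of the claim as short chains of inequalities on that event. Since $\tilde Q$ invokes $M_i$ a geometric number of times $N\sim\mathrm{Geom}(\gamma)$ with $\gamma=1/(3t)$, I would first bound $\Pr[N>T_0]\le e^{-\gamma T_0}=\beta/2$ by taking $T_0 = 3t\ln(2/\beta)$, then union-bound over the at most $2T_0$ subroutine calls, each failing with probability at most $\beta/(12t\ln(2/\beta))$, for another $\beta/2$. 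Hence, with probability $\ge 1-\beta$, every diameter call at depth $\kappa-\Gamma_i$ returns $D$ satisfying $(1-\alpha)\diam_{\kappa-\Gamma_i}\le D\le \diam_{\kappa-\Gamma_i-\Delta^{\diam}}$, and every width call at depth $\kappa+\Delta^{\width}$ returns $w$ satisfying $(1-\alpha)\width_{\kappa+\Delta^{\width}}\le w\le (1+\alpha)\width_\kappa$.

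For Part 1 (soundness), on the good event, if $\tilde Q$ outputs some pair $(i,\tau)$ with $\tau>0$, then $(i,\tau)$ must have been added to $S$ by an invocation of $M_i$ in which the test $D\le c_i\tfrac{1-\alpha}{1+\alpha}w$ was passed; chaining with $D\ge(1-\alpha)\diam_{\kappa-\Gamma_i}$ and $w\le(1+\alpha)\width_\kappa$ yields $(1-\alpha)\diam_{\kappa-\Gamma_i}\le c_i(1-\alpha)\width_\kappa$, whence $\diam_{\kappa-\Gamma_i}\le c_i\cdot\width_\kappa$, which is exactly the definition of $(c_i,\Gamma_i)$-fatness. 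For Part 2 (completeness), let $i^* = \min(Good)$. By the fatness built into the definition of $Good$, $\diam_{\kappa-\Gamma_{i^*}-\Delta^{\diam}}\le \tfrac{1-\alpha}{1+\alpha}c_{i^*}\cdot\width_{\kappa+\Delta^{\width}}$, which combined with $D\le\diam_{\kappa-\Gamma_{i^*}-\Delta^{\diam}}$ and $\width_{\kappa+\Delta^{\width}}\le w/(1-\alpha)$ gives $D\le c_{i^*}\tfrac{1-\alpha}{1+\alpha}w$, so every invocation of $M_{i^*}$ on the good event returns $\tau_{i^*}=2^{-i^*}>0$. Since $\tau=2^{-i}$ strictly decreases in $i$, any tuple added to $S$ with index $>i^*$ has strictly smaller score; thus, as long as $M_{i^*}$ is invoked by $\tilde Q$, the argmax in $S$ is realized at some index $\le i^*$.

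All that remains is to lower-bound the probability that $M_{i^*}$ is invoked at least once. Since each iteration independently samples $i\sim\mathrm{Uniform}([t])$ and halts with probability $\gamma$, a geometric-series calculation gives
\[\Pr[M_{i^*}\text{ never invoked}]=\sum_{k=1}^\infty (1-\gamma)^{k-1}\gamma\,(1-1/t)^k = \frac{\gamma(1-1/t)}{1/t + \gamma(1-1/t)} = \frac{t-1}{4t-1}\le \tfrac 1 4,\]
so $\Pr[M_{i^*}\text{ invoked}]\ge 3/4$. Union-bounding with the good event (whose complement has probability $\le\beta<1/6$) yields that the output index is $\le i^*$ with probability $\ge 3/4-\beta > 1/2$. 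The main obstacle in executing this plan is verifying that the multiplicative slack $\tfrac{1-\alpha}{1+\alpha}$ built into the definition of $Good$ is sized exactly right to counterbalance the two-sided $(1\pm\alpha)$-errors of the diameter and width approximations while still producing a meaningful fatness certificate in the soundness argument; everything else is routine concentration and a geometric-series identity.
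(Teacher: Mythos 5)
Your proposal is correct and follows essentially the same route as the paper: the same high-probability event (bounding the number of iterations of $\tilde Q$ by $3t\ln(2/\beta)$ and union-bounding the subroutine failures), the same two inequality chains for soundness and completeness, and the same geometric-series computation for the probability that $M_{i^*}$ is never invoked — you even obtain the marginally tighter bound $\frac{t-1}{4t-1}\le\frac14$ where the paper settles for $\le\frac13$. The one constant you flagged as needing verification is indeed where the argument is loosest: your completeness chain actually yields $D\le\frac{c_{i^*}}{1+\alpha}w$ rather than the tested $D\le\frac{(1-\alpha)c_{i^*}}{1+\alpha}w$, since the width guarantee only gives $w\ge(1-\alpha)\width_{\kappa+\Delta^{\width}}$; however, the paper's own proof elides the same factor of $(1-\alpha)$ (asserting $w\ge\width_{\kappa+\Delta^{\width}}$ outright), so this is a shared constant-bookkeeping slack rather than a gap in your approach.
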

%Before proving the claim, we make two comments. First, assume $D_P(\kappa)$ is $(c,(\kappa_2-\kappa), \Gamma(c))$-fat for some $c>0$, namely that $D_P(\kappa_2)$ is $\Delta^{\rm width}(\epsilon,\frac{\beta}{12t\ln(\frac2 \beta)})+\Gamma(c)$-fat. Then for $i$ where $2^i< c\leq 2^{i+1}$ we must also have that $D_P(\kappa_2)$ is $(2^{i+1}, \Delta^{\rm width}(\epsilon,\frac{\beta}{12t\ln(\frac2 \beta)})+\Gamma(2^i))$-fat. Thus, the set $Good$ is non-empty and we return $i$ which is a factor of $2$ approximation of the smallest $c$ satisfying $(c,(\kappa_2-\kappa), \Gamma(c))$-fatness.
%We comment that standard amplification (running $\tilde Q$ for $O(\log(1/\beta))$ times, rescaling the privacy budget accordingly) makes $\tilde Q$ have success probability of $1-\beta$.
\begin{proof}
	First, algorithm $\tilde Q$ halts the very first time its biased coin comes up heads. Clearly, the probability it iterates for $\geq \ln(2/\beta)/\gamma =3t\ln(2/\beta)$ is at most $(1-\gamma)^{\ln(2/\beta)/\gamma}\leq e^{-\ln(2/\beta)}\leq \beta/2$. Second, since we set the failure probability of each $M_i$ to be upper bounded by $2\cdot \beta/(12t\ln(2/\beta))$ then the probability that in any of the $\leq 3t(\ln(2/\beta))$ iterations one of the executed $M_i$ fails is at most $\beta/2$. It follows that w.p. $\geq1-\beta$ the algorithm $\tilde Q$ returns some score of some successfully executed $M_i$.
	
	We continue under the assumption that this event indeed holds. Now, if $\tilde Q$ returns $i$ with $\tau>0$ it follows that for this $i$ it indeed holds that $(1-\alpha)\diam_{\kappa-\Gamma_i}\leq D \leq c_i \frac{1-\alpha}{1+\alpha}w \leq (1-\alpha)c_i {\width_\kappa}$, and so $\diam_{\kappa-\Gamma_i}\leq c_i \width_{\kappa}$, hence $D_P(\kappa)$ if $(c_i,\Gamma_i)$-fat.
	
	Next, assume $Good\neq\emptyset$ and consider $i^* = \min\{i\in Good\}$. If indeed $M_{i^*}$ is executed and no failure occurs then it must be that $D\leq \diam_{\kappa-\Gamma_i - \Delta^{\diam}(\epsilon,\frac{\beta}{12t\ln(\frac2 \beta)})}$; similarly, it must be that $w\geq \width_{\kappa+\Delta^{\width}(\epsilon,\frac{\beta}{12t\ln(\frac2 \beta)})}$. Since $i^*\in Good$ then this means that $D\leq w\frac{1-\alpha}{1+\alpha}c_i$, and so we place $i^*$ with score of $2^{-i^*}$ in $S$. This means that $\tilde Q$ must return some $i$ with a higher score, namely $2^{-i}\geq 2^{-i^*}$, so $i\leq i^*$.
	
	Well, what is the probability that $\tilde Q$ does \emph{not} execute $i^*$?
	\begin{align*}
	\Pr[\tilde Q \textrm{ never picks }i^*] &= \sum_{j=1}^{\ln(2/\beta)/\gamma} \Pr[\tilde Q \textrm{ never picks }i^* \textrm{ and iterates $j$ times}]
	= \sum_{j=1}^{\ln(2/\beta)/\gamma} (1-\frac 1 t)^j(1-\gamma)^{j-1}\gamma 
	\cr &= \gamma(\frac{t-1}t)\sum_{j=1}^{\ln(2/\beta)/\gamma} [(1-\frac 1 t)(1-\gamma)]^{j-1} \leq \gamma(\frac{t-1}t)\sum_{j\geq 0} [(1-(\frac 1 t+\gamma-\frac{\gamma}{t})]^{j-1}
	\cr &= \frac{ \frac{\gamma(t-1)}{t} }{\frac{1}{t}+\gamma - \frac{\gamma}t} = \frac{\gamma(t-1)}{1+\gamma(t-1)} \leq \gamma(t-1) \leq \frac{t-1}{3t}\leq \frac 1 3
	\end{align*}
	Altogether, the probability of $\tilde Q$ to never run $M_{i^*}$ is upper bounded by $\frac 1 3 + \beta \leq 1/2$. Thus, w.p. $\geq 1/2$ we return $i\leq i^*$.
\end{proof}

\section{Private Approximation of the Bounding Box of $D_P(\kappa)$}
\label{sec:approx_bounding_box}

In this section we give a differentially private algorithm that returns a transformation that turns $D_P(\kappa)$ into a fat Tukey-region, if it is the case that the volume of $D_P(\kappa)$ and the volume of some shallower $D_P(\kappa')$ are comparable. The transformation itself is based on (privately) finding an approximated bounding-box for $D_P(\kappa)$, which is of an independent interest. Once such a box is found, then the transformation $T$ is merely a linear transformation, composed of rotation and axes scaling, that maps the returned box $\B$ to the hypercube $[0,1]^d$. We thus focus in this section on a private algorithm that gives a good approximation of the bounding box of $D_P(\kappa)$.

\subsection{A Non-Private Bounding-Box Approximation algorithm}
\label{subsec:non-private_bounding_box}

Before giving our differentially-private algorithm for the bounding-box approximation, we present its standard, non-private, version. For brevity, we discuss an algorithm that returns a box $\B$ that bounds the convex-hull of the given set of points $P$, which is precisely $D_P(1)$. We present the algorithm from~\cite{Har-Peled11} which gives a $c_d$-approximation of the bounding box, with $c_d$ denoting some constant depending solely on $d$. Namely, denoting $\Bopt_1$ as the box of minimal volume out of all boxes that contain $D_P(1)$, this algorithm returns a box $\B$ which is a bounding box for $D_P(1)$ and satisfies
\[  \vol(\Bopt_1)\leq \vol(\B) \leq c_d\vol(\Bopt_1) \]
The algorithm is given below. (Note that its first step is described in a black-box fashion.)

\begin{algorithm}[H]
	\caption{\label{alg:non-private-bounding-box-approx} Non-Private Approximation of the Bounding Box}
	{\bf Input}: Dataset $P\subset \mathbb{R}^{d}$; Approximation parameter $\gamma>1$.
	\begin{algorithmic}[1]
		\STATE Find two points $s,t\in P$ satisfying $\|s-t\| \geq \diam(P)/\gamma$. Denote $u_{st} = \frac{t-s}{\|t-s\|}$.
		\STATE $I \gets [\min_{x\in P} \langle x,u_{st}\rangle, \max_{x\in P}\langle x,u_{st}\rangle]$.
		\IF {$P$ is one dimensional}
		\RETURN $I$.
		\ENDIF
		\STATE Compute $\Pi^{\perp u_{st}}$, the projection on the subspace orthogonal to $u_{st}$.
		\STATE Recurse on $\Pi^{\perp u_{st}}(P)$ and obtain its $(d-1)$-dimensional bounding box $\B'$.
		\RETURN $\B \gets \B' \times I$.
	\end{algorithmic}
\end{algorithm}

\begin{claim}[\label{clm:bounding_box_non_private} Lemma 18.3.1 \cite{Har-Peled11}, restated]
	Fix $\gamma>1$. Given $s,t\in D_P(1)$ such that the segment connecting the two points is of length $\geq \frac{\diam(D_P(1))}{\gamma}$, then Algorithm \ref{alg:non-private-bounding-box-approx} returns a box $\B$ bounding $D_P(1)$ s.t. $\vol(\B)\leq \gamma^{d-1}(d!)\cdot \vol(D_{1})$.
\end{claim}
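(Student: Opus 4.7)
The plan is a standard induction on the dimension $d$, tracking how the volume bound degrades at each recursive call.

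\textbf{Base case ($d=1$).} When $P$ is one-dimensional, the algorithm returns the interval $I=[\min_{x\in P}\langle x,u_{st}\rangle,\ \max_{x\in P}\langle x,u_{st}\rangle]$, which equals $\CH(P)=D_P(1)$ itself. Thus $\vol(I)=\vol(D_P(1))$, and the claimed bound is trivial since $\gamma^{0}(1!)=1$.

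\textbf{Inductive step.} Assume the claim in dimension $d-1$, and consider $P\subset\R^d$ with a segment $\seg{st}\subset D_P(1)$ of length $\|s-t\|\geq\diam(D_P(1))/\gamma$. Denote by $\Pi=\Pi^{\perp u_{st}}$ the orthogonal projection, write $I=[\min_{x\in P}\langle x,u_{st}\rangle,\max_{x\in P}\langle x,u_{st}\rangle]$, and let $\B'$ be the box the recursive call returns for $\Pi(D_P(1))$. The returned box is $\B=\B'\times I$.

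First, $\B$ is indeed a bounding box of $D_P(1)$: for any $x\in D_P(1)$, $\Pi(x)\in \Pi(D_P(1))\subset \B'$ while $\langle x,u_{st}\rangle\in I$, so $x\in \B'\times I$. Next, by the inductive hypothesis (assuming, as the statement does, that each recursive call can produce a pair whose connecting segment is a $\gamma$-approximation of the projected diameter),
\[
\vol(\B')\ \leq\ \gamma^{d-2}(d-1)!\cdot\vol(\Pi(D_P(1))).
\]
Now I bound $|I|$ and invoke Fact~\ref{fact:volume_bounds_by_projections}. Clearly $|I|\leq \diam(D_P(1))$. On the other hand, the chord $\seg{st}\subset D_P(1)$ lies in direction $u_{st}$ and has length $\|s-t\|\geq \diam(D_P(1))/\gamma$, so the maximal length $\ell$ of an intersection of $D_P(1)$ with an affine line in direction $u_{st}$ satisfies $\ell\geq \diam(D_P(1))/\gamma \geq |I|/\gamma$. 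Fact~\ref{fact:volume_bounds_by_projections} then yields $\vol(D_P(1))\geq \vol(\Pi(D_P(1)))\cdot \ell/d$, i.e.\
\[
\vol(\Pi(D_P(1)))\ \leq\ \frac{d\cdot \vol(D_P(1))}{\ell}\ \leq\ \frac{d\gamma\cdot \vol(D_P(1))}{|I|}.
\]
Combining the pieces:
\[
\vol(\B)\ =\ \vol(\B')\cdot|I|\ \leq\ \gamma^{d-2}(d-1)!\cdot \vol(\Pi(D_P(1)))\cdot|I|\ \leq\ \gamma^{d-2}(d-1)!\cdot d\gamma\cdot \vol(D_P(1))\ =\ \gamma^{d-1}(d!)\cdot \vol(D_P(1)),
\]
which closes the induction.

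\textbf{Where the difficulty sits.} The only delicate point is aligning the two different lower bounds on $\ell$ with the upper bound $|I|\leq \diam(D_P(1))$: without the $\gamma$-approximate diameter promise, $|I|$ could be arbitrarily larger than the maximal chord length $\ell$ in direction $u_{st}$, which would break the telescoping of the $\gamma^{d-1}$ factor. All remaining ingredients (volume-versus-projection slab bound, containment of $\CH(P)$ in $\B'\times I$, and the $d!$-factor accumulation across the recursion) are mechanical once this alignment is in place.
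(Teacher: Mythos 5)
Your proof is correct and follows essentially the same route as the paper's: induction on the dimension, containment of $D_P(1)$ in $\B'\times I$, and a volume-versus-projection bound via Fact~\ref{fact:volume_bounds_by_projections} combined with $|I|\leq\diam(D_P(1))$ and $\|s-t\|\geq\diam(D_P(1))/\gamma$. The only cosmetic difference is that you apply Fact~\ref{fact:volume_bounds_by_projections} directly to $D_P(1)$ (legitimate, since the chord $\seg{st}$ lies in direction $u_{st}$), whereas the paper first passes through an auxiliary "straightened" body $C$ of equal volume before invoking the same fact.
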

\begin{proof}
	The proof works by induction on $d$, where for $d=1$ it is evident that $|I|$ is the minimal convex $1$-dimensional body that holds the data. Now fix any $d>1$. Recall that $D_P(1)$ is the convex hull of $P$ and let $st\in D_P(1)$ be the segment such that $||s-t||\geq \frac{diam}{\gamma}$. Wlog (we can apply rotation) $st$ lies on the $x_{d}$-axis (i.e., the line $\ell\equiv\cup_{x}(0,...,0,x)$). Thus, $\Pi$ is a projection onto the hyperplane  $h\equiv x_{d}=0$, and $I$ is projection of $P$ onto the $x_d$-axis.
	
	By the induction hypothesis, the returned box $\B'$ from the recursive call is (i) a bounding box for $\Pi(P)$ and (ii) has volume $\vol(B')\leq \gamma^{d-2}\cdot (d-1)!\cdot \vol(Q)$ where we use $Q$ to denote the convex hull of $\Pi(P)$ (which is contained in $h$). As a result of (i) we have that any $p\in P \subset \B'\times I$ and we returns a bounding box for $D_P(1)$, hence $\vol(D_P(1))\leq \vol(\B^*_1)\leq \vol(\B_1)$. We thus upper bound the volume of $\B$. Our proof requires Fact~\ref{fact:volume_bounds_by_projections}. (For any convex body in $\R^{d}$, let $x$ be the length the longest segment in direction $u$ and $Y$ be the volume of its projection onto the subspace orthogonal to $u$; then the volume of this body is $\geq x\cdot Y/d$.)
	
	Given a point $q \in Q\subset \R^{d-1}$,	
	let $\ell_{q}$ be the line parallel to $x_{d}$-axis passing through $q$. Let $L(q)$ be the minimum value of $x_{d}$ for the points of $\ell_{q}$ lying inside $D_P(1)$, and similarly, let $U(q)$ be the maximum value of $x_{d}$ for the points of $\ell_{q}$ lying inside $D_P(1)$; and let $f(q)$ be their difference. In other words, $D_P(1)\cap \ell_{q} = [L(q), U(q)]$. We thus have $\vol(D_P(1)) = \int_{q\in Q} U(q)dq - \int_{q\in Q} L(q)dq = \int_{q\in Q} f(q)dp$. So consider the body $C$ which is bounded by the hyperplance $h\equiv x_d=0$ on the one side and the curve $f(q)$ on the other side, whose volume is precisely the volume of $D_P(1)$. First, since $D_P(1)$ is convex then so is $C$. More importantly, since both $s$ and $t$ belong to $D_P(1)$ then the line (orthogonal to $h$) is inside this convex body and its length is at least $\|s-t\|$. Therefore, we apply Fact~\ref{fact:volume_bounds_by_projections} and have that
	\begin{align*}
	\vol(D_P(1)) = \vol(C) &\geq \vol( Q )\frac{\|s-t\|}d 
	%\cr & 
	\geq \vol( Q ) \cdot \frac{\diam(D_P(1))}{\gamma d}
	%\cr & 
	\stackrel{\rm induction}\geq \frac{\vol( \B' )}{\gamma^{d-2}(d-1)!} \cdot \frac{|I|}{\gamma d} = \frac{\vol(\B)}{\gamma^{d-1}\cdot d!}\qedhere
	\end{align*}
\end{proof}

\subsection{A Private Algorithm for a Bounding Box for $D_P(\kappa)$}
\label{subsec:private_bounding_box_approx}

Leveraging on the ideas from the non-private Algorithm~\ref{alg:non-private-bounding-box-approx}, in this section we give our differentially private algorithm that approximates a bounding box for $D_P(\kappa)$. As ever, our algorithm's guarantee relates to both the volume of $D_P(\kappa)$ and the volume of $D_P(\kappa-\Delta)$. Formally, we return (w.h.p) a box $\B$ which is a $(c_d,\Delta)$-approximation, defined as a bounding box that holds $D_P(\kappa)$ and where
\begin{equation}
\label{eq:dp_bounding_box_guarantee}
 \vol(D_P(\kappa))\leq \vol(\B^*_{\kappa})\leq \vol(\B) \leq c_d\cdot \vol(D_P(\kappa-\Delta)) 
\end{equation}
with $\B^*_\kappa$ denoting the bounding box of $D_P(\kappa)$ of minimal volume.

The algorithm we give mimics the idea of the recursive Algorithm~\ref{alg:non-private-bounding-box-approx}, where it is crucial that in each level of the recursion we find two points \emph{inside} the convex body $D_P(\kappa-\Delta)$. While it seems like a subtle point, if were to, say, use a more na\"ive approach of finding a direction where the projection onto it is proportional in length to the diameter of $D_P(\kappa)$ then we run the risk of finding a box whose volume is much bigger in comparison to the volume of $D_P{(\kappa-\Delta)}$. (Several other attempts of finding ``good'' directions based on the $\lTDC$-query failed as well due to pathological polytopes where the given direction isn't correlated with a required pair of points.)

In fact, it is best we summarize what is required on each level of our recursive algorithm. We find a segment $\bar{st}$ and an interval $I$ on the line extending this segment where the following three properties must hold: (i) both $s,t\in D_{P}(\kappa-\Delta)$, (ii) the length of $I$ is proportional to $\|s-t\|$ and (iii)  $\forall x\in D_P(\kappa)$, the projection of $x$ onto the line extending $\bar{st}$ lies inside $I$. Property (iii) asserts that $D_P(\kappa)$ is contained inside the box we return; property (i) combined with Fact~\ref{fact:volume_bounds_by_projections} allows us to infer that $\vol(D_P(\kappa-\Delta)) \geq \|s-t\| \cdot \vol( \Pi^{\perp st} D_{P}(\kappa-\Delta) )/d$ where $\Pi^{\perp st}$ is the projection onto the subspace orthogonal to the line extending $\bar{st}$; and property (ii) asserts $\|s-t\|\geq |I|/c$ so that recursively we get $c^d\cdot d!$ approximation of the volume of $D_P(\kappa-\Delta)$. Thus, asserting that these properties hold w.h.p.~becomes the goal of our algorithm. Details are given in the algorithm below and in the following theorem.

\begin{algorithm}[hbt]
	\caption{\label{alg:private-bounding-box-approx} Private Approximation of the Bounding Box of $D_P(\kappa)$}
	{\bf Input}: Dataset $P\subset \mathbb{R}^{d}$; privacy parameters $\epsilon>0, \delta\geq0$, failure probability $\beta>0$, desired Tukey-depth parameter $\kappa$.
	\begin{algorithmic}[1]
		\STATE Set $\zeta \gets \arccos(1/10)$, $V_\zeta$ as a $\zeta$-angle cover, and $k \gets 2+\sum_{j=2}^d(j+2+j-1) = d^2+2d-1 $.
		\STATE $\epsilon_0 \gets \frac{\epsilon}k, \delta_0 \gets \frac{\delta}k, \beta_0\gets\frac{\beta}k$.
		\STATE Construct the chain of convex polytopes $C_1 \supset C_2 \supset C_3 \supset ...$ where $C_i = D_P(i)$ for every $i$.
		\FOR {$j$ from $d$ downto $2$}
		\STATE Set $\kappa' = \kappa + j\alphaqc(\epsilon_0,\delta_0,\beta_0)$.\\ 
		Run $j$-times the $(\epsilon_0,\delta_0)$-differentially private algorithm to approximation $\TDC$ to obtain the $j$-coordinates of a point in $C_{\kappa' - j\alphaqc(\epsilon_0,\delta_0,\beta_0)} = C_{\kappa}$. 
		Denote the returned point as $s$.
		%\\$s\gets {\tt DPPointInTukeyRegion} (\kappa + j\alphaqc(\epsilon_0,\delta_0,\beta_0), j\epsilon_0, j\delta_0, j\beta_0)$ with respect to $C_1, C_2,...$
		\STATE Run ${\tt DPTukeyDiam}(\kappa,\epsilon_0,\beta_0)$ to obtain a parameter $\ell$ s.t. $0.9 \diam(C_\kappa)\leq \ell \leq \diam(C_{\kappa-\Delta^{\diam}(\epsilon_0,\beta_0)})$
		\STATE Run the $\epsilon_0$-differentially private algorithm ${\tt DPLargeTDCDirection}$ for the point $s$, the distance $\lambda = 0.45\ell$, the directions in $V_\zeta$ and depth $\kappa-\Delta^{\diam}(\epsilon_0,\beta_0)$.\\
		Denote the returned direction as $v$.
		\STATE Using $j-1$ calls to the $(\epsilon_0,\delta_0)$-DP algorithm that approximates $\TDC$, complete the coordinate $\left(\langle s,v\rangle + 0.45 \ell\right)$ on the $v$-axis to a point $t$.
		\STATE Set $u = \frac{t-s}{\|t-s\|}$ and $\Pi^{\perp u}$ as the projection orthogonal to $u$.
		\STATE Set $I_j \gets [\langle s,u\rangle - \frac {10} {9} \ell, \langle s,u\rangle +\frac {10}{9}\ell]$ in direction $u$.
		\STATE Project the polytopes: for all $j$ we set $C_j \gets \Pi^{\perp u}(C_j)$.
		\ENDFOR\\ 
		\%\% {and in dimension $1$ (on a line)}
		\STATE $s\gets$ {\tt DPPointInTukeyRegion} $(\kappa + \alphaqc(\epsilon_0,\delta_0,\beta_0))$ with respect to closed intervals $C_1, C_2,...$
		\STATE $\ell\gets$ {\tt DPTukeyDiam}$(\kappa,\epsilon_0,\beta_0)$ so that $0.9 \diam(C_\kappa)\leq \ell \leq \diam(C_{\kappa-\Delta^{\diam}(\epsilon_0,\beta_0)})$
		\STATE $I_1 \gets [s-\frac {10} 9 \ell, s+\frac {10}9\ell]$
		\RETURN The box $\B = \bigtimes\limits_{j=1}^d I_j$.
	\end{algorithmic}
\end{algorithm}

\begin{theorem}
	\label{thm:dp_bounding_box}
	Algorithm~\ref{alg:private-bounding-box-approx} is $(\epsilon,\delta)$-differentially private. Moreover, let $P\subset \G^d$ be a set of points whose Tukey-region $\kappa + d\alphaqc(\frac\epsilon{d^2+2d-1},\frac{\delta}{d^2+2d-1},\frac{\beta}{d^2+2d-1})$ is non-empty. Then w.p. $\geq 1-\beta$ Algorithm~\ref{alg:private-bounding-box-approx} returns a box $\B$ where $D_P(\kappa)\subset \B$ and $\vol(D_P(\kappa))\leq \vol(\B)\leq 5^d(d!)\vol(D_{p}(\kappa-\Delta^{\rm BB}))$ for \[\Delta^{\rm BB}(\epsilon,\delta,\beta) = \begin{cases}
	O(\frac{d^3(\upsilon + \log(d/\beta))}{\epsilon}), &\textrm{Using $\epsilon$-DP binary search}\\
	\tilde O(\frac{d^3{\log(d\upsilon/\epsilon\delta\beta)}}{\epsilon}), &\textrm{Using the ``Between Threshold'' Alg}\\
	O(\frac{d^3{\log(d\upsilon/\beta)}}{\epsilon}+ \frac{d^38^{\log^*(\upsilon)}\log^*(\upsilon)\log(d\log^*(\upsilon)/\delta\beta)}{\epsilon}), &\textrm{Using the ``RecConcave'' algorithm}
	\end{cases}\]
\end{theorem}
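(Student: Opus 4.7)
The privacy guarantee follows immediately from basic composition: the algorithm invokes at most $k = d^2 + 2d - 1$ subroutines, each one $(\epsilon_0,\delta_0)$-differentially private for $\epsilon_0 = \epsilon/k$ and $\delta_0 = \delta/k$, so their sequential composition is $(\epsilon,\delta)$-DP. For utility I would condition on the event---of probability $\geq 1 - k\beta_0 = 1 - \beta$ by a union bound---that all $k$ subroutines meet their utility guarantees, and then prove both containment $D_P(\kappa) \subseteq \B$ and the volume bound $\vol(\B) \leq 5^d (d!) \vol(D_P(\kappa - \Delta^{\rm BB}))$ by induction on the dimension $d$, matching the recursive structure of the algorithm.

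For the inductive step at level $j$, let $u = (t-s)/\|t-s\|$, so that $I_j$ is the interval of length $(20/9)\ell$ centered at $\langle s, u\rangle$ in direction $u$. Containment is straightforward: by {\tt DPTukeyDiam}'s guarantee $\ell \geq 0.9\diam(D_P(\kappa))$, and $s \in D_P(\kappa)$ because the TDC calls were fed the inflated depth $\kappa^* = \kappa + j\alphaqc$; thus every $x \in D_P(\kappa)$ satisfies $|\langle x-s, u\rangle| \leq \|x-s\| \leq \diam(D_P(\kappa)) \leq (10/9)\ell$ so its projection lies in $I_j$. Combined with the inductive containment $\Pi^{\perp u}(D_P(\kappa)) \subseteq \B'$ this gives $D_P(\kappa) \subseteq \B' \times I_j = \B$. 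For the volume bound I would invoke Fact~\ref{fact:volume_bounds_by_projections}: since both $s$ and $t$ are constructed with Tukey-depth $\geq \kappa - \Delta^{\rm BB}$, the segment $\overline{st}$ lies inside the convex body $D_P(\kappa-\Delta^{\rm BB})$, and the Fact gives $\vol(\Pi^{\perp u}(D_P(\kappa-\Delta^{\rm BB}))) \leq d\, \vol(D_P(\kappa-\Delta^{\rm BB}))/\|s-t\|$. Combined with $\|t-s\| \geq 0.45\ell$ (from $\langle t-s, v\rangle = 0.45\ell$), with $|I_j| = (20/9)\ell$, and with the inductive hypothesis, one obtains
\[ \vol(\B) = \vol(\B')\cdot|I_j| \leq 5^{d-1}(d-1)!\cdot d\cdot \frac{(20/9)\ell}{0.45\ell}\cdot \vol(D_P(\kappa-\Delta^{\rm BB})) \leq 5^d\cdot d!\cdot \vol(D_P(\kappa-\Delta^{\rm BB})), \]
since $(20/9)/0.45 = 400/81 < 5$. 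The base case $j = 1$ is handled analogously using {\tt DPPointInTukeyRegion} and {\tt DPTukeyDiam} directly on the nested intervals.

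The delicate step---and the one I regard as the main obstacle---is verifying that {\tt DPLargeTDCDirection} actually admits a valid output: some $v \in V_\zeta$ must satisfy $\max_{x \in D_P(\kappa-\Delta^{\diam})}\langle x-s, v\rangle \geq 0.45\ell$, so that a point $q$ with $\langle q, v\rangle = \langle s,v\rangle + 0.45\ell$ exists at sufficient depth. I would argue as follows. Since {\tt DPTukeyDiam} returns $\ell \leq \diam(D_P(\kappa-\Delta^{\diam}))$, there exist $a,b \in D_P(\kappa-\Delta^{\diam})$ with $\|a-b\| \geq \ell$, and the triangle inequality forces one of them (say $a$) to satisfy $\|a-s\| \geq \ell/2$. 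Taking the cover direction $v \in V_\zeta$ that makes the smallest angle with $(a-s)/\|a-s\|$, chosen with $\zeta$ fine enough that $\cos(\zeta) \geq 9/10$, yields $\langle a-s, v\rangle \geq (\ell/2)(9/10) = 0.45\ell$; the segment $[s,a]$ lies entirely within the convex $D_P(\kappa-\Delta^{\diam})$, and the intermediate value theorem on $r \mapsto \langle r\cdot (a-s)/\|a-s\|, v\rangle$ produces the required point whose projection from $s$ along $v$ equals exactly $0.45\ell$.

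Finally, to derive the explicit bound on $\Delta^{\rm BB}$ I would track the worst-case Tukey-depth loss across levels: the point $t$ at level $j$ is produced with depth $\geq \kappa - ((j-1)\alphaqc(\epsilon_0, \delta_0, \beta_0) + \Delta^{\diam}(\epsilon_0, \beta_0) + \Delta^{\rm LargeTDCDir}(\epsilon_0, \beta_0))$, maximized at $j=d$. Substituting $\epsilon_0 = \epsilon/k$, $\delta_0 = \delta/k$, $\beta_0 = \beta/k$ with $k = \Theta(d^2)$, and plugging in the three bounds on $\alphaqc$ from Theorem~\ref{thm:prelim_quasi_concave} together with the bounds from Theorem~\ref{thm:DP_alg_approx_diam_Tukey_region} and Theorem~\ref{thm:dp_find_direction_with_large_TDC}, the $d\alphaqc$ term dominates and yields the three claimed $O(\cdot)$ regimes for $\Delta^{\rm BB}$.
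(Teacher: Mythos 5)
Your proposal follows essentially the same route as the paper's proof: privacy by basic composition over the $k=d^2+2d-1$ subroutines, utility by a union bound followed by induction on the dimension, containment from $\ell\geq 0.9\diam(C_\kappa)$ and $s\in C_\kappa$, the volume bound from Fact~\ref{fact:volume_bounds_by_projections} applied to the segment $\overline{st}\subset D_P(\kappa-\Delta^{\rm BB})$ with the same arithmetic $(20/9)/0.45<5$, and the same triangle-inequality-plus-angle-cover argument guaranteeing that {\tt DPLargeTDCDirection} has a valid output. The only cosmetic difference is that you make the intermediate-value step for the existence of the point at projection exactly $0.45\ell$ explicit, which the paper leaves implicit.
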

\begin{proof}
	First, this algorithm is clearly $\epsilon$-DP as in each of the $d-1$ iterations we run differentially private subroutines, and all that is required is that we apply basic composition on the various calls. For each $j$, (i) finding $s$ entails the call to $\TDC$-approximation $j$-times, (ii) finding $\ell$ requires another call, (iii) finding $u$ requires another call and (iv) finding $t$ requires another $j-1$ calls (to find its remaining $j-1$ coordinates); lastly (in dimension $1$) we invoke two more calls for finding an interior point $s$ and for finding $\ell$. All in all, we invoke $\left(\sum_{j=2}^d j+1+1+j-1 \right)+2=d^2+2d-1$ calls to differentially private procedures, each with privacy parameters  $(\frac{\epsilon}{d^2+2d-1}, \frac{\delta}{d^2+2d-1})$, so by basic composition, this algorithm is $(\epsilon,\delta)$-differentially private. Similarly, each subroutine has $\beta/(d^2+2d-1)$ probability of failure, so the rest of the proof continues under the assumption that the event in which no subroutine has failed holds, which happens w.p. $\geq1-\beta$.
	
	Under this event, we argue that for
	\begin{align*}
	\Delta^{\rm BB} &= \Delta^{\diam}(\frac{\epsilon}{d^2+2d-1},\frac{\beta}{d^2+2d-1}) + \Delta^{\rm LargeTDCDir}(\frac{\epsilon}{d^2+2d-1},\frac{\beta}{d^2+2d-1}, (1/\zeta)^{d-1}) 
	\cr &~~~+ d\alphaqc(\frac{\epsilon}{d^2+2d-1},\frac{\delta}{d^2+2d-1},\frac{\beta}{d^2+2d-1})
	\cr & = O(\frac{d^3{\log(d\upsilon/\beta)}}{\epsilon}) + O(d^3 \alphaqc(\epsilon,\frac{\delta}{3d^2},\frac{\beta}{3d^2}))
	\cr & = \begin{cases}
	O(\frac{d^3(\upsilon + \log(d/\beta))}{\epsilon}), &~\textrm{Using $\epsilon$-DP binary search}\\
	\tilde O(\frac{d^3{\log(d\upsilon/\epsilon\delta\beta)}}{\epsilon}), &~\textrm{Using the ``Between Threshold'' Algorithm}\\
	 O(\frac{d^3{\log(d\upsilon/\beta)}}{\epsilon}+ \frac{d^38^{\log^*(\upsilon)}\log^*(\upsilon)\log(d\log^*(\upsilon)/\delta\beta)}{\epsilon}), &~\textrm{Using the ``RecConcave'' algorithm}
	\end{cases}
	\end{align*}
	it holds that $D_P(\kappa) \subset \B$ and that $\vol(\B)\leq 5^d\cdot (d!)\cdot \vol(D_P(\kappa-\Delta^{\rm BB}))$.
	
	First, it is rather simple to see that $D_P(\kappa)\subset \B$. In each iteration, under the non-failure event, $s\in C_\kappa$ since its dimension is decremented with each iteration and since, by assumption, $C_{\kappa + d\alphaqc}$ isn't empty. Moreover, since we know $\ell \geq 0.9 \diam(C_\kappa)$ then for any $x\in C_\kappa$ and for any direction $u$ it holds that $|\langle x-s,u\rangle| \leq \frac {10} 9 \ell$. Thus, for the particular direction $u$ we chose it also holds that $\forall x\in C_\kappa, \langle x,u\rangle \in I_j$. Applying this to all dimensions, we get that $D_P(\kappa) \subset \bigtimes\limits_{j=1}^d I_j$.
	
	Now, consider $\ell$ which we know to be upper bounded by $\diam(C_{\kappa-\Delta^{\diam}})$. Denote $a,b\in C_{\kappa-\Delta^{\diam}}$ as the two points whose distance is the diameter of $C_{\kappa-\Delta^{\diam}}$. Since $s\in C_{\kappa}\subset C_{\kappa-\Delta^{\diam}}$ then, by the triangle inequality, the distance of $s$ to either $a$ or $b$ is at least $\ell/2$, so assume wlog that $\|s-a\|\geq \ell/2$. Let $v\in V_\zeta$ be the closest direction to the line connecting $s$ and $a$, and so the projection onto $v$ is at least $0.9 \cdot \frac \ell 2 = 0.45\ell$. It follows that some point $q\in C_{\kappa-\Delta^{\diam}}$ is such that its projection onto some $v\in V_\zeta$ is at least $0.45\ell$. Thus, {\tt DPLargeTDCDirection} returns some direction $v$ where the coordinate $\langle s,v\rangle +0.45\ell$ can be extended to a point of depth  $\geq \kappa - \Delta^{\diam} - \Delta^{\rm LargeTDCDir}$. Finally, the repeated invocation of $\TDC$ returns such a point $t$ with depth $\geq \kappa-\Delta^{\rm BB}$. We infer that both $s$ and $t$ belong to $C_{\kappa-\Delta^{\rm BB}}$ and that $\|s-t\| \geq |\langle s-t,v\rangle| \geq 0.45\ell$. 
	
	We can now apply Fact~\ref{fact:volume_bounds_by_projections} to infer that
	\begin{align*}
	\vol(C_{\kappa-\Delta^{\rm BB}}) &\geq \vol( \Pi^{\perp u}(C_{\kappa-\Delta^{\rm BB}}) )\cdot\frac{ \|s-t\| }d \geq \vol( \Pi^{\perp u}(C_{\kappa-\Delta^{\rm BB}}) )\cdot \frac {0.45\ell} {d} 
	\cr &\geq \vol( \Pi^{\perp u}(C_{\kappa-\Delta^{\rm BB}}) )\cdot \frac{|I_d|}{20/9}\cdot\frac{0.45}{d} > \vol( \Pi^{\perp u}(C_{\kappa-\Delta^{\rm BB}}) )\cdot \frac{|I_d|}{5d}
	\end{align*}
	and so, by induction we get that $\vol(D_P(\kappa-\Delta^{\rm BB})) \geq \frac{\bigtimes\limits_{j=1}^d |I_j|} {5^d\cdot d!} = \frac{\vol(\B)} {5^d\cdot d!}$
\end{proof}

\subsection{From a Bounding Box to a ``Fat'' Input}
\label{subsec:bounding_box_to_fatness}

In classic, non-private, computational geometry, the bounding-box approximation algorithm can be used to design an affine transformation that turns the input dataset into a fat. In more detail, in the non-private setting, one works with the convex-hull of the input $D_P(1)$ and finds a bounding box $\B$ such that $\vol(B)\leq 2^d\cdot d!\cdot \vol(D_P(1))$. Then, denoting $T$ as the linear transformation that maps $\B$ to the cube $[0,1]^d$, we can argue that the resulting dataset $T(P)$ is fat, namely that $\width(T(P))\geq \diam(T(P))/c_d$ for $c_d = 2^d\cdot (d!) \cdot d^{\frac 5 2}$. (Moreover, once we have a kernel for (the fat) $T(P)$ we can reshape it back into a kernel for $P$ using $T^{-1}$. Afterall, if two convex polytopes we have ${\cal A}\subset {\cal B}$ then any $x\in {\cal A}$ can be expressed as convex combination of the vertices of ${\cal B}$, thus, for any linear transformation we have that any $T(x)$ can be expressed as convex combination of the vertices of $T({\cal B})$, hence $T({\cal A})\subset T({\cal B})$.)

\begin{figure}[b]
	\begin{minipage}[c]{0.6\textwidth}
		\caption{\label{fig:different_volumes} 
			An example where $D_P(\kappa-\Delta^{\rm BB})$ has a much larger volume than $D_P(\kappa)$. This makes it so that mapping $\B$ to $[0,1]^d$ doesn't change the width of $D_P(\kappa)$ significantly.}
	\end{minipage}
\hfill\begin{minipage}[c]{0.4\textwidth}
	\centering
	\includegraphics[scale=0.3]{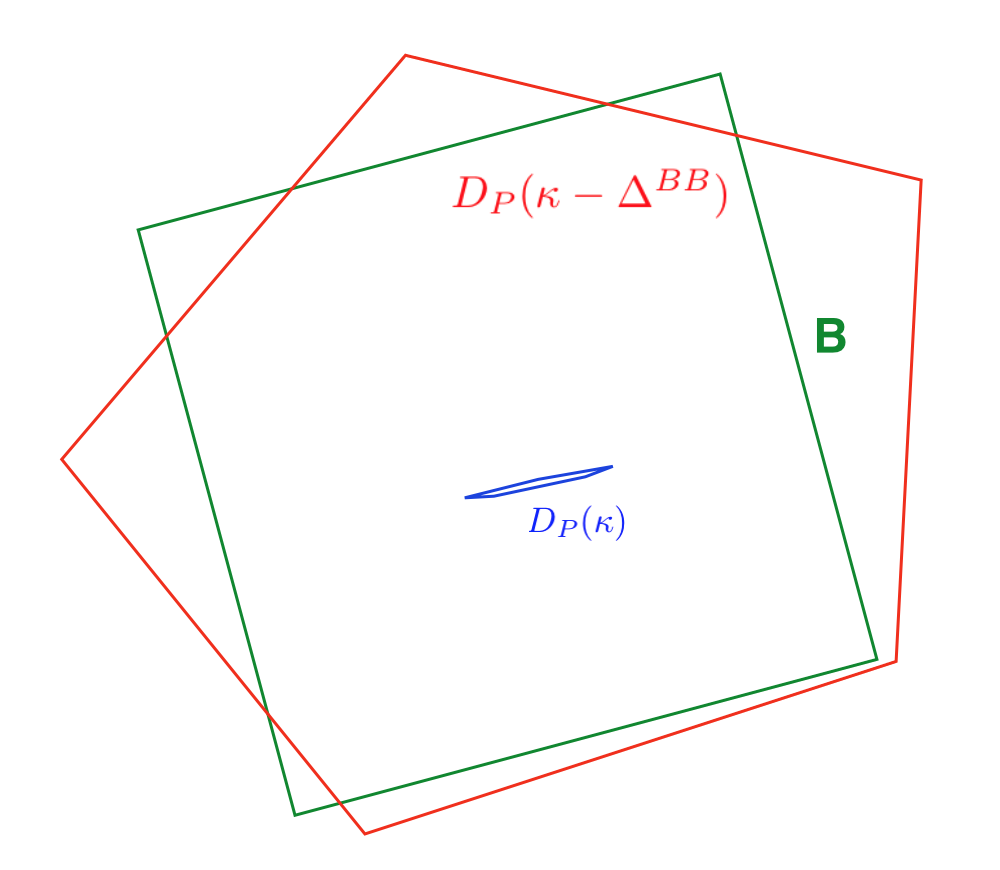}
\end{minipage}
\end{figure}
Unfortunately, we cannot make a similar claim in our setting. Granted, our bounding box is $(c_d,\Delta)$-approximation, but the resulting affine transformation does not, always, guarantee that applying it turns $D_{P}(\kappa)$ to be $(c_d',\Delta)$-fat or $c_d'$-absolutely fat. This should be obvious~--- as Figure~\ref{fig:different_volumes} shows, there exists settings where $D_P(\kappa-\Delta)$ is drastically bigger than $D_P(\kappa)$, resulting in a linear transformation that doesn't ``stretch'' $D_P(\kappa)$ enough to make it fat. Luckily, we show that non-comparable volumes is the only reason this transformation fails to produce a fat Tukey-region. As the following lemma shows, when $\vol(D_P(\kappa))$ and $\vol(D_P(\kappa-\Delta))$ are comparable, then the bounding box yields a linear transformation that does make the $\kappa$-Tukey region fat.

\begin{lemma}
	\label{lem:fat_Tukey_region_when_volumes_are_comparable}
	Fix $\epsilon>0$, $\delta\geq 0$ and $\beta>0$, and define $\Delta^{\rm BB}$ as in Theorem~\ref{thm:dp_bounding_box}.
	Suppose $P\subset \G^d$ is such that for two parameters $\kappa\geq \kappa'$ where $\kappa-\kappa'\geq \Delta^{\rm BB}$ are such that $\vol(D_P(\kappa))\geq \frac 1 2 \vol(D_P(\kappa'))$. Then there exists a $(\epsilon,\delta)$-differentially private algorithm that w.p. $\geq1-\beta$ computes (i) an affine transformation $M$ that turns $M(D_P(\kappa))$ into a convex polytope which is $(c_d,\kappa-\kappa')$-fat, for $c_d = 4 d^{\frac 5 2 }5^d\cdot (d!)$, and (ii) a transformation $\tilde M$ making $\tilde M(D_P(\kappa))$ $2d\cdot 5^d\cdot (d!)$-absolutely fat.
\end{lemma}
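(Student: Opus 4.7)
My approach is to run Algorithm~\ref{alg:private-bounding-box-approx} on $P$ with parameters $(\epsilon,\delta,\beta,\kappa)$ and then define $M=\tilde M$ to be the affine map (rotation composed with per-axis scaling) sending the returned box $\B$ onto the unit cube $[0,1]^d$. Privacy follows since $M$ is a post-processing of $\B$. By Theorem~\ref{thm:dp_bounding_box}, with probability $\geq 1-\beta$ I have $D_P(\kappa)\subset\B$ and $\vol(\B)\leq 5^d(d!)\vol(D_P(\kappa-\Delta^{\rm BB}))$. Since $\kappa-\kappa'\geq\Delta^{\rm BB}$, the inclusion $D_P(\kappa-\Delta^{\rm BB})\subseteq D_P(\kappa')$ combined with the hypothesis $\vol(D_P(\kappa))\geq \tfrac 1 2\vol(D_P(\kappa'))$ chains into
\[
\vol(D_P(\kappa))\;\leq\;\vol(\B)\;\leq\; 5^d(d!)\vol(D_P(\kappa'))\;\leq\; 2\cdot 5^d(d!)\vol(D_P(\kappa)).
\]
Using $|\det M|=1/\vol(\B)$, this translates in the transformed space to $\vol(M(D_P(\kappa)))\geq\tfrac{1}{2\cdot 5^d(d!)}$, $\vol(M(D_P(\kappa')))\leq 2$, and $M(D_P(\kappa))\subset[0,1]^d$ so that $\diam(M(D_P(\kappa)))\leq\sqrt d$.

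For the absolute fatness claim (ii), I will take $u$ to be the width-attaining direction of $M(D_P(\kappa))$ and use that $\Pi^{u^\perp}(M(D_P(\kappa)))\subseteq \Pi^{u^\perp}([0,1]^d)$, whose $(d-1)$-volume is $\sum_i|u_i|\leq\sqrt d\leq d$. Fact~\ref{fact:volume_bounds_by_projections} then yields $\width(M(D_P(\kappa)))\geq \vol(M(D_P(\kappa)))/d \geq \tfrac{1}{2d\cdot 5^d(d!)}$, giving $2d\cdot 5^d(d!)$-absolute fatness. For claim (i), it remains to upper-bound $\diam(M(D_P(\kappa')))$. To that end I will prove a short geometric lemma: for any convex bodies $A\subseteq B\subset\R^d$ with $\vol(B)\leq\alpha\vol(A)$, one has $\diam(B)\leq d\alpha\diam(A)$. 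The proof picks the direction $u^*$ realizing $\diam(B)$, applies Fact~\ref{fact:volume_bounds_by_projections} to $B$ in that direction to get $\vol(B)\geq \tfrac 1 d \diam(B)\cdot \vol_{d-1}(\Pi^{u^{*\perp}}(B))$, uses $\Pi^{u^{*\perp}}(B)\supseteq \Pi^{u^{*\perp}}(A)$, and applies Fact~\ref{fact:volume_bounds_by_projections} to $A$ in the same direction to lower-bound that projection by $\vol(A)/\diam(A)$. Applied with $A=M(D_P(\kappa))$, $B=M(D_P(\kappa'))$, $\alpha=2$, and $\diam(A)\leq\sqrt d$, the lemma gives $\diam(M(D_P(\kappa')))\leq 2d^{3/2}$. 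Dividing by the width bound from the absolute-fatness step produces
\[
\frac{\diam(M(D_P(\kappa')))}{\width(M(D_P(\kappa)))}\;\leq\; 2d^{3/2}\cdot 2d\cdot 5^d(d!)\;=\;4d^{5/2}\cdot 5^d(d!)\;=\;c_d,
\]
which is exactly the claimed $(c_d,\kappa-\kappa')$-fatness.

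The hard part is not the bounding-box call (that is handed to us by Theorem~\ref{thm:dp_bounding_box}) but the fact that $M$ is tailored to $D_P(\kappa)$, whereas the fatness guarantee (i) involves the much larger, a priori uncontrolled region $D_P(\kappa')$. After axis-rescaling, $M(D_P(\kappa'))$ could in principle be highly elongated, so the volume hypothesis must somehow be converted into a metric (diameter) comparison in the transformed space. The short convex-geometric lemma above is the bridge that performs this conversion, using that $M(D_P(\kappa))$ is ``thick'' in every direction (its orthogonal projection has volume $\geq \vol(M(D_P(\kappa)))/\sqrt d$) precisely because its diameter is bounded by $\sqrt d$ in the transformed space.
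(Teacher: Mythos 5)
Your proposal is correct and follows essentially the same route as the paper's proof: the same call to Algorithm~\ref{alg:private-bounding-box-approx}, the same volume chain $1=\vol(T(\B))\leq 2\cdot 5^d(d!)\vol(T(D_P(\kappa)))$, the same width lower bound via the $\leq d$ bound on projections of the unit cube, and the same diameter bound $\diam(T(D_P(\kappa')))\leq 2d^{3/2}$ obtained by sandwiching the two regions' projections with Fact~\ref{fact:volume_bounds_by_projections} (you merely package this last step as a standalone convex-geometry lemma, which is a clean way to state what the paper does inline). The only detail you elide is the paper's distinction between $M$ and $\tilde M$ — a further rescaling/capping so that the transformed domain fits back inside $[0,1]^d$ for downstream private algorithms — but this does not affect the fatness claims themselves.
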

We refer to a pair of values $(\kappa,\kappa')$ for which $\vol(D_P(\kappa))\geq \frac 1 2 \vol(D_P(\kappa'))$ as a ``good'' pair. Next, in Section~\ref{sec:finding_good_kappa} we deal with finding such a good pair of $(\kappa,\kappa')$. But for now we assume that the given $\kappa,\kappa'$ are good and prove Lemma~\ref{lem:fat_Tukey_region_when_volumes_are_comparable} under this assumption. Note that the lemma is not vacuous~--- even if $\vol(D_P(\kappa))\geq \frac 1 2 \vol(D_P(\kappa'))$ it does not mean that $\width_\kappa$ is proportional to $\diam_{\kappa'}$ as both Tukey-regions might be ``slim.''
\begin{proof}
	By Theorem~\ref{thm:dp_bounding_box}, w.p. $\geq 1-\beta$, we can compute a box $\B$ s.t. $D_P(\kappa)\subset \B$ and $\vol(\B)\leq 5^d\cdot d!\cdot \vol(D_P(\kappa-\Delta^{\rm BB}))$. Let $T$ be the affine transformation mapping $\B$ into the unit hypercube $[0,1]^{d}$, and note that as a linear transformation $T$ maps the convex polytope $D_P(\kappa)$ into a (different) convex polytope. It is left to show that $T(D_P(\kappa))$ is fat. 
	
	Comparing the volumes of the different regions, we have that
	\[1= \vol(T(\B)) \leq 5^d\cdot d! \cdot \vol(T(D_P(\kappa-\Delta^{\rm BB}))) 
	\leq  5^d\cdot d! \cdot \vol(T(D_P(\kappa')))\leq  2\cdot 5^d\cdot (d!) \cdot \vol(T(D_P(\kappa)))\]
	where the last inequality follows from the fact that for any region $S$ we have that $\vol(T(S)) = \vol(S)\cdot |\det(T)|$.\footnote{Abusing notation, as $T$ is a combination of a linear transformation and a shift which doesn't change volumes.}
	To proceed, we require the following fact.
	\begin{fact}
		\label{fact:volume_of_hyperplane_intersecting_unit_cube}
		Let $h$ be any hyperplane, and let $h_C$ be the projection of the unit-cube $[0,1]^d$ onto $h$. Then $\vol(h_C)\leq d$.
	\end{fact}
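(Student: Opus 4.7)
The plan is to reduce the $(d-1)$-volume of the shadow $h_C$ to a sum of projected face areas via a covering argument, and then bound this sum by a trivial coordinate-wise inequality.

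First I would set up notation: let $u \in \mathbb{S}^{d-1}$ be the unit normal to $h$, and let $P : \R^d \to h$ denote the orthogonal projection onto $h$ (the affine shift does not affect $(d-1)$-volumes, so we can assume $h$ contains the origin). Then $h_C = P([0,1]^d)$ is a convex polytope in $h$. The key geometric observation is that any line parallel to $u$ which meets the interior of $[0,1]^d$ enters and exits through the boundary $\partial [0,1]^d$, i.e., through exactly two of the $2d$ facets of the cube. Consequently, the projections of the $2d$ facets cover $h_C$ at least twice (up to a measure-zero set of tangential lines), giving
\[
2\, \vol_{d-1}(h_C) \;\leq\; \sum_{i=1}^{d}\Big( \vol_{d-1}(P(F_i^-)) + \vol_{d-1}(P(F_i^+)) \Big),
\]
where $F_i^{\pm}$ are the two facets of $[0,1]^d$ whose outer normal is $\pm e_i$.

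Next I would evaluate each summand. The facet $F_i^{\pm}$ is a unit $(d-1)$-cube lying in an affine hyperplane with unit normal $e_i$; an orthogonal projection from this hyperplane onto $h$ is a linear map whose Jacobian determinant equals the cosine of the angle between the two hyperplanes' normals. Hence $\vol_{d-1}(P(F_i^{\pm})) = |\langle e_i, u\rangle| = |u_i|$. Plugging this in yields
\[
\vol_{d-1}(h_C) \;\leq\; \sum_{i=1}^{d} |u_i|.
\]
Finally, since $|u_i| \leq \|u\|_2 = 1$ for every $i$, the sum is at most $d$, completing the proof. (In fact one gets the sharper bound $\sqrt{d}$ by Cauchy--Schwarz, $\sum_i |u_i| \leq \sqrt{d}\, \|u\|_2$, but $d$ suffices here.)

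The only subtle step is justifying the ``covered at least twice'' claim rigorously: the set of directions where a line parallel to $u$ is tangent to the cube has measure zero in $h_C$, so the double cover holds almost everywhere, which is enough for the volume inequality. I do not anticipate this as an actual obstacle, but it is the one place where care is needed.
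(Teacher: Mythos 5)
Your proof is correct, but it takes a genuinely different route from the paper's. The paper derives the fact in one line from Fact~\ref{fact:volume_bounds_by_projections}: taking $u$ to be the unit normal to $h$, the cube contains a segment in direction $u$ of length $\ell\geq 1$ (e.g.\ the chord through the center has length $1/\max_i|u_i|\geq 1/\|u\|_2=1$), so $1=\vol([0,1]^d)\geq \ell\cdot\vol(h_C)/d\geq \vol(h_C)/d$. You instead bound the shadow by the Cauchy-type facet-projection argument, obtaining $\vol(h_C)\leq\sum_i|u_i|=\|u\|_1$ (in fact this is an equality for the cube) and hence the sharper bound $\sqrt d$. Your facet computation $\vol_{d-1}(P(F_i^\pm))=|u_i|$ and the almost-everywhere double-covering justification are both sound; the only cost of your approach is that extra measure-zero bookkeeping, whereas the paper's argument is immediate given the pyramid-cylinder sandwich already established, and the weaker constant $d$ is all that is needed downstream in Lemma~\ref{lem:fat_Tukey_region_when_volumes_are_comparable}.
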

	\noindent(Its proof relies on Fact~\ref{fact:volume_bounds_by_projections}, where for any direction $u$, the longest segment inside $[0,1]^d$ in direction $u$ must be at least $1$.)
	 
	Now, let $v$ be the unit direction vector on which $\width(T(D_P(\kappa)))$ is obtained. Denote $h^{*}$ be the hyperplane orthogonal to $v$, and denote $h_D$ as the projection of $T(D_P(\kappa))$ onto $h^*$. Note that $\vol(h_D) \leq \vol(h_C)\leq d$. It follows that  $\vol(T(\mathcal{D}(\kappa)))$ is upper bounded by the volume of the ``cylinder'' whose base is $h_D$ and height $\width(T(D_P(\kappa)))$. Namely, we have that $\vol(T(D_P(\kappa))) \leq d \cdot \width(T(D_P(\kappa)))$. All in all, we get that  $\width(T(D_P(\kappa)))\geq \frac{1}{2\cdot d \cdot 5^d\cdot d!}$.
	
	Next, consider $u$ as the direction on which $\rho=\diam(T(D_P(\kappa')))$ is obtained (connecting the two vertices whose distance is the diameter). Let $h$ be the hyperplane orthogonal to $u$ and let $Y = \Pi^h(T(D_P(\kappa')))$. Again, by Fact~\ref{fact:volume_bounds_by_projections} we have that $\vol(T(D_P(\kappa'))) \geq \vol(Y)\cdot \rho /d$. In contrast, the volume of $T(D_P(\kappa))$ is upper bounded by the volume of the ``cylinder'' whose base is $X = \Pi^h(T(D_P(\kappa)))$ and height is the projection of $D_P(\kappa)$ onto $u$, which is clearly upper bounded by $\diam(T(D_P(\kappa)))$. Since $T(D_P(\kappa))\subset T(D_P(\kappa'))$ then their respective projections, $X$ and $Y$, are contained in one another. Putting it all together we have
	\begin{align*}  
	\vol(Y)\cdot \rho /d &\leq \vol(T(D_P(\kappa'))) \leq 2\cdot \vol(T(D_P(\kappa))) \leq 2\cdot \vol(X)\cdot \diam(T(D_P(\kappa))) 
	\cr &\leq 2\cdot \vol(Y)\cdot \diam(T(D_P(\kappa)))  \leq 2\cdot \vol(Y)\cdot \diam([0,1]^d) = 2\sqrt{d} \cdot \vol(Y)  \end{align*}
	We infer that $\rho = \diam(T(D_P(\kappa'))) \leq 2 \cdot d^{3/2}$. Combining this with the lower bound on the width of $T(D_P(\kappa))$, we have that $\width(T(D_P(\kappa)))\geq \frac{1}{2\cdot d \cdot 5^d\cdot d!} \geq \frac{\diam(T(D_P(\kappa')))}{2^2\cdot d^{\frac 5 2} \cdot 5^d\cdot d!}$.
	
	Now, based on $T$ we define $M$~--- it is a composition of affine transformations. First we apply $T$, then we apply rotation and shrinking by a constant factor so that $T([0,1]^d)$ fits back inside the hypercube. Formally, $T$ is a shift, rotation and different rescaling of each direction, so that $B\stackrel{T}\mapsto [0,1]^d$. Now, the vertices of the hypercube $[0,1]^d$ are mapped to various points but they still make a box that contains all points mapped by $T$. So let $\varphi$ be the affine transformation that maps one of these vertices to the origin, rotates all vertices so that they align with the standard $d$-axes and scales everything by the same constant until this box fits into $[0,1]^d$. Then we define $M = \varphi\circ T$ as the composition of the two affine transformations. Note that shifts and rotations do not change lengths and that scaling doesn't change the ratio between lengths, so since $T(D_P(\kappa))$ is $(c_d,\Delta)$-fat we have that $\varphi(T(D_P(\kappa)))$ is also $(c_d,\Delta)$-fat.
	
	Now, denote $\tilde M$ as the transformation that applies $T$ and then removes any points that falls outside of the hypercube. Since $D_P(\kappa)\subset \B$ it follows that $T(D_P(\kappa))\subset[0,1]^d$ and so $\tilde M(D_P(\kappa))=T(D_P(\kappa))$. Thus, $\width(\tilde M(D_P(\kappa))) =\width(T(D_P(\kappa))) \geq 2d\cdot5^d\cdot(d!)$. Note that since $\tilde M$ caps the domain at the hypercube $[0,1]^d$, then for any $\kappa$ we have that $\tilde M(D_P(\kappa)) = T(D_P(\kappa))\cap[0,1]^d$. This means that the transition from $D_P(\kappa)$ to $M(D_P(\kappa))$ involves both computing the operation of $T$ on the vertices of $D_P(\kappa)$ and intersecting those with the $2d$ faces of the hypercube. However, by finding a $(\alpha,\Delta)$-kernel $\calS$ for $\tilde M(D_P(\kappa))$ we actually make it so that  $(1-\alpha)D_P(\kappa) \subset T^{-1}(\CH(\calS)) \subset T^{-1}(\tilde M(D_P(\kappa-\Delta)))$, which is a subset of $D_P(\kappa-\Delta)$ and might improve the overall performance of our algorithm.
\end{proof}

\section{Finding a ``Good'' $\kappa$ Privately}
\label{sec:finding_good_kappa}

Our discussion in Section~\ref{subsec:bounding_box_to_fatness} leaves us with the question of finding a good $\kappa$ and $\kappa'=\kappa-\Delta^{\rm kernel}$, namely a pair $(\kappa,\kappa')$ where $\vol(D_P(\kappa))\geq \vol(D_P(\kappa'))/2$. Once we have found that $\kappa,\kappa'$ are such that $\vol(D_P(\kappa)) \geq \vol(D_P(\kappa'))/2$, then we can find $T$ that makes $T(D_P(\kappa))$ to be $(c_d,\kappa-\Delta^{\rm kernel})$-fat, and apply Algorithm~\ref{alg:private-kernel} to find its $(\alpha,\Delta^{\rm kernel})$-kernel. Moreover, note that if we know that $(\kappa,\kappa')$ is a good pair, then for any $\kappa'\leq \tilde \kappa < \hat \kappa \leq \kappa$ it also holds that $\vol(D_P(\hat \kappa))\geq \vol(D_P(\tilde \kappa))/2$ so $(\hat \kappa,\tilde \kappa)$ is also a good pair. 

Yet, how do we know such a $(\kappa,\kappa')$-pair even exists? To establish this, we rely on the following result from~\cite{KaplanSS20}, regarding the volume of a non-zero Tukey region.
\begin{theorem}[Lemma~3.2 from~\cite{KaplanSS20}.]
	\label{thm:Kaplan_et_al_LB_volume_tukey_region}
	If the volume of a Tukey region is non-zero, then it is at least $(d/\Upsilon)^{-d^3}$.
\end{theorem}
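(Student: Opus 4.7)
The plan is to reduce the volume lower bound to a determinant argument on a simplex whose vertices are vertices of $D_P(\kappa)$. Since $D_P(\kappa)$ is a convex polytope, its non-zero volume implies that it is full-dimensional, and therefore there exist $d+1$ affinely independent vertices $v_0, v_1, \ldots, v_d$ of $D_P(\kappa)$ forming a simplex $\Sigma \subset D_P(\kappa)$ with $\vol(D_P(\kappa)) \geq \vol(\Sigma) = \frac{1}{d!}\,|\det(v_1 - v_0, \ldots, v_d - v_0)|$. So the task reduces to showing that every vertex of $D_P(\kappa)$ lies on a sufficiently fine rational grid, and then bounding the simplex determinant from below using integrality.

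The first key step is the grid-refinement observation. Each vertex $v$ of $D_P(\kappa)$ is the intersection of $d$ bounding hyperplanes, and each such hyperplane is spanned by $d$ points of $P$ (whose coordinates are integer multiples of $\Upsilon$). Writing $v$ as the solution of the corresponding $d \times d$ linear system and applying Cramer's rule, each coordinate of $v$ is a ratio of two determinants of matrices whose entries are grid-values (integer multiples of $\Upsilon$). Hadamard's inequality bounds the absolute values of these determinants by $(\sqrt{d}/\Upsilon)^{d}$, and iterating this bound across the $d$ levels of nesting (hyperplane normals themselves are determinants of grid-entries) one arrives at the claim quoted earlier in the paper: the vertices of $D_P(\kappa)$ lie on a grid $\tilde{\G}^d$ of granularity $\tilde\Upsilon = d^{-d(d+1)/2}\,\Upsilon^{d^2}$.

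With this in hand, each edge vector $v_i - v_0$ has coordinates in $\tilde\Upsilon\cdot\mathbb{Z}$, so the matrix $M = (v_1 - v_0, \ldots, v_d - v_0)$ satisfies $M/\tilde\Upsilon \in \mathbb{Z}^{d\times d}$. Because the vertices are affinely independent, $\det(M) \neq 0$, hence $|\det(M/\tilde\Upsilon)| \geq 1$, i.e., $|\det(M)| \geq \tilde\Upsilon^{d}$. Combining with the simplex-volume formula gives
\[
\vol(D_P(\kappa)) \;\geq\; \frac{\tilde\Upsilon^{d}}{d!} \;=\; \frac{d^{-d^2(d+1)/2}\,\Upsilon^{d^3}}{d!} \;\geq\; d^{-d^3}\Upsilon^{d^3} \;=\; (d/\Upsilon)^{-d^3},
\]
where the last inequality uses $d^2(d+1)/2 + \log_d(d!) \leq d^3$.

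The main technical obstacle is the iterated grid-refinement step producing the $\tilde\Upsilon = d^{-d(d+1)/2}\Upsilon^{d^2}$ bound for vertex coordinates. This requires careful bookkeeping of denominators across two ``levels'' of Cramer's rule: first writing each defining hyperplane's normal and offset as ratios of subdeterminants of $P$-coordinates, then solving a system whose coefficients are themselves such ratios. Every other ingredient---extraction of a full-dimensional simplex, integrality of the normalized determinant, and the $(1/d!)|\det|$ simplex-volume formula---is standard.
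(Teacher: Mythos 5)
The paper never proves this statement---it is imported verbatim from \cite{KaplanSS20}---so there is no in-paper argument to compare against. Your overall strategy (extract $d+1$ affinely independent vertices, bound the volume below by that of the simplex they span, and lower-bound $\tfrac{1}{d!}|\det M|$ via rationality of the vertex coordinates) is the natural one and is almost certainly the route taken in the source.

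There is, however, a genuine gap at the step ``$M/\tilde\Upsilon\in\mathbb{Z}^{d\times d}$, hence $|\det M|\ge\tilde\Upsilon^{d}$.'' The Cramer's-rule computation you sketch does not place all vertices on a single common lattice of granularity $\tilde\Upsilon$: it shows that each vertex $v$ has coordinates of the form $a_k\Upsilon/c_v$, where $c_v$ is the (vertex-dependent) determinant of the $d\times d$ system defining $v$, with $|c_v|/\Upsilon$ bounded by roughly $1/\tilde\Upsilon$. Distinct vertices have unrelated denominators $c_v$ (they may be coprime, and their least common multiple need not be bounded by $1/\tilde\Upsilon$), so the difference vectors $v_i-v_0$ need not have coordinates in $\tilde\Upsilon\cdot\mathbb{Z}$, and the inequality $|\det M|\ge\tilde\Upsilon^{d}$ does not follow. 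The standard repair is to write $\det M$ as the bordered $(d+1)\times(d+1)$ determinant with rows $(1,v_i)$ and clear each row's denominator separately, which yields $|\det M|\ge\prod_{i=0}^{d}\bigl(\Upsilon/|c_{v_i}|\bigr)$, i.e.\ roughly $\tilde\Upsilon^{\,d+1}$ rather than $\tilde\Upsilon^{d}$. The extra factor of $\tilde\Upsilon\approx\Upsilon^{d^2}$ is not absorbed by the slack in your closing inequality $d^2(d+1)/2+\log_d(d!)\le d^3$, because $\Upsilon$ can be far smaller than any function of $d$ alone; the repaired argument delivers only $(d/\Upsilon)^{-d^3-O(d^2)}$, and hitting the exact exponent $d^3$ requires tighter per-vertex denominator bookkeeping than the quoted granularity $\tilde\Upsilon$ provides. (For the single use of this lemma in Section~\ref{sec:finding_good_kappa} the weaker exponent is harmless, since it only rescales $t$ by a constant factor, but as a proof of the bound as stated the lattice step must be fixed.)
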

\begin{corollary}
	Set $t = \lceil d^3\upsilon+d^3\log_2(d)\rceil$. Let $\kappa_1 < \kappa_2 < ... < \kappa_t$ be any monotonic series, and assume our input $P\subset [0,1]^d$ is such that $\vol(D_P(\kappa_t))>0$. Then there exists some $i\geq 2$ such that $(\kappa_i,\kappa_{i-1})$ is a good pair.
\end{corollary}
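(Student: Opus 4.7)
\medskip

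\noindent\textbf{Proof Proposal.} The argument I have in mind is a direct telescoping/pigeonhole on volumes, relying on the monotone nesting $D_P(\kappa_1)\supset D_P(\kappa_2)\supset\cdots\supset D_P(\kappa_t)$ that follows from $\kappa_1<\kappa_2<\cdots<\kappa_t$ and the definition of the Tukey region.

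First I would assume for contradiction that \emph{every} consecutive pair is bad, i.e.\ for all $i=2,3,\dots,t$ we have $\vol(D_P(\kappa_i))<\tfrac{1}{2}\vol(D_P(\kappa_{i-1}))$. Chaining the resulting inequalities telescopes to
\[
\vol(D_P(\kappa_t))\;<\;2^{-(t-1)}\,\vol(D_P(\kappa_1)).
\]
Next I would use the fact that $P\subset[0,1]^d$ implies $D_P(\kappa)\subset[0,1]^d$ for every $\kappa\geq 1$, so $\vol(D_P(\kappa_1))\leq 1$ and hence $\vol(D_P(\kappa_t))<2^{-(t-1)}$.

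On the other hand, the nesting $D_P(\kappa_t)\subset D_P(\kappa_1)\subset[0,1]^d$ together with the hypothesis $\vol(D_P(\kappa_t))>0$ allows us to invoke Theorem~\ref{thm:Kaplan_et_al_LB_volume_tukey_region}, which gives
\[
\vol(D_P(\kappa_t))\;\geq\;(d/\Upsilon)^{-d^3}\;=\;2^{-d^3(\upsilon+\log_2 d)}.
\]
Combining the two bounds yields $2^{-d^3(\upsilon+\log_2 d)}<2^{-(t-1)}$, i.e.\ $t-1<d^3\upsilon+d^3\log_2 d$, contradicting the choice $t=\lceil d^3\upsilon+d^3\log_2(d)\rceil$ (absorbing the harmless additive constant into the ceiling). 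Therefore at least one index $i\in\{2,\dots,t\}$ must satisfy $\vol(D_P(\kappa_i))\geq\tfrac{1}{2}\vol(D_P(\kappa_{i-1}))$, which is exactly the ``good pair'' condition.

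There is no real obstacle here: the telescoping step is immediate, the ambient containment in $[0,1]^d$ upper-bounds $\vol(D_P(\kappa_1))$ by $1$, and the Kaplan--Sheffet--Stemmer lower bound of Theorem~\ref{thm:Kaplan_et_al_LB_volume_tukey_region} provides the matching lower bound on $\vol(D_P(\kappa_t))$. The only cosmetic care needed is the off-by-one arising from the ceiling in the definition of $t$, which is why the constants in $t=\lceil d^3\upsilon+d^3\log_2(d)\rceil$ are tuned to make the two exponents of $2$ directly compatible. Notably, the conclusion is entirely existential and non-algorithmic---this corollary only certifies that a good pair is present in the grid $\{\kappa_1,\dots,\kappa_t\}$, which is what the private selection procedure developed in Section~\ref{sec:finding_good_kappa} subsequently needs in order to locate one such $\kappa$ without revealing more than $O(1)$ bits about $P$.
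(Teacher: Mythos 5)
Your proposal is correct and follows essentially the same route as the paper: assume all consecutive pairs are bad, telescope the factor-of-$2$ volume drops against the trivial upper bound $\vol(D_P(\kappa_1))\leq 1$, and contradict the lower bound $(d/\Upsilon)^{-d^3}$ from Theorem~\ref{thm:Kaplan_et_al_LB_volume_tukey_region}. Your handling of the off-by-one from the ceiling is in fact slightly more careful than the paper's own write-up, which loosely writes $2^{-t}$ where the telescoping gives $2^{-(t-1)}$.
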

\begin{proof}
	ASOC that for all $2\leq i\leq t$ it holds that $(\kappa_i, \kappa_{i-1})$ isn't a good pair. This implies that for any such $i$ we have that $\vol(D_P(\kappa_i)) < \vol(D_P(\kappa_{i-1}))/2$. Through induction, we can show that $\vol(D_P(\kappa_t)) < \vol(D_P(\kappa_1))/2^t \leq 1/2^t <(d/\Upsilon)^{-d^3}$, contradicting Theorem~\ref{thm:Kaplan_et_al_LB_volume_tukey_region}.
\end{proof}
In our work, since we want a good pair of $\kappa$s which is at least of distance $\Delta^{\rm kernel}$ apart, we look at the series $\kappa_i = i\cdot (4\Delta^{\rm kernel})$. This means that among the $m=4t\Delta^{\rm kernel}$ indices we consider, there are multiple pairs of $(\kappa,\kappa')$ where $\kappa' \leq \kappa-\Delta^{\rm kernel}$ that are good. This motivates the query that we use throughout this section
\begin{equation}
\label{eq:query_good_kappa}
q_P(\kappa) \stackrel{\rm def} = \max\left\{0\leq i \leq \min\{\kappa-1, m-\kappa\}:~ \frac{\vol(D_P(\kappa+i))}{\vol(D_P(\kappa-i))}\geq \frac 1 2 \right\}
\end{equation}
It is obvious that $q_p(\kappa)\geq 0$ for any $\kappa$, and that $q_P(1)=q_P(m)=0$. But our goal is to retrieve a $\kappa$ where $q_P(\kappa) \geq \Delta^{\rm kernel}$ which allows us to establish that the pair $(\kappa,\kappa-\Delta^{\rm kernel})$ is a good pair. The question we discuss in this section is how to output such a $\kappa$ in a way which is $(\epsilon,\delta)$-differentially private. The answer we give is based on a rather uncommon composition of two mechanisms (the exponential mechanism and the additive discrete Laplace) and is discussed in detail in the following section. However, in order to give our differntially-private mechanism, we prove that $q$ is a query which exhibits sort-of low sensitivity. Details are in the following claim.

\begin{claim}
	\label{clm:good_kappa_query_low_sensitivity}
	Let $P\subset[0,1]^d$ and $P'\subset[0,1]^d$ be two neighboring datasets where for some $x\in [0,1]^d$ it holds that $P' = P\cup \{x\}$. Then for any $1\leq \kappa<m$ we have that
	\[  |q_P(\kappa)-q_{P'}(\kappa+1)|\leq 1 \]
\end{claim}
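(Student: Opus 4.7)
The plan is to base everything on the well-known sensitivity-$1$ property of Tukey depth: for any $y\in\R^d$,
\[ \TD(y,P) \leq \TD(y,P') \leq \TD(y,P)+1. \]
The lower bound holds since adding a point can only increase the count of points in any halfspace below $y$; the upper bound holds by plugging the optimal direction for $P$ into the definition for $P'$. This immediately yields the containments
\[ D_{P'}(a+1) \subseteq D_P(a) \subseteq D_{P'}(a) \qquad\text{for every }a\geq 0, \]
and in particular $\vol(D_{P'}(a+1))\leq \vol(D_P(a))\leq \vol(D_{P'}(a))$.

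First I would show $q_{P'}(\kappa+1)\geq q_P(\kappa)-1$. Set $i = q_P(\kappa)$. If $i=0$ the inequality is trivial since $q_{P'}\geq 0$, so assume $i\geq 1$. By the definition of $q_P$, we have $\vol(D_P(\kappa+i)) \geq \tfrac{1}{2}\vol(D_P(\kappa-i))$. Using the two containments (applied with $a=\kappa+i$ on the left and iterated on the right), I get
\begin{align*}
\vol(D_{P'}(\kappa+i))
&\geq \vol(D_P(\kappa+i))
\geq \tfrac{1}{2}\vol(D_P(\kappa-i))
\geq \tfrac{1}{2}\vol(D_{P'}(\kappa-i+2)).
\end{align*}
Rewriting indices as $\kappa+i=(\kappa+1)+(i-1)$ and $\kappa-i+2=(\kappa+1)-(i-1)$, this is exactly the condition needed for $q_{P'}(\kappa+1)\geq i-1$. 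The range constraint $i-1\leq \min\{\kappa,\ m-\kappa-1\}$ follows from $i\leq \min\{\kappa-1,\ m-\kappa\}$, so the bound is valid.

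Next I would prove the symmetric direction $q_P(\kappa)\geq q_{P'}(\kappa+1)-1$ by the same template. Set $j=q_{P'}(\kappa+1)$; if $j=0$ the bound is trivial, and for $j\geq 1$ I would chain
\[
\vol(D_P(\kappa+j-1)) \geq \vol(D_{P'}(\kappa+j)) \geq \vol(D_{P'}(\kappa+1+j)) \geq \tfrac{1}{2}\vol(D_{P'}(\kappa+1-j)) \geq \tfrac{1}{2}\vol(D_P(\kappa-j+1)),
\]
using $D_{P'}(\kappa+j)\subseteq D_{P'}(\kappa+1+j)$ is wrong direction---let me instead use $D_{P'}(\kappa+1+j)\subseteq D_{P'}(\kappa+j)\subseteq D_P(\kappa+j-1)$ and $D_P(\kappa-j+1)\subseteq D_{P'}(\kappa-j+1)=D_{P'}((\kappa+1)-j)$, combined with the defining inequality of $q_{P'}(\kappa+1)=j$. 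This gives $q_P(\kappa)\geq j-1$, and the range constraint $j-1\leq \min\{\kappa-1,\ m-\kappa\}$ again follows mechanically from $j\leq \min\{\kappa,\ m-\kappa-1\}$.

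There is no real obstacle here beyond being careful with the index arithmetic and the interpretation of the ratio $\vol(D_P(\kappa+i))/\vol(D_P(\kappa-i))\geq 1/2$ as the product inequality $\vol(D_P(\kappa+i))\geq \tfrac{1}{2}\vol(D_P(\kappa-i))$ (which also handles the degenerate case where a denominator is $0$). The substantive content is packaged entirely in the sensitivity-$1$ property of $\TD$ and the resulting nested inclusions; the rest is mechanical bookkeeping.
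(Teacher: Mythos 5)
Your proposal is correct and follows essentially the same route as the paper: both arguments rest on the sensitivity-$1$ property of Tukey depth, the resulting interleaved chain $D_{P'}(a+1)\subseteq D_P(a)\subseteq D_{P'}(a)$, and index-shifting inside the volume-ratio condition to move between $q_P(\kappa)$ and $q_{P'}(\kappa+1)$. The only (cosmetic) difference is in the upper-bound direction, where you argue directly from $q_{P'}(\kappa+1)=j$ to $q_P(\kappa)\geq j-1$ rather than the paper's contrapositive "ratio at $i^*+1$ is below $\frac12$" step; your version also explicitly tracks the range constraints on the index $i$, which the paper leaves implicit.
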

Note that the claim implies that when $P = P'\cup\{x\}$ for some $x\in \R^d$ then $|q_P(\kappa)-q_{P'}(\kappa-1)|\leq 1$ for any $1<\kappa\leq m$.
\begin{proof}
 	Fix such $P$ and $P'$. 
 	Note that since $P'$ has one additional point then $P$, then for any $x$ it holds that $\TD(x,P)\leq \TD(x,P') \leq \TD(x,P)+1$. Thus, for any $\kappa$ we have that $D_{P}(\kappa) = \{x:~ \TD(x,P)\geq \kappa\} \subset D_{P'}(\kappa)$ and similarly, $D_{P'}(\kappa)\subset D_P(\kappa-1)$. We thus have a chain:
 	\[ D_{P'}(1) \supset D_P(1)\supset D_{P'}(2)\supset D_P(2) \supset D_{P'}(3) \supset... \supset D_{P}(\kappa-1)\supset D_{P'}(\kappa)\supset D_P(\kappa)\supset D_{P'}(\kappa+1) \supset ...   \]
 	
 	Denote $q_P(\kappa)=i^*$. We show that $i^*-1\leq q_{P'}(\kappa+1) \leq i^*+1$ proving the required. First, since $q_P(\kappa)=i^*$ then we know that 
 	\[\frac 1 2 \leq \frac{\vol(D_P(\kappa+i^*))}{\vol(D_P(\kappa-i^*))}\leq \frac{\vol(D_{P'}(\kappa+i^*))}{\vol(D_{P'}(\kappa-i^*+1))} \leq\frac{\vol(D_{P'}(\kappa+i^*))}{\vol(D_{P'}(\kappa-i^*+2))} =\frac{\vol(D_{P'}(\kappa+1+(i^*-1)))}{\vol(D_{P'}(\kappa+1-(i^*-1)))}\]
 	proving that $q_{P'}(\kappa+1)\geq i^*-1$. Secondly, since $q_P(\kappa)<i^*+1$ then we know that
 	\[\frac 1 2 > \frac{\vol(D_P(\kappa+i^*+1))}{\vol(D_P(\kappa-i^*-1))} \geq \frac{\vol(D_{P'}(\kappa+i^*+2))}{\vol(D_{P'}(\kappa-i^*-1))} \geq \frac{\vol(D_{P'}(\kappa+i^*+3))}{\vol(D_{P'}(\kappa-i^*-1))} = \frac{\vol(D_{P'}(\kappa+1+(i^*+2)))}{\vol(D_{P'}(\kappa+1-(i^*+2)))}\]
 	proving that $q_{P'}(\kappa+1)<i^*+2$ thus $q_{P'}(\kappa+1)\leq i^*+1$.
\end{proof}

We are now ready to give our differentially private algorithm that returns a good $\kappa$. We refer to it as the ``Shifted Exponential Mechanism.''

\begin{algorithm}[hbt]
	\caption{\label{alg:private-good-kappa} The Shifted Exponential Mechanism}
	\textbf{Input:} Dataset $P\subset\G^{d}$; privacy parameter $\epsilon\in(0,1)$; Set of indices $[m]$ with $m\geq\frac{16}\epsilon$.
	
	\begin{algorithmic}[1]
		\STATE 	Compute the Tukey-regions $D_P(\kappa)$ for all $1\leq \kappa\leq m$ and their respective volume.
		\STATE Pick $\kappa\in[m]$ w.p. $\propto w_P(\kappa) = \exp(\frac \epsilon 8q_P(\kappa))$.\\
		Namely, find $w_P(\kappa)$ for every $1\leq \kappa\leq m$ and compute $W_P = \sum_{\kappa=1}^m w_P(\kappa)$, then $\Pr[\kappa] = w_P(\kappa)/W_P$.
		\STATE Pick $X\sim {\sf D}\Lap{\frac 8 \epsilon}$, the discrete Laplace distribution.
		\\ Namely, pick $X=i$ w.p. $\propto \omega(i)= e^{-\frac \epsilon 8 |i|}$.
		\RETURN $\kappa+X$.
	\end{algorithmic}
\end{algorithm}

\newcommand{\calM}{\mathcal{M}}
\newcommand{\Z}{\mathbb{Z}}
\begin{theorem}
	\label{thm:shifted_exp_mech_is_private}
	If $\epsilon<1$ and $m\geq \frac{16}\epsilon$ then the Shifted Exponential Mechanism is $\epsilon$-differentially private.
\end{theorem}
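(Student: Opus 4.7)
The plan is to fix an arbitrary pair of neighboring datasets $P$ and $P' = P \cup \{x\}$ and prove $\Pr[\calM(P) = t] \leq e^\epsilon \Pr[\calM(P') = t]$ for every output $t \in \Z$; the reverse inequality will follow by a symmetric argument. The crucial idea is that even though Claim~\ref{clm:good_kappa_query_low_sensitivity} does not give $q$ ordinary low sensitivity (it relates $q_P(\kappa)$ to $q_{P'}(\kappa+1)$, not to $q_{P'}(\kappa)$), the discrete Laplace noise in the algorithm absorbs this unit shift: adding the noise $X$ to $\kappa$ versus $\kappa+1$ changes the output density only by a factor of $e^{\epsilon/8}$.

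Writing $\omega(i) = e^{-\epsilon |i|/8}$, $\Omega = \sum_{i \in \Z}\omega(i)$, $S_P(t) = \sum_{\kappa=1}^m w_P(\kappa)\omega(t-\kappa)$, and $W_P = \sum_\kappa w_P(\kappa)$, the density factors as $\Pr[\calM(P)=t] = S_P(t)/(\Omega\cdot W_P)$ and the target ratio is $(W_{P'}/W_P)\cdot(S_P(t)/S_{P'}(t))$. The first step is to match each $\kappa\in\{1,\dots,m-1\}$ in the numerator with $\kappa+1$ in the denominator: Claim~\ref{clm:good_kappa_query_low_sensitivity} gives $w_P(\kappa)\leq e^{\epsilon/8}w_{P'}(\kappa+1)$, and the triangle inequality $\bigl||t-\kappa|-|t-\kappa-1|\bigr|\leq 1$ gives $\omega(t-\kappa)\leq e^{\epsilon/8}\omega(t-\kappa-1)$, so
\[ S_P(t) \;=\; \omega(t-m) \;+\; \sum_{\kappa=1}^{m-1} w_P(\kappa)\,\omega(t-\kappa) \;\leq\; \omega(t-m) \;+\; e^{\epsilon/4}\,S_{P'}(t), \]
where the $\kappa=m$ term collapses to $\omega(t-m)$ since $q_P(m)=0$ forces $w_P(m)=1$. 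The same matching applied to the weights alone gives $W_P \leq 1 + e^{\epsilon/8}W_{P'}$, and since every $w_{P'}(\kappa)\geq 1$ we have $W_{P'}\geq m$; together with $m\geq 16/\epsilon$ and $\epsilon<1$, this yields $W_{P'}/W_P \leq 1/m + e^{\epsilon/8}\leq \epsilon/16 + e^{\epsilon/8} \leq e^{3\epsilon/16}$ (using $e^{\epsilon/16}-1\geq\epsilon/16$).

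The main obstacle is handling the leftover boundary term $\omega(t-m)$, which for $t\gg m$ can dominate $S_P(t)$ and has no matched partner on the $P'$ side. This is exactly where the hypothesis $m\geq 16/\epsilon$ comes in: every term of $S_{P'}(t)$ can be related back to $\omega(t-m)$ via $|t-\kappa|\leq |t-m|+|m-\kappa|$, so since $w_{P'}(\kappa)\geq 1$,
\[ S_{P'}(t) \;\geq\; \omega(t-m)\sum_{j=0}^{m-1} e^{-\epsilon j/8} \;=\; \omega(t-m)\cdot\frac{1-e^{-\epsilon m/8}}{1-e^{-\epsilon/8}} \;\geq\; \frac{6}{\epsilon}\,\omega(t-m), \]
using $e^{-\epsilon m/8}\leq e^{-2}$ (from $m\geq 16/\epsilon$) and $1-e^{-\epsilon/8}\leq\epsilon/8$. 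Hence $\omega(t-m)/S_{P'}(t)\leq \epsilon/6$ uniformly in~$t$.

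Assembling the pieces and using $1+\epsilon/6\leq e^{\epsilon/6}$,
\[ \frac{\Pr[\calM(P)=t]}{\Pr[\calM(P')=t]} \;\leq\; e^{3\epsilon/16}\Bigl(e^{\epsilon/4} + \tfrac{\epsilon}{6}\Bigr) \;\leq\; e^{3\epsilon/16}\cdot e^{\epsilon/4}\cdot e^{\epsilon/6} \;=\; e^{29\epsilon/48} \;\leq\; e^\epsilon. \]
The reverse inequality $\Pr[\calM(P')=t]\leq e^\epsilon\Pr[\calM(P)=t]$ is obtained by the same argument with the roles of $P$ and $P'$ reversed: the matching now pairs $\kappa'\in\{2,\dots,m\}$ in $P'$ with $\kappa'-1$ in $P$, the leftover boundary term becomes $\omega(t-1)$ (since $w_{P'}(1)=1$), and the symmetric geometric-tail bound $\omega(t-1)/S_P(t)\leq\epsilon/6$ follows from $|t-\kappa|\leq |t-1|+|\kappa-1|$. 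This establishes $\epsilon$-differential privacy of $\calM$.
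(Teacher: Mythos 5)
Your proof is correct and follows essentially the same route as the paper's: the same convolution decomposition of the output density, the same index-shifted matching via Claim~\ref{clm:good_kappa_query_low_sensitivity} (absorbing the shift into an extra $e^{\epsilon/8}$ from the discrete Laplace), and the same treatment of the unmatched boundary term by comparing $\omega(t-m)$ against the discrete-Laplace mass spread over at least $8/\epsilon$ indices where the weights are $\geq 1$ (the paper gets $3\epsilon/8$ where you get $\epsilon/6$). One small slip: the inequality $W_P\leq 1+e^{\epsilon/8}W_{P'}$ combined with $W_{P'}\geq m$ bounds $W_P/W_{P'}$, not the ratio $W_{P'}/W_P$ that the forward direction requires; the bound you actually invoke, $W_{P'}/W_P\leq 1/m+e^{\epsilon/8}$, instead follows from the symmetric matching $W_{P'}\leq 1+e^{\epsilon/8}W_P$ (using $q_{P'}(1)=0$) together with $W_P\geq m$, so the conclusion stands unchanged.
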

\begin{proof}
	Perhaps the key point in this theorem is that our algorithm is pure-DP and not approximated-DP. To establish this fact we require Claim~\ref{clm:good_kappa_query_low_sensitivity} as well as the facts that (i) $q_P(\kappa)\geq 0$ for any $P$ and any $\kappa$ and (ii) $q_P(1)=q_P(m)=0$ for any $P$.
	
	We prove the $\epsilon$-DP property, by breaking symmetry, and for now we fix $P$ and $P'$ as two neighboring datasets subject to $P' = P \cup \{x\}$ for some $x$. Denoting $\calM$ as the shifted exponential mechanism, our goal is to show that for any $j\in \Z$ we have that $\Pr[\calM(P)=j]/\Pr[\calM(P')=j]\leq\exp(\epsilon)$.
	
	First, we establish the following inequality.
	\begin{align*}
	W_P &= \sum_{\kappa=1}^m w_P(\kappa) = \sum_{\kappa=1}^{m-1}\exp(\frac \epsilon 8 q_P(\kappa)) + \exp(\frac \epsilon 8 q_P(m))
	 \stackrel{\rm Claim~\ref{clm:good_kappa_query_low_sensitivity}}\leq 
	\sum_{\kappa=1}^{m-1}\exp(\frac \epsilon 8 (q_{P'}(\kappa+1)+1)) + \exp(0)
	\cr &
	\leq e^{\frac\epsilon 8} \sum_{\kappa=2}^{m}\exp(\frac \epsilon 8 q_{P'}(\kappa)) + e^{\frac \epsilon 8}\cdot 1 = e^{\frac \epsilon 8} \sum_{\kappa=1}^{m}\exp(\frac \epsilon 8 q_{P'}(\kappa)) = e^{\frac \epsilon 8}W_{P'}
	\intertext{Similarly,}
	W_P &= \sum_{\kappa=1}^m w_P(\kappa) = \sum_{\kappa=1}^{m-1}\exp(\frac \epsilon 8 q_P(\kappa)) + \exp(0)
	\stackrel{\rm Claim~\ref{clm:good_kappa_query_low_sensitivity}}\geq 
	\sum_{\kappa=1}^{m-1}\exp(\frac \epsilon 8 (q_{P'}(\kappa+1)-1)) + e^{-\frac \epsilon 8}
	 = e^{-\frac \epsilon 8}W_{P'}
	\end{align*}
	
	Second, we turn our attention to the discrete Laplace distribution. It is evident that for any $i_1, i_2$ s.t. $|i_1-i_2|=1$ we have that $\Pr[X=i_1]/\Pr[X=i_2] = \omega(i_1)/\omega(i_2) \leq e^{\frac \epsilon 8}$. But perhaps more interesting is the following claim: for any index $i$ and any closed interval $I$ with one endpoint in $i$ and of length $8/\epsilon$ (i.e., either $I \supset [i, i+\frac 8 \epsilon]$ or $I \supset [i-\frac 8\epsilon, i]$) we have that
	\begin{align*}
	\frac{\Pr[X=i]}{\Pr[X\in I]} &\leq \frac{\exp(-\frac \epsilon 8 |i|)}{\exp(-\frac \epsilon 8|i|)\left( e^0 + e^{-\frac\epsilon 8}+...+ e^{-\frac{\epsilon}8\cdot \frac{8}\epsilon}   \right)}\leq \frac{1}{e^{-1}\cdot \frac{8}\epsilon} \leq \frac{3\epsilon}{8}
	\end{align*}
	
	With these two inequalities in our disposal, we can now argue that the mechanism is differentially private. Fix any $j\in \Z$.
	\begin{align*}
	\frac{\Pr[\calM(P)=j]}{\Pr[\calM(P')=j]} &= \frac{\sum_{i = 1}^m \Pr_P[\kappa=i]\Pr[X = j-i]}{\sum_{i = 1}^m \Pr_{P'}[\kappa=i]\Pr[X = j-i]} = \frac{W_{P'}\cdot \sum_{i = 1}^m w_P(i)\omega(j-i)}{W_P\cdot \sum_{i = 1}^m w_{P'}(i)\omega(j-i)}
	\cr &\leq e^{\frac \epsilon 8}\frac{\sum_{i = 1}^{m-1} w_P(i)\omega(j-i) + w_P(m)\omega(j-m)}{\sum_{i = 2}^m w_{P'}(i)\omega(j-i) + w_{P'}(1)\omega(j-1)} 
	\cr &\leq e^{\frac \epsilon 8}\frac{\sum_{i = 1}^{m-1} \exp(\frac \epsilon 8 q_P(i))\omega(j-i) + \exp(0)\omega(j-m)}{\sum_{i = 2}^m \exp(\frac \epsilon 8 q_{P'}(i))\omega(j-i)}
	\cr \text{(Claim~\ref{clm:good_kappa_query_low_sensitivity})}~~~~~~~~ &\leq e^{\frac \epsilon 8}\left(\frac{\sum_{i = 1}^{m-1} \exp(\frac \epsilon 8 (q_{P'}(i+1)+1))\omega(j-i)}{\sum_{i = 2}^m \exp(\frac \epsilon 8 q_{P'}(i))\omega(j-i)}+ \frac{\exp(0)\omega(j-m)}{\sum_{i = 2}^m \exp(\frac \epsilon 8 q_{P'}(i))\omega(j-i)}\right)
	\cr (\forall i, q_{P'}(i)\geq 0)~~~~ &\leq
	e^{\frac \epsilon 8}\left(e^{\frac \epsilon 8}\frac{\sum_{i = 2}^{m} \exp(\frac \epsilon 8 q_{P'}(i))\omega(j-i+1)}{\sum_{i = 2}^m \exp(\frac \epsilon 8 q_{P'}(i))\omega(j-i)}+ \frac{\omega(j-m)}{\sum_{i = 2}^m \omega(j-i)}\right)
	 \cr &\leq
	e^{\frac \epsilon 8}\left(e^{\frac \epsilon 8}\frac{e^{\frac \epsilon 8}\sum_{i = 2}^{m} \exp(\frac \epsilon 8 q_{P'}(i))\omega(j-i)}{\sum_{i = 2}^m \exp(\frac \epsilon 8 q_{P'}(i))\omega(j-i)}+ \frac{\omega(j-m)}{\sum_{t = 0}^{m/2} \omega(j-m+t)}\right)
	\cr (m\geq\frac{16}{\epsilon})~~~~ &\leq e^{\frac \epsilon 8}\left(e^{\frac\epsilon 4}+ \frac{\omega(j-m)}{\sum_{t = 0}^{8/\epsilon} \omega(j-m+t)} \right) \stackrel{e^{\frac \epsilon 4}\leq 1+\frac{\epsilon}2}\leq e^{\frac \epsilon 8}\left(1 + \frac \epsilon 2 + \frac{3\epsilon}8 \right) \leq e^{\frac \epsilon 8}e^{\frac{7\epsilon} 8} = e^{\epsilon}
	\intertext{And symmetrically, we have}
	\frac{\Pr[\calM(P')=j]}{\Pr[\calM(P)=j]} &= \frac{\sum_{i = 1}^m \Pr_{P'}[\kappa=i]\Pr[X = j-i]}{\sum_{i = 1}^m \Pr_{P}[\kappa=i]\Pr[X = j-i]} = \frac{W_{P}\cdot \sum_{i = 1}^m w_{P'}(i)\omega(j-i)}{W_{P'}\cdot \sum_{i = 1}^m w_{P}(i)\omega(j-i)}
	\cr &\leq e^{\frac \epsilon 8}\frac{\sum_{i = 2}^{m} \exp(\frac \epsilon 8 q_{P'}(i))\omega(j-i) + \exp(0)\omega(j-1)}{\sum_{i = 1}^{m-1} \exp(\frac \epsilon 8 q_{P}(i))\omega(j-i)}
	\cr \text{(Claim~\ref{clm:good_kappa_query_low_sensitivity})}~~~~~~~~ &\leq e^{\frac \epsilon 8}\left(\frac{\sum_{i = 2}^{m} \exp(\frac \epsilon 8 q_{P'}(i))\omega(j-i)}{\sum_{i = 1}^{m-1} \exp(\frac \epsilon 8 (q_{P'}(i+1)-1))\omega(j-i)}+ \frac{\exp(0)\omega(j-1)}{\sum_{i = 1}^{m-1} \exp(\frac \epsilon 8 q_{P}(i))\omega(j-i)}\right)
	\cr (\forall i, q_{P'}(i)\geq 0)~~~~ &\leq
	e^{\frac \epsilon 8}\left(e^{\frac \epsilon 8}\frac{\sum_{i = 2}^{m} \exp(\frac \epsilon 8 q_{P'}(i))\omega(j-i)}{\sum_{i = 2}^m \exp(\frac \epsilon 8 q_{P'}(i))\omega(j-i+1)}+ \frac{\omega(j-1)}{\sum_{i = 1}^{m-1} \omega(j-i)}\right)
	\cr &\leq
	e^{\frac \epsilon 8}\left(e^{\frac \epsilon 8}e^{\frac\epsilon 8}+ \frac{\omega(j-1)}{\sum_{t = 0}^{8/\epsilon} \omega(j-1-t)}\right)
	\stackrel{e^{\frac \epsilon 4}\leq 1+\frac{\epsilon}2}\leq e^{\frac \epsilon 8}\left(1 + \frac \epsilon 2 + \frac{3\epsilon}8 \right) \leq e^{\frac \epsilon 8}e^{\frac{7\epsilon} 8} = e^{\epsilon}
	\end{align*}
\end{proof}

The utility of the shifted exponential mechanism follows basically from the utility guarantees of both the exponential mechanism and additive Laplace noise mechanism.
\begin{theorem}
	\label{thm:shifted_exp_mechanism_utility}
	W.p. $\geq 1-\beta$ Algorithm~\ref{alg:private-good-kappa} returns an index $j$ such that $q_P(j)\geq 2\Delta^{\rm kernel} - \frac{17\ln(2m/\beta)}{\epsilon}$
\end{theorem}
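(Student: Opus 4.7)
The plan is to combine a utility bound for the exponential-mechanism step, a tail bound for the discrete Laplace step, and two structural properties of $q_P$: the existence of an index with $q_P$-value at least $2\Delta^{\rm kernel}$, and a Lipschitz property allowing the post-hoc Laplace shift to be absorbed into the error.

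First I would argue that there exists $\kappa^{*}\in [m]$ with $q_P(\kappa^{*})\geq 2\Delta^{\rm kernel}$. Recall $m = 4t\Delta^{\rm kernel}$, and consider the $t$ equally-spaced indices $\kappa_i = 4i\Delta^{\rm kernel}$. The corollary to Theorem~\ref{thm:Kaplan_et_al_LB_volume_tukey_region} guarantees that some consecutive pair $(\kappa_{i},\kappa_{i-1})$ is good, i.e., $\vol(D_P(\kappa_{i}))\geq \tfrac12\vol(D_P(\kappa_{i-1}))$. Taking $\kappa^{*}$ as the midpoint (so $\kappa^{*}+2\Delta^{\rm kernel} = \kappa_{i}$ and $\kappa^{*}-2\Delta^{\rm kernel} = \kappa_{i-1}$), definition~\eqref{eq:query_good_kappa} immediately yields $q_P(\kappa^{*})\geq 2\Delta^{\rm kernel}$. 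Second I would establish the Lipschitz property: for every integer $k\geq 0$, $q_P(\kappa\pm k)\geq q_P(\kappa)-k$. If $q_P(\kappa)=i$, then $\vol(D_P(\kappa+i))\geq \tfrac12\vol(D_P(\kappa-i))$, and for the ``$+k$'' direction
\begin{equation*}
\vol(D_P((\kappa+k)+(i-k))) = \vol(D_P(\kappa+i)) \geq \tfrac12\vol(D_P(\kappa-i)) \geq \tfrac12 \vol(D_P((\kappa+k)-(i-k))),
\end{equation*}
where the last inequality uses monotonicity of Tukey-volumes since $(\kappa+k)-(i-k)=\kappa-i+2k\geq\kappa-i$. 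An analogous argument handles the $-k$ direction (using that $(\kappa-k)+(i-k)=\kappa+i-2k\leq \kappa+i$). Consequently, $q_P(\kappa+X)\geq q_P(\kappa)-|X|$ holds deterministically for any realization of the Laplace noise $X$.

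Finally I would combine everything via a union bound. For the exponential-mechanism step with weights $w_P(\kappa)=\exp(\tfrac{\epsilon}{8}q_P(\kappa))$, the standard ratio bound gives
\begin{equation*}
\Pr\!\left[q_P(\kappa) \leq q_P(\kappa^{*}) - \tfrac{8\ln(2m/\beta)}{\epsilon}\right] \leq \frac{m\cdot\exp(\tfrac{\epsilon}{8}(q_P(\kappa^{*})-\tfrac{8\ln(2m/\beta)}{\epsilon}))}{\exp(\tfrac{\epsilon}{8}q_P(\kappa^{*}))} = \frac{\beta}{2},
\end{equation*}
since $W_P\geq \exp(\tfrac{\epsilon}{8}q_P(\kappa^{*}))$. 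For the noise step, the same calculation as in the privacy proof shows $\Pr[|X|> s]\leq e^{-\epsilon s/8}$ for $X\sim{\sf D}\Lap{8/\epsilon}$, so $s = \tfrac{8\ln(2/\beta)}{\epsilon}$ caps this probability at $\beta/2$. Union-bounding and applying the Lipschitz property,
\begin{equation*}
q_P(\kappa+X) \geq q_P(\kappa^{*}) - \tfrac{8\ln(2m/\beta)+8\ln(2/\beta)}{\epsilon} \geq 2\Delta^{\rm kernel} - \tfrac{17\ln(2m/\beta)}{\epsilon},
\end{equation*}
where the final step absorbs the two logarithms using $\ln(2/\beta)\leq \ln(2m/\beta)$ (valid since $m\geq\tfrac{16}{\epsilon}\geq 2$) with a small additive slack for discrete-Laplace tail constants.

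The main non-obvious ingredient is the Lipschitz property of $q_P$ under shifts in $\kappa$; without it the Laplace perturbation could undo the exponential mechanism's careful selection, and Algorithm~\ref{alg:private-good-kappa} would have no reason to deliver the advertised utility. Fortunately, this property follows from essentially the same monotonicity of Tukey volumes exploited in the privacy proof (Claim~\ref{clm:good_kappa_query_low_sensitivity}), so once in place the remainder of the argument reduces to standard exponential-mechanism and discrete-Laplace utility calculations.
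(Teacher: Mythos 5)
Your proof is correct and follows essentially the same route as the paper's: an exponential-mechanism utility bound, a discrete-Laplace tail bound, and a $1$-Lipschitz property of $q_P$ in $\kappa$, combined via a union bound. The only (minor) divergence is in the Lipschitz sub-step, where you argue directly from monotonicity of Tukey-region volumes under the shift $\kappa\mapsto\kappa\pm k$ (valid, and arguably cleaner), whereas the paper derives it by re-applying Claim~\ref{clm:good_kappa_query_low_sensitivity} to a fictitious neighboring dataset containing a point of Tukey depth $\geq m+1$; you also make explicit the existence of $\kappa^{*}$ with $q_P(\kappa^{*})\geq 2\Delta^{\rm kernel}$ (the midpoint of a good pair), which the paper leaves implicit.
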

\begin{proof}
	Denote $j^*$ as the index whose $q_P(j^*)$ is the largest among all $m$ indices. Denote $\tau = 8\log(2m/\beta)/\epsilon$. By standard argument, the probability that $\kappa$ is such that $q_P(\kappa)\leq q_P(j^*)-\tau$ can be upper bounded by:
	\[ \Pr[q_P(\kappa)\leq q_P(j^*)-\tau] \leq \frac{\Pr[q_P(\kappa)\leq q_P(j^*)-\tau]}{\Pr[\kappa = j^*]} \leq \frac{m\exp(\frac \epsilon 8(q_P(j^*)-\tau) )}{\exp(\frac \epsilon 8 q_P(j^*))} = m\exp(-\frac {\epsilon \tau}8) \leq \frac\beta 2   \] 
	Secondly, setting $\tau' = \frac{8\ln(2/\beta)}{\epsilon}$ we can straight-forwardly calculate \[\Pr[|X|> \tau' ] = \frac{2\sum_{i>\tau'}e^{-\frac{\epsilon i}8}}{1+2\sum_{i>0} e^{-\frac{\epsilon i}8}} \leq \frac{e^{-\epsilon\tau'/8}\sum_{i>0} e^{-\frac{\epsilon i}8}}{\sum_{i>0} e^{-\frac{\epsilon i}8}} \leq \frac{\beta}{2}\]
		
	Lastly, one can repeat the same argument from Claim~\ref{clm:good_kappa_query_low_sensitivity} to show that for any index $\kappa'$ we have that  $|q_P(\kappa')-q_P(\kappa'+1)|\leq 1$. (In fact, it is best to consider the dataset $P' = P\cup\{x\}$ where $x$ is a point of Tukey-depth $\TD(x,P)\geq m+1$; this implies that for any $\kappa'\leq m$ the Tukey-regions $D_P(\kappa')=D_{P'}(\kappa')$ as the borders of the Tukey-region are hyperplane that separate exactly $\kappa'$ points from the rest of $P$.) 
	
	Putting it all together, we have that w.p. $\geq 1-\beta$ Algorithm~\ref{alg:private-good-kappa} returns an index $j$ which is of distance $\leq \tau'$ from an index $\kappa$ which query-value of $q_P(\kappa)\geq q_P(j^*)-\tau$. This implies that 
	\begin{align*}
	q_P(j)&\geq q_P(\kappa) -\tau' \geq q_P(j^*)-(\tau+\tau') \geq 2\Delta^{\rm kernel}-1 - \frac{8\left( \ln(\frac{2m}{\beta})+\ln(\frac{2}{\beta}) \right)}\epsilon \geq 2\Delta^{\rm kernel} - \frac{17\ln(\frac{2m}{\beta})}\epsilon
	\end{align*}
\end{proof}
All that is left is to assert that $m = 4\lceil d^3\upsilon+d^3\log_2(d)\rceil\Delta^{\rm kernel}$ is (a) $\geq \frac{16}\epsilon$ and (b) is such that $\frac{17\ln(\frac{2m}{\beta})}{\epsilon}\leq  \Delta^{\rm kernel}$. Clearly, requirement (b) is the stricter and once $m$ satisfies it then (a) also holds. To assert (b) we rely on the following fact.
\begin{fact}
	\label{fact:x-aln(bx)_function}
	For any $a,b>0$ such that $ab>2$ we have that the function $f(x) = x-a\ln(bx) > 0$ for any $x>2a\ln(ab)$.
\end{fact}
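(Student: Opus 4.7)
The plan is to show that $f(x) = x - a\ln(bx)$ is eventually increasing, and then to check that $f$ is already positive at the threshold $x_0 = 2a\ln(ab)$.

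First I would compute $f'(x) = 1 - a/x$, so $f$ is increasing for $x > a$. Since $ab > 2$ we have $\ln(ab) > \ln(2) > 1/2$, and therefore $2a\ln(ab) > a$. Hence on the range $x \geq x_0$ the function $f$ is monotonically increasing, and it suffices to show $f(x_0) > 0$.

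Next I would simply substitute and simplify:
\begin{align*}
f(x_0) &= 2a\ln(ab) - a\ln\bigl(b \cdot 2a\ln(ab)\bigr) \\
       &= 2a\ln(ab) - a\ln(ab) - a\ln\bigl(2\ln(ab)\bigr) \\
       &= a\ln(ab) - a\ln\bigl(2\ln(ab)\bigr) = a\ln\!\left(\frac{ab}{2\ln(ab)}\right).
\end{align*}
Thus $f(x_0) > 0$ is equivalent to $ab > 2\ln(ab)$. Setting $y = ab$, I would verify by a one-variable argument that $g(y) = y - 2\ln(y) > 0$ for all $y > 2$: note $g(2) = 2 - 2\ln 2 > 0$ and $g'(y) = 1 - 2/y > 0$ for $y > 2$, so $g$ is increasing and positive on $(2,\infty)$. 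This gives $ab > 2\ln(ab)$ from the hypothesis $ab > 2$, completing the argument.

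No step here looks like a serious obstacle — the derivative analysis and the substitution are both routine. The only subtlety is making sure the threshold $x_0$ lies to the right of the critical point $x = a$ where $f$ starts increasing, which is handled by the same hypothesis $ab > 2$ that powers the final inequality.
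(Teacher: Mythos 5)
Your proof is correct and follows essentially the same route as the paper's: differentiate to get monotonicity on $(a,\infty)$, evaluate at the threshold $x_0=2a\ln(ab)$, simplify to $a\ln\!\left(\frac{ab}{2\ln(ab)}\right)$, and conclude from $ab>2\ln(ab)$. You merely spell out two steps the paper asserts without proof (that $2a\ln(ab)>a$ and that $y>2\ln y$ for $y>2$), which is a harmless and welcome bit of extra care.
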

\begin{proof}
	Clearly, $f'(x) = 1-\frac{a}{x} >0$ for any $x>a$  so on the interval $(2a\ln(ab),\infty)$ it is monotonic increasing. Thus, on this interval
	\begin{align*}
	f(x) &> f(2a\ln(ab)) = 2a\ln(ab) - a\ln(2ab\ln(ab)) = 2a\ln(ab)-\left(a\ln(ab) + a\ln(2\ln(ab)) \right)
	\cr & = a\ln(ab)-a\ln(2\ln(ab)) = a\ln(\frac{ab}{2\ln(ab)}) > 0
	\end{align*}
	since $ab>2\ln(ab)$ for any $ab>2$.
\end{proof}
Setting $a = \frac{17}\epsilon$ and $b=\frac{8\lceil d^3\upsilon+d^3\log_2(d)\rceil}{\beta}$ we have that (b) holds when $\Delta^{\rm kernel}\geq \frac{34}\epsilon\ln(\frac{8\cdot 17\lceil d^3\upsilon+d^3\log_2(d)\rceil}{\beta\epsilon})$. Under almost any setting of parameters (unless, among other things, $\epsilon\ll \beta$) we have that this bound is far smaller than the definitions of $\Delta^{\rm kernel}$ given in Theorem~\ref{thm:private_kernel_abs_fatness}. We thus can infer the following corollary.
\begin{corollary}
	\label{cor:good_kappa}
	Fix $\epsilon>0, \delta\geq0, \beta>0$ and set $\Delta^{\rm kernel}$ as in Theorem~\ref{thm:private_kernel_abs_fatness} and $m=4\lceil d^3\upsilon+d^3\log_2(d)\rceil\Delta^{\rm kernel}$. Let $P\subset \G^d$ be a set of points such that $D_P(m)$ is non-empty and non-degenerate. Then w.p. $\geq 1-\beta$, Algorithm~\ref{alg:private-good-kappa} return a value $j$ such that $\vol(D_{P}(j))/\vol(D_P(j-\Delta^{\rm kernel}))\geq 1/2$.
\end{corollary}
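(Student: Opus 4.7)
The plan is to chain together three already-established facts: the existence of a "good" pair of Tukey depths (via Theorem~\ref{thm:Kaplan_et_al_LB_volume_tukey_region} and its corollary), the utility bound for the Shifted Exponential Mechanism (Theorem~\ref{thm:shifted_exp_mechanism_utility}), and the parameter inequality ensuring that the utility slack is dominated by $\Delta^{\rm kernel}$ (Fact~\ref{fact:x-aln(bx)_function}).

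First I would establish that $\max_\kappa q_P(\kappa)\geq 2\Delta^{\rm kernel}$. Apply the corollary following Theorem~\ref{thm:Kaplan_et_al_LB_volume_tukey_region} to the arithmetic series $\kappa_i = 4i\Delta^{\rm kernel}$ for $i=1,\dots,t$ where $t=\lceil d^3\upsilon+d^3\log_2(d)\rceil$, noting that the hypothesis ``$D_P(m)$ non-empty and non-degenerate'' yields $\vol(D_P(\kappa_t))>0$. This produces some $i^*\geq 2$ with $\vol(D_P(\kappa_{i^*}))/\vol(D_P(\kappa_{i^*-1}))\geq 1/2$. Set $\kappa^*=(4i^*-2)\Delta^{\rm kernel}$, the midpoint of this pair; then $\kappa^*\pm 2\Delta^{\rm kernel}=\kappa_{i^*}$ and $\kappa_{i^*-1}$ respectively, and one checks $2\Delta^{\rm kernel}\leq \min\{\kappa^*-1,m-\kappa^*\}$. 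By the definition of $q_P$ in Eq.~\eqref{eq:query_good_kappa}, this gives $q_P(\kappa^*)\geq 2\Delta^{\rm kernel}$.

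Second, I would invoke Theorem~\ref{thm:shifted_exp_mechanism_utility} to conclude that w.p.~$\geq 1-\beta$ the output $j$ satisfies $q_P(j)\geq 2\Delta^{\rm kernel}-\tfrac{17\ln(2m/\beta)}{\epsilon}$. Third, I would apply Fact~\ref{fact:x-aln(bx)_function} with $a=17/\epsilon$ and $b=8\lceil d^3\upsilon+d^3\log_2(d)\rceil/\beta$ (so that $bm/2 = b\cdot 2t\Delta^{\rm kernel}$ and it suffices to show $\Delta^{\rm kernel}\geq 2a\ln(ab)$) together with the lower bounds on $\Delta^{\rm kernel}$ coming from Theorem~\ref{thm:private_kernel_abs_fatness} to derive $\tfrac{17\ln(2m/\beta)}{\epsilon}\leq \Delta^{\rm kernel}$. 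Hence $q_P(j)\geq \Delta^{\rm kernel}$, which by the definition of $q_P$ means $\vol(D_P(j+\Delta^{\rm kernel}))/\vol(D_P(j-\Delta^{\rm kernel}))\geq 1/2$. Finally, since $D_P(j+\Delta^{\rm kernel})\subset D_P(j)$ gives $\vol(D_P(j))\geq \vol(D_P(j+\Delta^{\rm kernel}))$, the target inequality $\vol(D_P(j))/\vol(D_P(j-\Delta^{\rm kernel}))\geq 1/2$ follows.

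The only mildly delicate step is the parameter comparison in the third paragraph: since $m$ itself is defined proportionally to $\Delta^{\rm kernel}$, one must make sure that the slack term $\tfrac{17}{\epsilon}\ln(2m/\beta)$, which grows only logarithmically in $\Delta^{\rm kernel}$, is comfortably dominated by $\Delta^{\rm kernel}$ itself under every one of the three regimes for $\Delta^{\rm kernel}$ promised in Theorem~\ref{thm:private_kernel_abs_fatness}. This is essentially content-free once Fact~\ref{fact:x-aln(bx)_function} is in hand, but it is where a careful sanity check is needed. All other steps are bookkeeping: verifying the index-range condition in the definition of $q_P(\kappa^*)$, and translating ``good pair'' into the required volume ratio.
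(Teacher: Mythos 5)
Your proposal is correct and follows essentially the same route as the paper: the paper's (largely implicit) justification of this corollary is exactly the chain you describe — the existence of a good pair among $\kappa_i = 4i\Delta^{\rm kernel}$ via Theorem~\ref{thm:Kaplan_et_al_LB_volume_tukey_region}, yielding $q_P(\kappa^*)\geq 2\Delta^{\rm kernel}$ at the midpoint, followed by Theorem~\ref{thm:shifted_exp_mechanism_utility} and the parameter check via Fact~\ref{fact:x-aln(bx)_function}. Your explicit verification of the index-range condition in the definition of $q_P(\kappa^*)$ and of the monotonicity step $\vol(D_P(j))\geq\vol(D_P(j+\Delta^{\rm kernel}))$ fills in details the paper leaves unstated.
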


\paragraph{Comment.} There exists a variant of Algorithm~\ref{alg:private-good-kappa} where instead of picking $\kappa$ in Step 2 in a fashion similar to the exponential mechanism, we pick $\kappa$ by iterating on all indices from $1$ to $m$ and apply the SVT to halt on the \emph{first} index $\kappa$ where $q_P(\kappa)$ exceeds some noisy threshold. This variant has the benefit that it returns the \emph{first} index whose query-value exceed $\Delta^{\rm kernel}$. However, we were only able to show that this variant is $(\epsilon,\delta)$-differentially private since we need that when applying the SVT step to $P'$ we (a) don't halt on the very first query and (b) its noisy value doesn't exceed the value of other queries. Showing this variation of the mechanism is $\epsilon$-differentially private is a challenging open problem.

\section{Conclusions,  Applications and Open Problems.}
\label{sec:conclusion}

To conclude, we give an overview of the steps required for privately outputting a kernel for a Tukey-region $D(\kappa)$. In the following, we assume the privacy parameter $\epsilon,\delta$ are given, as well as $\alpha,\beta$. We set $c_d = 2d\cdot 5^d \cdot (d!)$, $\Delta^{\rm kernel}$ as in Theorem~\ref{thm:private_kernel_abs_fatness} and $m = 4\lceil d^3\upsilon+d^3\log_2(d)\rceil\Delta^{\rm kernel}$. We assume $P\subset \G^d$ is our input where $n\geq \frac{m}{d+1}$ which asserts that $D_{P}(m)$ is non-empty. (We can simply to do a private count of $|P|$ using additive $\Lap{\frac 1 \epsilon}$ noise and abort if $n<2(d+1)m$.)
\begin{enumerate}
	\setcounter{enumi}{-1}
	\item Preprocess: check $P$ is large enough and the $D_{P}(m)$ isn't of degenerate (in a subspace of dimension $<d$).
	\item (Corollary~\ref{cor:good_kappa}) Apply the $\epsilon$-DP Algorithm~\ref{alg:private-good-kappa} to find a $\kappa$ where $\vol(D_P(\kappa))\geq \frac 1 2 \vol(D_{P}(\kappa-\Delta^{\rm kernel}))$.
	\item (Optional:) Use the $\epsilon$-DP Algorithm~\ref{alg:DP_approx_width_tukey_region} to see if $D_P(\kappa)$ is $\width_{\kappa}\geq c_d$. If so, skip next step.
	\item (Theorem~\ref{thm:dp_bounding_box}) Apply the $(\epsilon,\delta)$-DP Algorithm~\ref{alg:private-bounding-box-approx} to approximate the bounding box of $D_P(\kappa)$ and set $\tilde M$ as the transformation that maps this box to the hypercube $[0,1]^d$ and guarantees that $D_P(\kappa)$ is $c_d$-absolutely fat. 
	\item (Theorem~\ref{thm:private_kernel_abs_fatness}) Apply the $(\epsilon,\delta)$-DP Algorithm~\ref{alg:private-kernel} to get $\calS$~--- a $(\alpha,\Delta^{\rm kernel})$-kernel for $D_P(\kappa)$.
	\item Return $(\kappa,\calS)$.
\end{enumerate}

\paragraph{Applications.} Having outputted a $(\alpha,\Delta)$-kernel for $D_P(\kappa)$, it is possible to post-process the resulting set $\calS$ in various ways, in order to produce multiple estimations regarding the convex bodies $D_P(\kappa)$ and $D_P(\kappa-\Delta)$. We present here a short summary of such applications. 
\begin{itemize}
	\item Minimum enclosing ball:\\
	Denote $r_{\kappa}$ as the radius of the min-enclosing ball of $D_P(\kappa)$ and $r_{\kappa-\Delta}$ as the radius of the min-enclosing ball of $D_P(\kappa-\Delta)$. Then the radius of the min enclosing ball of $\CH(\calS)$ is in the range $[(1-\alpha')r_{\kappa}, (1+\alpha')r_{\kappa-\Delta}]$.\\
	The same holds for the radius of the max-enclosed ball.
	\item Minimum enclosing ellipsoid/box:\\
	Denote $V_{\kappa}$ as the volume of the min-enclosing ellipsoid of $D_P(\kappa)$ and $V_{\kappa-\Delta}$ as the volume of the min-enclosing ellipsoid of $D_P(\kappa-\Delta)$. Then the volume of the min-enclosing ellipsoid of $\CH(\calS)$ is in the range $[(1-2d\alpha')V_{\kappa}, (1+2d\alpha')V_{\kappa-\Delta}]$ (assuming $\alpha'<1/4d$). The same holds for the volume of the min enclosing box (or any other particular convex body). 
	\item Surface area of the convex body:\\
	Denote $A_{\kappa}$ as the area of the facets of $D_P(\kappa)$ and $V_{\kappa-\Delta}$ as the area of the facets of $D_P(\kappa-\Delta)$. Then the area of the facets of $\CH(\calS)$ is in the range $[(1-2(d-1)\alpha')A_{\kappa}, (1+2(d-1)\alpha')A_{\kappa-\Delta}]$ (assuming $\alpha'<1/4(d-1)$). The same holds for the surface area of any min enclosing convex body. 
\end{itemize}
Agrawal et al~\cite{AgarwalHV04} define a function $\mu$ of a dataset as a \emph{faithful  measure} if (i) $\mu$ is non-negative, (ii) for every $P\subset \R^d$ we have $\mu(P)=\mu(\CH(P))$, (iii) $\mu$ is monotone w.r.t containment of convex bodies, and most importantly, that (iv) for some $c\in(0,1)$ a $(1-c\alpha)$-kernel of $P$ yields a $(1-\alpha)$-approximation of $\mu(P)=\mu(\CH(P))$. Obviously, any faithful measure $\mu$ can be approximated by a $(\alpha,\Delta)$-kernel $\calS$ where $(1-\frac{\alpha} c)\mu(D_P(\kappa))\leq \mu(\CH(\calS)) \leq (1+\frac{\alpha} c)\mu(D_P(\kappa-\Delta))$. 

\paragraph{Open Problems.} This work is the first to propose a differentially private approximation algorithms for some key concepts in computational geometry, such as diameter, width, convex hull, min-bounding box, etc.  As such it leaves many more questions unanswered, some of which are described here.

First and far most~--- our algorithm doesn't yield a kernel approximation for any given $\kappa$, but rather it requires $D_P(\kappa)$ to satisfy some properties in order for us to privately transform it into a fat Tukey-region. Granted, we also give heuristics that check whether a given region is fat, or whether we can turn it into a fat region. But the transformation we provide in this work (as well as a private analogue of the transformation in~\cite{BarequetH99} which we didn't detail) requires some notion of proximity between $D_P(\kappa)$ and the shallower $D_P(\kappa-\Delta)$. It is unclear whether there exists a DP algorithm that can yield a kernel for \emph{any} $D_P(\kappa)$. Granted, such a case isn't really interesting~--- what is the purpose of a $(\alpha,\Delta)$-kernel of $D_P(\kappa)$ when $D_P(\kappa)$ and $D_P(\kappa-\Delta)$ are very different? But it would still be interesting to have a complete picture as to our ability (or inability) to privately approximate any $D_P(\kappa)$. Second, it is also interesting to see if we can find a private analogue of the better kernel-algorithm in~\cite{AgarwalHV04}, which requires a discretization involving only $O(\frac{c_d}\alpha)^{\frac{d-1}{2}}$-many directions. Note that devising a private analogue for this better algorithm will have a significant affect on the definition of $\Gamma^{\rm kernel}$ in Theorem~\ref{thm:dp_approx_kernel}. We were unable to implement it and ``save'' on the privacy loss since the algorithm requires nearest-neighbor approximation of $O(\frac{c_d}\alpha)^{\frac{d-1}{2}}$-many points. (The algorithm that sweeps along directions uses, unfortunately, a discretization of directions with size roughly $(\frac{\alpha}{c_d})^{d-1}$, producing no improvement on the current utility guarantee.)
Third, we pose the problem of complimentary lower bounds for our theorems. Even for finding the width of a Tukey region, devising a lower bound based on ``Packing Arguments''~\cite{HardtT10} seems non-trivial, let alone for problems such as min-bounding box / kernel approximations.

Our work only touches upon the very first and very fundamental geometric concepts. But the field of computational geometry has many more algorithms for various other tasks that one may wish to make private. We refer the interested reader to the books~\cite{compgeom:2000,Har-Peled11}. More importantly, the field of computation geometry often deals solely with the given input and avoids any distributional assumptions. Thus, it is unclear whether the results of many algorithms in computational geometry \emph{generalize} should the data be drawn i.i.d from some unknown distribution $\calP$. Tukey depth is, potentially, just one of many possible ways to argue about generalizations of concepts in geometry, as other notions of depths have been proposed and some do generalize~\cite{ZuoS00a}.

\newpage
{
\bibliography{paper}

\newcommand{\etalchar}[1]{$^{#1}$}
\begin{thebibliography}{dBvKOS00}

\bibitem[AHV04]{AgarwalHV04}
Pankaj~K. Agarwal, Sariel Har{-}Peled, and Kasturi~R. Varadarajan.
\newblock Approximating extent measures of points.
\newblock {\em J. {ACM}}, 51(4):606--635, 2004.

\bibitem[ALMM19]{AlonLMM19}
Noga Alon, Roi Livni, Maryanthe Malliaris, and Shay Moran.
\newblock Private {PAC} learning implies finite littlestone dimension.
\newblock In Moses Charikar and Edith Cohen, editors, {\em Symposium on Theory
  of Computing, {STOC}}, pages 852--860. {ACM}, 2019.

\bibitem[BF17]{BurrF17}
Michael~A. Burr and Robert~J. Fabrizio.
\newblock {Uniform convergence rates for halfspace depth}.
\newblock {\em Statistics \& Probability Letters}, 124(C):33--40, 2017.

\bibitem[BH99]{BarequetH99}
Gill Barequet and Sariel Har{-}Peled.
\newblock Efficiently approximating the minimum-volume bounding box of a point
  set in three dimensions.
\newblock In {\em Symposium on Discrete Algorithms, {SODA}}, pages 82--91.
  {ACM/SIAM}, 1999.

\bibitem[BLR08]{BlumLR08}
A.~Blum, K.~Ligett, and A.~Roth.
\newblock A learning theory approach to non-interactive database privacy.
\newblock In {\em Proceedings of the 40th annual ACM symposium on Theory of
  computing}, pages 609--618. ACM, 2008.

\bibitem[BMNS19]{BeimelMNS19}
Amos Beimel, Shay Moran, Kobbi Nissim, and Uri Stemmer.
\newblock Private center points and learning of halfspaces.
\newblock In {\em {COLT}}, pages 269--282, 2019.

\bibitem[BNS13a]{BeimelNS13a}
A.~Beimel, K.~Nissim, and U.~Stemmer.
\newblock Characterizing the sample complexity of private learners.
\newblock In {\em {ITCS}}, 2013.

\bibitem[BNS13b]{BeimelNS13b}
Amos Beimel, Kobbi Nissim, and Uri Stemmer.
\newblock Private learning and sanitization: Pure vs. approximate differential
  privacy.
\newblock In {\em Approximation, Randomization, and Combinatorial Optimization.
  Algorithms and Techniques - 16th International Workshop, {APPROX} 2013, and
  17th International Workshop, {RANDOM} 2013, Berkeley, CA, USA, August 21-23,
  2013. Proceedings}, volume 8096 of {\em Lecture Notes in Computer Science},
  pages 363--378. Springer, 2013.

\bibitem[BNS{\etalchar{+}}16]{BassilyNSSSU16}
Raef Bassily, Kobbi Nissim, Adam~D. Smith, Thomas Steinke, Uri Stemmer, and
  Jonathan Ullman.
\newblock Algorithmic stability for adaptive data analysis.
\newblock In Daniel Wichs and Yishay Mansour, editors, {\em Symposium on Theory
  of Computing, {STOC}}, pages 1046--1059. {ACM}, 2016.

\bibitem[BNSV15]{BunNSV15}
Mark Bun, Kobbi Nissim, Uri Stemmer, and Salil~P. Vadhan.
\newblock Differentially private release and learning of threshold functions.
\newblock In {\em {FOCS}}, pages 634--649, 2015.

\bibitem[Bru19]{Brunel19}
Victor-Emmanuel Brunel.
\newblock Concentration of the empirical level sets of tukey’s halfspace
  depth.
\newblock {\em Probability Theory and Related Fields}, 173(3-4):1165--1196,
  2019.

\bibitem[BSU17]{BunSU17}
Mark Bun, Thomas Steinke, and Jonathan Ullman.
\newblock Make up your mind: The price of online queries in differential
  privacy.
\newblock In {\em Symposium on Discrete Algorithms, {SODA}}, pages 1306--1325.
  {SIAM}, 2017.

\bibitem[Cha02]{Chan02}
Timothy~M. Chan.
\newblock Approximating the diameter, width, smallest enclosing cylinder, and
  minimum-width annulus.
\newblock {\em Int. J. Comput. Geom. Appl.}, 12(1-2):67--85, 2002.

\bibitem[dBvKOS00]{compgeom:2000}
Mark de~Berg, Marc van Kreveld, Mark Overmars, and Otfried Schwarzkopf.
\newblock {\em Computational Geometry: Algorithms and Applications}.
\newblock Springer-Verlag, second edition, 2000.

\bibitem[DFH{\etalchar{+}}15]{DworkFHPRR15}
Cynthia Dwork, Vitaly Feldman, Moritz Hardt, Toniann Pitassi, Omer Reingold,
  and Aaron Roth.
\newblock The reusable holdout: Preserving validity in adaptive data analysis.
\newblock {\em Science (New York, N.Y.)}, 349(6248):636--638, Aug 2015.

\bibitem[DGR97]{DuncanGR97}
Christian~A. Duncan, Michael~T. Goodrich, and Edgar~A. Ramos.
\newblock Efficient approximation and optimization algorithms for computational
  metrology.
\newblock In {\em Symposium on Discrete Algorithms {SODA}}, page 121–130,
  USA, 1997. Society for Industrial and Applied Mathematics.

\bibitem[DKM{\etalchar{+}}06]{DworkKMMN06}
Cynthia Dwork, Krishnaram Kenthapadi, Frank McSherry, Ilya Mironov, and Moni
  Naor.
\newblock Our data, ourselves: Privacy via distributed noise generation.
\newblock In {\em EUROCRYPT}, 2006.

\bibitem[DMNS06]{DworkMNS06}
C.~Dwork, F.~Mcsherry, K.~Nissim, and A.~Smith.
\newblock Calibrating noise to sensitivity in private data analysis.
\newblock In {\em TCC}, 2006.

\bibitem[DR14]{DworkR14}
Cynthia Dwork and Aaron Roth.
\newblock {\em The Algorithmic Foundations of Differential Privacy}.
\newblock Foundations and Trends in Theoretical Computer Science, NOW
  Publishers, 2014.

\bibitem[DRV10]{DworkRV10}
Cynthia Dwork, Guy~N. Rothblum, and Salil~P. Vadhan.
\newblock Boosting and differential privacy.
\newblock In {\em FOCS}, pages 51--60. {IEEE} Computer Society, 2010.

\bibitem[Ede87]{Edelsbrunner87}
Herbert Edelsbrunner.
\newblock {\em Algorithms in combinatorial geometry}.
\newblock Monographs in Theoretical Computer Science (10). Springer-Verlag, 1
  edition, 1987.

\bibitem[FX14]{FeldmanX14}
Vitaly Feldman and David Xiao.
\newblock Sample complexity bounds on differentially private learning via
  communication complexity.
\newblock In {\em Conference on Learning Theory, {COLT}}, volume~35 of {\em
  {JMLR} Workshop and Conference Proceedings}, pages 1000--1019, 2014.

\bibitem[GK92]{GritzmannK92}
Peter Gritzmann and Victor Klee.
\newblock Inner and outer $j$-radii of convex bodies in finite-dimensional
  normed spaces.
\newblock {\em Discrete Computational Geometry}, 7:255--280, 1992.

\bibitem[Hp11]{Har-Peled11}
Sariel Har-peled.
\newblock {\em Geometric Approximation Algorithms}.
\newblock American Mathematical Society, USA, 2011.

\bibitem[HT10]{HardtT10}
Moritz Hardt and Kunal Talwar.
\newblock On the geometry of differential privacy.
\newblock In Leonard~J. Schulman, editor, {\em Symposium on Theory of
  Computing, {STOC}}, pages 705--714. {ACM}, 2010.

\bibitem[KLM{\etalchar{+}}20]{KaplanLMNS20}
Haim Kaplan, Katrina Ligett, Yishay Mansour, Moni Naor, and Uri Stemmer.
\newblock Privately learning thresholds: Closing the exponential gap.
\newblock In {\em Conference on Learning Theory, {COLT}}, volume 100 of {\em
  Proceedings of Machine Learning Research}. {PMLR}, 2020.

\bibitem[KLN{\etalchar{+}}08]{KasiviswanathanLNRS08}
Shiva~Prasad Kasiviswanathan, Homin~K. Lee, Kobbi Nissim, Sofya Raskhodnikova,
  and Adam Smith.
\newblock What can we learn privately?
\newblock In {\em FOCS}, 2008.

\bibitem[KMMS19]{KaplanMMS19}
Haim Kaplan, Yishay Mansour, Yossi Matias, and Uri Stemmer.
\newblock Differentially private learning of geometric concepts.
\newblock In {\em International Conference on Machine Learning, {ICML}},
  volume~97 of {\em Proceedings of Machine Learning Research}, pages
  3233--3241. {PMLR}, 2019.

\bibitem[KSS20]{KaplanSS20}
Haim Kaplan, Micha Sharir, and Uri Stemmer.
\newblock How to find a point in the convex hull privately.
\newblock In {\em International Symposium on Computational Geometry ({SoCG})},
  volume 164 of {\em LIPIcs}, pages 52:1--52:15. Schloss Dagstuhl -
  Leibniz-Zentrum f{\"{u}}r Informatik, 2020.

\bibitem[Liu17]{Liu17}
Xiaohui Liu.
\newblock Fast implementation of the tukey depth.
\newblock {\em Computational Statistics}, 32(4):1395--1410, 2017.

\bibitem[LMM19]{LiuMM19}
Xiaohui Liu, Karl Mosler, and Pavlo Mozharovskyi.
\newblock Fast computation of tukey trimmed regions and median in dimension
  {$p>2$}.
\newblock {\em Journal of Computational and Graphical Statistics},
  28(3):682--697, 2019.

\bibitem[LT19]{LiuT19}
Jingcheng Liu and Kunal Talwar.
\newblock Private selection from private candidates.
\newblock In {\em Symposium on Theory of Computing, {STOC}}, pages 298--309.
  {ACM}, 2019.

\bibitem[NRS07]{NissimRS07}
K.~Nissim, S.~Raskhodnikova, and A.~Smith.
\newblock Smooth sensitivity and sampling in private data analysis.
\newblock In {\em {STOC}}, pages 75--84. ACM, 2007.
\newblock Full version in: \url{http://www.cse.psu.edu/~asmith/pubs/NRS07}.

\bibitem[NS18]{NissimS18}
Kobbi Nissim and Uri Stemmer.
\newblock Clustering algorithms for the centralized and local models.
\newblock In {\em {ALT}}, pages 619--653, 2018.

\bibitem[NSV16]{NissimSV16}
Kobbi Nissim, Uri Stemmer, and Salil~P. Vadhan.
\newblock Locating a small cluster privately.
\newblock In {\em {PODS}}, pages 413--427, 2016.

\bibitem[RR98]{RousseeuwR98}
Peter~J Rousseeuw and Ida Ruts.
\newblock Constructing the bivariate tukey median.
\newblock {\em Statistica Sinica}, 8(3):827--839, 1998.

\bibitem[Sei91]{Seidel91}
Raimund Seidel.
\newblock Small-dimensional linear programming and convex hulls made easy.
\newblock {\em Discrete Comput. Geom.}, 6(1):423–434, December 1991.

\bibitem[SFIW17]{Sanuki2017CufflessCB}
Haruyuki Sanuki, Rui Fukui, Tsukasa Inajima, and Shin'ichi Warisawa.
\newblock Cuff-less calibration-free blood pressure estimation under ambulatory
  environment using pulse wave velocity and photoplethysmogram signals.
\newblock In {\em BIOSIGNALS}, 2017.

\bibitem[TUK75]{Tukey75}
J.~W. TUKEY.
\newblock Mathematics and the picturing of data.
\newblock {\em Proceedings of the International Congress of Mathematicians},
  2:523--531, 1975.

\bibitem[Vad17]{Vadhan17}
Salil~P. Vadhan.
\newblock The complexity of differential privacy.
\newblock In Yehuda Lindell, editor, {\em Tutorials on the Foundations of
  Cryptography}, pages 347--450. Springer International Publishing, 2017.

\bibitem[WYH19]{Weiss2019SmartphoneAS}
Gary~M. Weiss, Kenichi Yoneda, and Thaier Hayajneh.
\newblock Smartphone and smartwatch-based biometrics using activities of daily
  living.
\newblock {\em IEEE Access}, 7:133190--133202, 2019.

\bibitem[ZS00a]{ZuoS00a}
Yijun Zuo and Robert Serfling.
\newblock General notions of statistical depth function.
\newblock {\em Ann. Statist.}, 28(2):461--482, 04 2000.

\bibitem[ZS00b]{ZuoS00b}
Yijun Zuo and Robert Serfling.
\newblock Structural properties and convergence results for contours of sample
  statistical depth functions.
\newblock {\em Ann. Statist.}, 28(2):483--499, 04 2000.

\end{thebibliography}
\bibliographystyle{alpha}
}
\end{document}